\numberwithin{equation}{section}
\newcommand{\fro}{\mathrm{F}}
\newcommand{\op}{\mathrm{op}}
 \theoremstyle{plain}
\Crefname{asm}{Assumption}{Assumptions}
\theoremstyle{plain}
\theoremstyle{definition}
\def\E{\mathbb{E}}
\def\P{\mathbb{P}}
\def\R{\mathbb{R}}
\def\1{\bm{1}}
\newcommand{\mb}[1]{\mathbf{#1}}
\newcommand{\mc}[1]{\mathcal{#1}}
\theoremstyle{plain}
\newtheorem{theorem}{Theorem}
\newtheorem{lemma}{Lemma}[section]
\theoremstyle{definition}
\newcommand{\Exp}{\mathbb{E}}
\newcommand{\Cov}{\mathrm{Cov}}
\newcommand{\tr}{\mathrm{tr}}
\DeclareMathOperator*{\argmin}{arg\,min}
\renewcommand{\P}{\mathbf{P}}
\newcommand{\simplex}{\triangle}
\def\ddefloop#1{\ifx\ddefloop#1\else\ddef{#1}\expandafter\ddefloop\fi}
\def\ddef#1{\expandafter\def\csname bb#1\endcsname{\ensuremath{\mathbb{#1}}}}
\def\ddefloop#1{\ifx\ddefloop#1\else\ddef{#1}\expandafter\ddefloop\fi}
\def\ddef#1{\expandafter\def\csname fr#1\endcsname{\ensuremath{\mathfrak{#1}}}}
\def\ddefloop#1{\ifx\ddefloop#1\else\ddef{#1}\expandafter\ddefloop\fi}
\def\ddef#1{\expandafter\def\csname scr#1\endcsname{\ensuremath{\mathscr{#1}}}}
\def\ddefloop#1{\ifx\ddefloop#1\else\ddef{#1}\expandafter\ddefloop\fi}
\def\ddef#1{\expandafter\def\csname b#1\endcsname{\ensuremath{\mathbf{#1}}}}
\def\ddef#1{\expandafter\def\csname c#1\endcsname{\ensuremath{\mathcal{#1}}}}
\def\ddef#1{\expandafter\def\csname h#1\endcsname{\ensuremath{\widehat{#1}}}}
\def\ddef#1{\expandafter\def\csname t#1\endcsname{\ensuremath{\widetilde{#1}}}}
\def\ddefloop#1{\ifx\ddefloop#1\else\ddef{#1}\expandafter\ddefloop\fi}
\def\ddef#1{\expandafter\def\csname mat#1\endcsname{\ensuremath{\mathbf{#1}}}}
\newcommand{\loose}{\looseness=-1}
\title{Active learning of neural population dynamics \\ using two-photon holographic optogenetics}
\author{%
  Andrew Wagenmaker\thanks{Equally contributing first authors.} \\
  University of California, Berkeley \And
  Lu Mi$^*$ \\
  Georgia Tech  
  \And
   Marton Rozsa \\
   Allen Institute for Neural Dynamics
   \And
   Matthew S. Bull \\
   Allen Institute for Brain Science
   \And 
    Karel Svoboda \\
    Allen Institute for Neural Dynamics
    \And 
    Kayvon Daie$^\dagger$ \\
    Allen Institute for Neural Dynamics
    \And
    Matthew D. Golub\thanks{Equally contributing senior authors. \\ Correspondence to: \texttt{ajwagen@berkeley.edu} or \texttt{lmi7@gatech.edu}.\loose} \\
    University of Washington 
    \And
    Kevin Jamieson$^\dagger$ \\
    University of Washington
}
\begin{document}

\maketitle

\begin{abstract}

Recent advances in techniques for monitoring and perturbing neural populations have greatly enhanced our ability to study circuits in the brain.  In particular, two-photon holographic optogenetics now enables precise photostimulation of experimenter-specified groups of individual neurons, while simultaneous two-photon calcium imaging enables the measurement of ongoing and induced activity across the neural population. Despite the enormous space of potential photostimulation patterns and the time-consuming nature of photostimulation experiments, very little algorithmic work has been done to determine the most effective photostimulation patterns for identifying the neural population dynamics. Here, we develop methods to efficiently select which neurons to stimulate such that the resulting neural responses will best inform a dynamical model of the neural population activity. Using neural population responses to photostimulation in mouse motor cortex, we demonstrate the efficacy of a low-rank linear dynamical systems model, and develop an active learning procedure which takes advantage of low-rank structure to determine informative photostimulation patterns. We demonstrate our approach on both real and synthetic data, obtaining in some cases as much as a two-fold reduction in the amount of data required to reach a given predictive power. Our active stimulation design method is based on a novel active learning procedure for low-rank regression, which may be of independent interest.
\end{abstract}


\section{Introduction}

Neural population dynamics describe how the activities across a population of neurons evolve over time due to local recurrent connectivity and inputs to the population from other neurons or brain areas. Identifying these population dynamics can provide critical insight into the computations performed by a neural population \citep{vyas2020computation}. Dynamical systems models have enabled neuroscientists to generate and test a multitude of hypotheses about how specific neural populations support the neural computations that underlie, for example, motor control \citep{churchland2012neural, sussillo2015neural, li2016robust}, motor timing \citep{remington2018flexible, inagaki2022midbrain},  decision making \citep{mante2013context, carnevale2015dynamic, mohan2021interaction, finkelstein2021attractor}, working memory \cite{chaisangmongkon2017computing}, social behavior \citep{nair2023approximate}, and learning \citep{vyas2018neural, sauerbrei2020cortical, sun2022cortical, oby2024dynamical}. 

The traditional approach to data-driven modeling of a neural population typically involves two separate stages. First, neural population activity is recorded while an animal performs a task of interest. Then, a dynamical systems model is fit to the recorded neural responses \citep{yu2009gaussian, macke2011empirical, archer2014low, linderman2016bayesian, gao2016linear, pandarinath2018inferring, glaser2020recurrent, kim2021inferring, karniol2022modeling, oshea2022direct, keshtkaran2022large, valente2022extracting, pellegrino2023low, durstewitz2023reconstructing,lee2023switching, galgali2023residual, sani2024dissociative}. This approach suffers from two key limitations. First, any inferred structure is purely correlational, and cannot be interpreted with any notion of causality. Second, the experimenter has limited control over how the neural population dynamics are sampled, which can lead to inefficient data collection—oversampling in some parts of neural activity space while altogether missing others. Given constraints on time and resources in neurophysiological experiments, there is a strong need for techniques that minimize the amount of experimental data required to identify the neural population dynamics.\loose

We seek to overcome these limitations by actively designing the causal circuit perturbations that will be most informative to learning a dynamical model of the neural population response. For circuit perturbations, we employ two-photon holographic photostimulation (Figure~\ref{fig:intro}), which provides temporally precise, cellular-resolution optogenetic control over the activity of ensembles of neurons \citep{packer2015simultaneous, zhang2018closed, chettih2019single, marshel2019cortical, carrillo2019controlling, daie2021targeted, adesnik2021probing, triplett2024bayesian}. When paired with two-photon calcium imaging, photostimulation protocols can provide insight into network connectivity by enabling the measurement of the causal influence that each perturbed neuron exerts on all other recorded neurons \citep{baker2016cellular, aitchison2017model, draelos2020online, chettih2019single, hage2019distribution, daie2021targeted, finkelstein2023connectivity}. This platform enables targeted excitation of the neural population dynamics, thus providing the experimenter with unprecedented control over the data collected for informing a model of the neural population dynamics.

Here, we develop active learning techniques for designing photostimulation patterns that allow for efficient estimation of low-rank neural population dynamics and the underlying network connectivity. First, we introduce a low-rank autoregressive model that captures low-dimensional structure in neural population dynamics and allows inference of the causal interactions between recorded neurons. 
We then propose an active learning procedure which chooses photostimulations to target this low-dimensional structure, and demonstrate it in two settings:
estimating the underlying causal interactions when using the learned autoregressive model as a simulator of the true dynamics, and adaptively selecting which samples to observe from our dataset of 
 neural population activity recorded via two-photon calcium imaging of mouse motor cortex in response to two-photon holographic photostimulation. In both cases, we show that our active approach obtains substantially more accurate estimates with fewer measurements compared to passive baselines. 
Our methodology is based on a novel analysis of nuclear-norm regression with non-isotropic inputs. To the best of our knowledge, this is the first approach to demonstrate significant gains applying active learning to low-rank matrix estimation problems, and thus we believe this may be of independent interest.

\section{Related Work}

\textbf{Modeling Neural Responses to Stimulation.}
Many studies have applied direct electrical or optical stimulation to neural populations to probe the dynamical properties of neural circuits and their relation to circuit function \citep{churchland2007delay, li2016robust, shah2019efficient, finkelstein2021attractor, yang2021modelling, oshea2022direct}. However, these stimulation techniques lack the spatial specificity needed to precisely probe the causal influence of individuals neurons on the population dynamics, and these experimental designs were \emph{passive} in that the stimulation protocols were specified prior to data collection (with \cite{shah2019efficient, draelos2020online} as notable exceptions). Other work has explored a related but separate problem of minimizing off-target effects when photostimulating individual neurons \cite{triplett2024bayesian}.

\textbf{Low-Rank Matrix Recovery.}
Low-rank matrix recovery has been intensively researched over the last decade and a half  \citep{candes2012exact,vershynin2018high,wainwright2019high}.
However, existing analyses rely critically on the assumption that the set of measurements taken are highly symmetric and satisfy some notion of the restricted isometry property (RIP) or incoherence. 
The matrix recovery problem of our setting departs from the classical literature in several ways.
First, the set of feasible measurements we can take is constrained by the physical limits of the photostimulation system. 
Second, as we aim to \emph{adapt} and \emph{actively} learn these matrix coefficients, we should expect that our resulting set of measurements should be highly skewed by design.
Motivated by this, we develop, to the best of our knowledge, the first bounds on low-rank estimation using the nuclear norm heuristic that gives a quantification of the estimation error in terms of the precise individual measurements taken (i.e., in contrast to a more global property like RIP).

\textbf{Active Learning and Low-Rank Estimation.}
The active learning literature is vast, and a full survey is beyond the scope of this work. We focus in particular on active learning for dynamical systems, and problems with low-rank structure.
The estimation of dynamical systems---the \emph{system identification} problem---is central to many areas of engineering and science \citep{ljung1998system}.
The problem of actively designing inputs to effectively estimate the parameters of a dynamical system has been studied extensively for decades \citep{mehra1974optimal,gerencser2005adaptive,katselis2012application,manchester2010input,rojas2007robust,goodwin1977dynamic,lindqvist2001identification,gerencser2007adaptive}. More recently, a variety of provably efficient approaches have been developed for both linear \citep{wagenmaker2020active,wagenmaker2021task} and nonlinear \citep{mania2020active,wagenmaker2024optimal} systems. Other related work has considered active learning for latent variable models \citep{jha2024active}, which are often effective models of neural dynamics.
As compared to these works, a key feature of our setting is the low-rank structure present in the data, which to our knowledge has not been previously studied within the active system identification literature.\loose

Beyond dynamical systems, some attention has been devoted to active learning with low-rank structure, in particular works on low-rank bandits \citep{katariya2017stochastic,jun2019bilinear,lu2021low,kang2022efficient}. While the setting considered in these works is somewhat different---they aim to solve a bandit problem, while we are interested in regression---they similarly seek to develop active learning approaches which make efficient use of low-rank structure. Also related is the work of \cite{arias2012fundamental},  
which shows that in the related sparse estimation setting, there does not exist more than a logarithmic gain to being adaptive. The results of this work are minimax, however---only applying to certain ``hard'' problems---and do not address the matrix recovery problem.\loose


\section{Preliminaries}\label{sec:prelim}

\begin{figure}[t]
    \includegraphics[width=\textwidth]{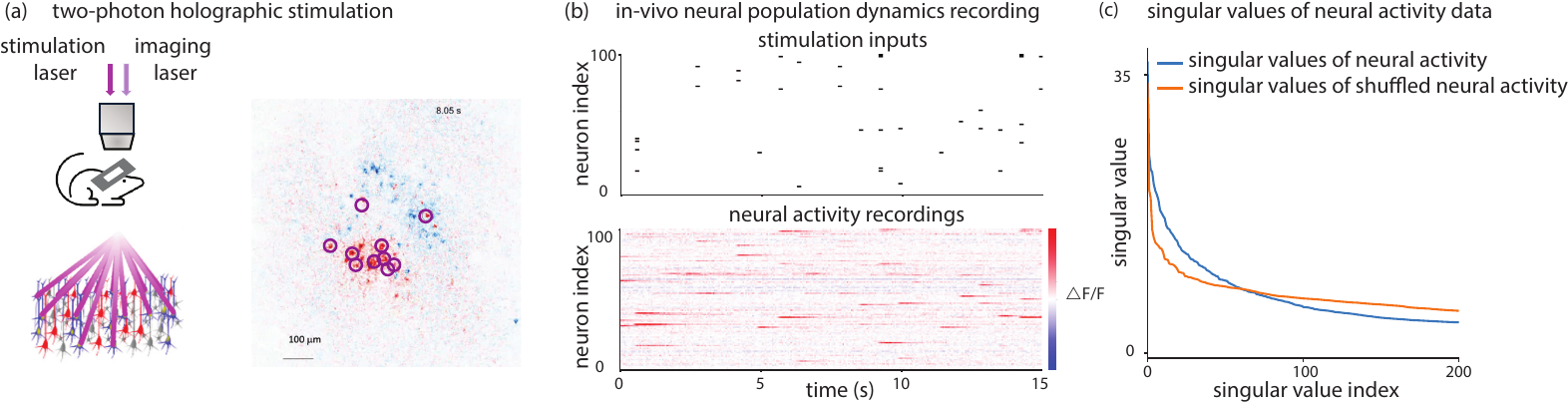}
    \vspace{-2em}
    \caption{\small (a) Two-photon imaging and holographic photostimulation platform (left) and a representative image frame (right). Purple circles indicate neurons  photostimulated immediately before frame acquisition. Red and blue indicate increases and decreases of firing activity, respectively, relative to before photostimulation. (b) Example time series photostimulation inputs (top) and neural responses (bottom) from 100 randomly selected neurons (out of $d = 663$ recorded neurons identified in the FoV). (c) Neural responses $y_t$ occupy a low-dimensional subspace. Singular values from a representative dataset's demeaned neural activity data matrix (blue) indicate substantially more data variance residing in a few dozen dimensions (out of the full $d=663$ dimensional neural activity space) than is expected by chance (orange,  singular values when removing low-dimensional structure by shuffling time indices independently for each neuron; note clipped horizontal axis).}
    \label{fig:intro}
    \vspace{-1em}
\end{figure}

\paragraph{Dataset Details.} Neural population activity was recorded in mouse motor cortex using two-photon calcium imaging at 20Hz of a 1mm$\times$1mm field of view (FoV) containing 500-700 neurons. Each recording spanned approximately 25 minutes and 2000 photostimulation trials. In each trial, a 150ms photostimulus was delivered and was followed by a 600ms response period before the next trial began. Each photostimulus targeted a group of 10-20 randomly selected neurons, and a total of 100 unique photostimulation groups were defined for each experiment ($\approx 20$ trials per group). We evaluate our techniques on four such datasets.

\subsection{Fitting Low-Rank Dynamical Models}\label{sec:model_fits}

We first seek to develop effective dynamical models of the neural activity in our photostimulation datasets. 
Obtaining such models will provide insight into which photostimuli are most informative, and gives us a means to evaluate the effectiveness of our active learning methods.
We consider three classes of models: autoregressive (AR) models, low-rank AR models, and nonlinear RNN models. Results from fitting these models are shown in Figure~\ref{fig:low rank model}. We describe the model details next. \loose

At discrete time $t \in \mathbb{N}$, we denote the true neural activity across the $d$ imaged neurons as $x_t \in \R^d$, the noisy, measured activity as $y_t \in \R^d$, and the photostimulus intensity applied across those same $d$ neurons as $u_t \in \R^d$. Applying stimulus $u_t$ influences the measured neural activity at the next timestep $y_{t+1}$. However, just the snapshot $y_t$ may not capture the full true \emph{state} of the neural population, which may include not just the current neural activity, but potentially also multiple orders of temporal derivatives.
To capture these effects, we consider an AR-$k$ model defined as:\loose
\begin{align}\label{eqn:ARk_model}
    x_{t+1} = {\textstyle \sum}_{s=0}^{k-1} (A_s x_{t-s} + B_s u_{t-s}) + v, \quad y_t = x_t + w_t \quad\text{with}\quad w_t \sim \mc{N}(0,\sigma^2 I_d ),
\end{align}
where $A_s \in \R^{d \times d}$ and $B_s \in \R^{d \times d}$ describe the coupling between neurons and stimulus at the time lag of $s$ timesteps, $s = 0, \dots, k-1$, and offset $v \in \R^d$ accounts for baseline neural activity. Given input-observation pairs $\{ (u_t,y_t) \}_t$, the coefficients $\{ (A_s, B_s)_{s=0}^{k-1}, v\}$ of \eqref{eqn:ARk_model} can be fit using least squares. 
Despite its simplicity, this linear model reproduces the recorded neural activity remarkably well (see ``full rank'' model of Figure~\ref{fig:low rank model}). 

\begin{figure}[t]
    \includegraphics[width=\textwidth]{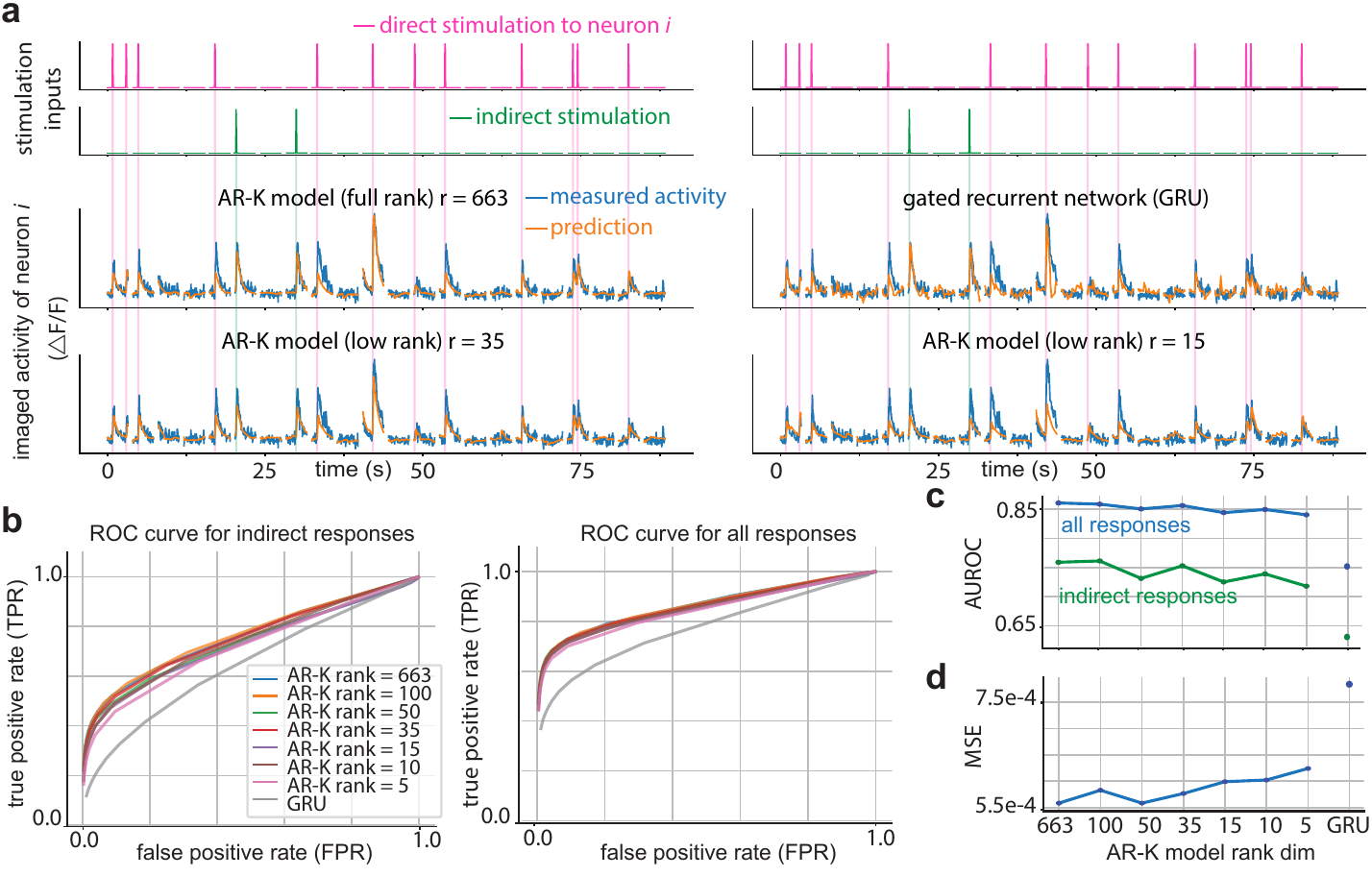}
    \vspace{-1.5em}
    \caption{
    Example data and cross-validated model predictions. (a) Roll-out predictions of the activity of an example neuron $i$ using low-rank AR-$k$ models ($k=4$) and GRU networks for 22 example data segments (3.3s per segment; segments separated by brief horizontal spaces). Each model's predictions are seeded with the first $k=4$ timesteps (200ms) of activity from $d=663$ neurons and are then unrolled to predict the activity across all $d$ neurons over the next 66 timesteps, given the full 70-timestep sequence of photostimulation to all $d$ neurons.  Most responses of neuron $i$ are tied to ``direct'' photostimulation of neuron $i$ (pink, first row of panels). Several ``indirect responses'' are tied to stimulation of other neurons $j \neq i$ that influence neuron $i$ through the population dynamics. To avoid showing all indirect stimuli (to $d-1$ neurons), only select indirect stimuli are shown (green, second row of panels). (b) Receiver operator characteristic (ROC) curve of true-positive rate and false-positive rate for response detection are calculated on indirect responses only (left) and all direct and indirect responses (right). (c) Area under ROC curve (AUROC) and (d) mean square error (MSE) for all predictions.}
    \label{fig:low rank model}
    \vspace{-1em}
\end{figure}

Neural population dynamics are frequently reported as residing in a subspace of lower dimension than the total number of recorded neurons \citep{yu2009gaussian, churchland2012neural, mante2013context, cunningham2014dimensionality, sadtler2014neural, kato2015global, golub2018learning, gallego2018cortical, chaudhuri2019intrinsic}. The population dynamics in our datasets are consistent with such low-dimensional structure, as indicated by the singular value spectrum in Figure~\ref{fig:intro}(c). Inspired by this observation, we introduce a set of low-rank dynamical models, where each matrix of $\{ (A_s, B_s)_{s=0}^{k-1}\}$ is re-defined as diagonal plus low-rank. Explicitly, we parameterize $A_s = D_{A_s} + U_{A_s}V_{A_s}^{\top}$ and $B_s = D_{B_s} + U_{B_s}V_{B_s}^{\top}$, where $D \in \R^{d \times d}$ with $D_{ij} = 0$ for all $i \neq j$, $U \in \R^{d \times r}$, and $V \in \R^{d \times r}$ for predefined rank $r$. The diagonal matrices account for substantial autocorrelation in each neuron's activity ($D_{A_s}$) and for the reliable response of each neuron to direct photostimulation ($D_{B_s}$), whereas the low-rank matrices $(UV^\top)$ confer coupling between neurons.  
To fit these parameters, we optimize the following objective function with gradient descent over all parameters:\loose
\begin{align}
   \textstyle \underset{A_s,B_s \in \R^{d \times d}, s=0,\dots,k-1, v \in \R^d}{\mathrm {minimize}} \sum_{t=1}^T \big( y_{t+1} - \sum_{s=0}^{k-1} A_s y_{t-s} - \sum_{s=0}^{k-1} B_s u_{t-s} - v)^2. \label{eqn:low_rank_recovery}
\end{align}

Figure~\ref{fig:low rank model} shows that these low-rank models perform comparably to the full rank versions in terms of predictive performance; indeed the rank $r = 35$ model appears almost indistinguishable from the full rank model. 
From a statistical perspective, low-rank models have far fewer degrees of freedom, and hence require less data to fit. 

To assess whether more expressive nonlinear models could be advantageous, we also fit a gated recurrent unit (GRU) network model, adapted from \cite{pandarinath2018inferring}, as shown in Figure~\ref{fig:low rank model}. Interestingly, the GRU model did not perform as well as the AR-$k$ models, potentially due to the complexities of hyperparameter tuning. Therefore, we focus on linear models in the analysis that follows. Additional details on model fitting are provided in \Cref{appendix:model_fits}.

\subsection{The Causal Connectivity Matrix}\label{sec:causal_mat}

While we require dynamical models to predict the temporal evolution of the neural population activity, we are also interested in inferring how the activity of one recorded neuron causally influences the activity of the other recorded neurons. To address this need, we define a \emph{causal connectivity matrix}, $H \in \R^{d \times d}$, to be the mapping such that $H u \in \R^d$ quantifies the total response (across time) of each neuron to a single-timestep photostimulus $u$. That is, $\sum_{t=1}^\infty x_t = H u$, where $x_1,x_2,x_3,\ldots$ are the neural activities generated by the population dynamics if $u_0 = u, u_{t\ge 1} = 0$, and $x_{t\le 0} = x_{\infty}$ is the steady state or resting state of the system subject to no photostimulation. 
If the dynamics are linear, or more specifically follow \eqref{eqn:ARk_model}, such a matrix $H$ is guaranteed to exist, and can be formed by simply rolling out \eqref{eqn:ARk_model} with the appropriate initializations. 
While $H$ is not explicitly constrained to be low-rank, if it is obtained from a low-rank AR-$k$ model, it too will exhibit low-rank structure.

In our experimental paradigm, photostimulation acts as a causal perturbation to the population dynamics, and as such, our statistical framework is able to capture causal interactions, as opposed to merely correlative interactions. This is in contrast to the majority of work on neural population dynamics, which involves fitting dynamical models to passively obtained data. Due to the lack of causal manipulations in these studies, one cannot distinguish whether statistical relationships arise between neurons due to correlation (e.g., due to a shared upstream influence) versus causation (e.g., neuron $i$ directly influences neuron $j$). Such correlative relationships are typically referred to as “functional connectivity”; we instead use the term “causal connectivity” to convey the additional causal interpretability afforded in our setting.

To fit $H$, we could first fit $\{ \widehat{A}_s,\widehat{B}_s \}_{s=0}^{k-1}$ and then use these as plug-in estimates for their true values to compute $H$.
Alternatively, we take a more direct approach inspired by the definition of $H$ itself.
By inspecting the raw data of Figure~\ref{fig:low rank model}(a) and observing the rate at which each stimulated neuron returns to baseline activity, it is clear that the system mixes (i.e., forgets the past) quickly.
This suggests that the total response due to input $u$ asymptotes after some finite number of timesteps $\tau$.
Thus, we can apply some photostimulus $u \in \R^d$ at time $t=0$ and then measure the total response $z = \sum_{t=1}^\tau y_t$, where $y_t \in \R^d$ is the noisy measurement of the true neural response $x_t$. 
If we repeat this for many pairs $\{ (u_n,z_n) \}_{n}$ then we can approximate $H$ as 
$$\textstyle \widehat{H} := \arg\min_{H'} \sum_{n} \| z_n - H' u_n \|_2^2.$$
In this work we adopt  this latter approach.
Since we believe $H$ to be low rank, this amounts to a low-rank matrix recovery problem with matrix-vector observations.
In the next section, we will describe how to adaptively choose $\{ u_n \}_n$ to estimate $H$ using as few (stimulus, response) pairs as possible. 
Subsequently in Section \ref{sec:experiments}, we will demonstrate that actively designing inputs to accelerate the learning of $H$ effectively accelerates the learning of the full dynamics as well. 

\newcommand{\Thetast}{\Theta_\star}
\newcommand{\nuc}{*}
\newcommand{\Thetahat}{\widehat{\Theta}}
\newcommand{\mat}{\mathrm{mat}}
\newcommand{\vecb}{\mathrm{vec}}
\newcommand{\tsum}{ { \textstyle \sum }}
\newcommand{\frakD}{\mathfrak{D}}
\newcommand{\vhat}{\widehat{v}}

\newcommand{\dima}{d_1}
\newcommand{\dimb}{d_2}
\newcommand{\cov}{\varphi}
\renewcommand{\Cov}{\Phi}

\section{Active Learning of Low-Rank Matrices}\label{sec:active_learning}
In the previous section, we saw that estimating the causal connectivity matrix $H$ induced by the neural population dynamics amounts to low-rank matrix recovery, where we apply some photostimulus $u \in \R^d$ and observe the neural population response $z \approx H u$ plus noise. 
In this section we seek to understand how we should choose the photostimuli to estimate the causal connectivity as quickly as possible. To this end, in \Cref{sec:nuc_norm_regression} we present novel results characterizing the estimation error of the nuclear norm regression estimator, and in \Cref{sec:active_method} present an algorithm motivated by these results which seeks to actively estimate low-rank matrices. These results will directly motivate a procedure for designing photostimulation inputs. 

To demonstrate the generality of our results, in \Cref{sec:nuc_norm_regression} we  consider a general matrix regression setting.
In particular, let $\Theta_\star \in \R^{\dima \times \dimb}$ be a rank $r$ (potentially non-square) matrix, $\cov_n \in \R^{\dima \times \dimb}$ some input matrix,
and assume scalar observations:
\begin{align}\label{eq:obs_model}
z_n = \langle \Thetast, \cov_n \rangle + \eta_n, \quad \eta_n \sim \cN(0,1),
\end{align}
where $\langle\Thetast, \cov_n \rangle = \tr(\Thetast^\top \cov_n)$ for $\tr(\cdot)$ the trace of a matrix. 
Note that the setting considered in \Cref{sec:causal_mat} is a special case of this observation model with $\Thetast \leftarrow H$ and, for each input stimulation $u$, measuring the response of \eqref{eq:obs_model} to $d$ inputs $\cov_j$ of the form $\cov_j \equiv \mb{e}_j u^\top$ for $j=1,\dots,d$.



\paragraph{Matrix Notation.}
We let $\| \cdot \|_\fro, \| \cdot \|_\op, \| \cdot \|_\nuc$ denote the Frobenius, operator, and nuclear norm of a matrix, respectively. $\dagger$ denotes the pseudo-inverse of a matrix. $\vecb(\cdot)$ denotes the vectorization of a matrix, and $\mat(\cdot)$ the inverse of the vectorization. We also let $\simplex_{\cU}$ denote the simplex---the set of distributions---over a set $\cU$.

\subsection{Constrained Nuclear Norm Estimator under Non-Isotropic Measurements}\label{sec:nuc_norm_regression}

We are interested in understanding how we can effectively take into account the low-rank structure of $\Thetast$, if our goal is to estimate $\Thetast$ from the observations of \eqref{eq:obs_model}. To this end, we consider the following nuclear-norm constrained least-squares estimator for $\Thetast$:
\begin{align}\label{eq:nuc_estimator}
\widehat{\Theta} &= \argmin_{\Theta \in \cK} \| \Cov(\Theta) -z \|_2^2 := {\textstyle \sum}_{n=1}^N ( \langle \cov_n, \Theta \rangle - z_{n} )^2 \quad \text{for} \quad \cK := \{ \Theta : \| \Theta \|_{\nuc} \le \| \Thetast \|_{\nuc} \},
\end{align}
where here we let $\Cov(\Thetast) \in \R^N$ denote the vector where the $n$th element is $\langle \cov_n, \Thetast \rangle$, and $z = \Cov(\Thetast) + \eta$ the vector of observations, for $\eta$ the vector with elements $\eta_n$. 
Define $\Theta_\star = U \Sigma V^\top$ as the skinny SVD such that $U \in \R^{\dima \times r}$, $V \in \R^{\dimb \times r}$, and consider the linear projection operators $P_\perp, P_\parallel : \R^{\dima \times \dimb} \rightarrow \R^{\dima \times \dimb}$ defined as:
\begin{align*}
     P_\perp(M) := (I-UU^\top) M (I- V V^\top)  \quad \text{and} \quad     P_\parallel(M) := M - P_\perp(M),
\end{align*}
for any $M \in \R^{\dima \times \dimb}$.
We call $P_\parallel$ the projection onto the \emph{tangent space} of $\Theta_\star$.
Note that the dimension of the range of $P_\parallel$ is equal to just $r(\dima+\dimb) - r^2 \ll \dima \dimb$. 
When convenient, we drop parentheses so that $\Cov(P_\parallel(\Delta))$ is just notated as $\Cov P_\parallel \Delta$.
Since $\Cov(P_\parallel(\cdot))$ is a linear operator from $\R^{d_1 \times d_2} \rightarrow \R^N$, the adjoint $(\Cov P_\parallel )^* = P_\parallel^* \Cov^* = P_\parallel \Cov^*$ and Moore-Penrose inverse $(\Cov P_\parallel )^{\dagger}$ are well-defined. 
We are now ready to state our main result on the estimation error of $\Thetahat$, for $\Thetahat$ as defined in \eqref{eq:nuc_estimator}.


\begin{theorem}\label{thm:nuc_norm_result}
Define $\mu:=\| (P_\parallel \Cov^* )^\dagger  P_\parallel \Cov^*  \Cov P_\perp (\Cov P_\perp)^\dagger \|_{\op}$.
Then with probability at least $1-2\delta$:
    \begin{align*}
        \| \widehat{\Theta} - \Theta_\star \|_{\fro} \leq \frac{4}{1-\mu} &\sqrt{   \tr\big( ( P_\parallel \Cov^* \Cov P_\parallel )^{\dagger}  \big) + 2 \| ( P_\parallel \Cov^* \Cov P_\parallel )^{\dagger} \|_{\op} \log \tfrac{\dima \dimb}{\delta}  } \\
        &+ 4 \| P_\perp ( \Cov^* \Cov )^{1/2} \|_{\op} \|(P_\parallel \Cov^* \Cov P_\parallel )^{\dagger} \|_{\op} (\sqrt{\dima} + \sqrt{\dimb} + \sqrt{2\log \tfrac{1}{\delta}}),
    \end{align*}
 where here $\Cov^* \Cov(M):= \sum_n \cov_n \langle \cov_n, M \rangle$ and $\tr(\cdot)$ describes the sum of the eigenvalues of the linear operator $( P_\parallel \Cov^* \Cov P_\parallel )^{\dagger}: \R^{\dima \times \dimb} \rightarrow \R^{\dima \times \dimb}$. 
\end{theorem}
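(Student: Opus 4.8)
The plan is to carry out a standard ``basic inequality'' argument for constrained least squares, but sharpened to track the exact measurement operator rather than hiding behind an RIP-type constant. Write $\Delta := \widehat\Theta - \Theta_\star$. Since $\widehat\Theta$ is feasible for \eqref{eq:nuc_estimator} and minimizes the squared residual, we have $\| \Cov(\widehat\Theta) - z\|_2^2 \le \|\Cov(\Theta_\star) - z\|_2^2$, which after expanding $z = \Cov(\Theta_\star) + \eta$ gives the basic inequality $\|\Cov\Delta\|_2^2 \le 2\langle \eta, \Cov\Delta\rangle$. The left side I want to lower bound by something like $\|\Cov P_\parallel \Delta\|_2^2$ (up to the factor $1-\mu$), and the right side I want to upper bound by splitting $\eta$ through $P_\parallel$ and $P_\perp$ components of $\Cov\Delta$.

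First I would establish the geometric fact that $\Delta$ lies (approximately) in a cone around the tangent space: from $\|\Theta_\star\|_\nuc \ge \|\widehat\Theta\|_\nuc = \|\Theta_\star + \Delta\|_\nuc$ and the decomposability of the nuclear norm at $\Theta_\star$ (the standard inequality $\|\Theta_\star + \Delta\|_\nuc \ge \|\Theta_\star\|_\nuc + \|P_\perp\Delta\|_\nuc - \|P_\parallel\Delta\|_\nuc$, using that $P_\perp\Delta$ has row/column spaces orthogonal to those of $\Theta_\star$), I get $\|P_\perp\Delta\|_\nuc \le \|P_\parallel\Delta\|_\nuc$. This is what lets me control the ``off-tangent'' part of $\Delta$ by the ``on-tangent'' part. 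Next, decompose $\Cov\Delta = \Cov P_\parallel\Delta + \Cov P_\perp\Delta$. The key algebraic step is that the cross term is controlled by $\mu$: projecting $\Cov P_\perp\Delta$ onto the range of $\Cov P_\parallel$ and using the definition of $\mu$ as the operator norm of $(P_\parallel\Cov^*)^\dagger P_\parallel\Cov^*\, \Cov P_\perp (\Cov P_\perp)^\dagger$, one shows $\|\Cov\Delta\|_2 \ge (1-\mu)\,\|(\Cov P_\parallel)^{\dagger\dagger}\|$-type quantity times $\|P_\parallel\Delta\|_\fro$; more precisely the basic inequality converts into a bound on $\|P_\parallel\Delta\|_\fro$ of the form $\|P_\parallel\Delta\|_\fro \lesssim \frac{1}{1-\mu}\big(\text{noise through the tangent space}\big)$. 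Then $\|\Delta\|_\fro \le \|P_\parallel\Delta\|_\fro + \|P_\perp\Delta\|_\fro$, and $\|P_\perp\Delta\|_\fro \le \|P_\perp\Delta\|_\nuc \le \|P_\parallel\Delta\|_\nuc \le \sqrt{2r}\,\|P_\parallel\Delta\|_\fro$ — actually I should avoid the $\sqrt{r}$ blow-up, so instead I route the $P_\perp$ part directly through the observation operator: from the basic inequality, $\|\Cov P_\perp\Delta\|_2$ is itself controlled, and I convert this back using $(\Cov P_\perp)^\dagger$, which is where the $\|P_\perp(\Cov^*\Cov)^{1/2}\|_\op$ factor and the second line of the bound come from.

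For the noise terms, I would use two concentration facts. The ``in-tangent'' noise $\langle \eta, \Cov P_\parallel\Delta\rangle \le \|\eta_\parallel\|_2 \cdot \|\Cov P_\parallel\Delta\|_2$ where $\eta_\parallel$ is the projection of $\eta$ onto the $(\le r(d_1{+}d_2){-}r^2)$-dimensional range of $\Cov P_\parallel$; a $\chi^2$ tail bound gives $\|\eta_\parallel\|_2^2 \lesssim \tr\big((P_\parallel\Cov^*\Cov P_\parallel)^\dagger \Cov^*\Cov\big)$... — here I need to be careful that the relevant ``effective dimension'' trace is exactly $\tr\big((P_\parallel\Cov^*\Cov P_\parallel)^\dagger\big)$ composed appropriately, matching the first term under the square root, plus the $\|\cdot\|_\op\log(d_1d_2/\delta)$ deviation term from Bernstein/Laurent–Massart. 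The ``off-tangent'' noise contribution, after routing through $(\Cov P_\perp)^\dagger$, becomes a Gaussian linear form whose magnitude is $\|P_\perp(\Cov^*\Cov)^{1/2}\|_\op$ times the norm of a standard Gaussian vector in $\le d_1 + d_2$ effective dimensions, giving the $(\sqrt{d_1} + \sqrt{d_2} + \sqrt{2\log(1/\delta)})$ factor via a standard Lipschitz-concentration / Gaussian-width bound. Union-bounding the two events at level $\delta$ each yields probability $1-2\delta$.

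I expect the main obstacle to be the cross-term control — making precise how $\mu < 1$ lets one ``decouple'' $\Cov P_\parallel\Delta$ from $\Cov P_\perp\Delta$ inside $\|\Cov\Delta\|_2^2$ while simultaneously feeding in the cone constraint $\|P_\perp\Delta\|_\nuc \le \|P_\parallel\Delta\|_\nuc$, and getting the clean factor $\frac{1}{1-\mu}$ out of it rather than something messier. A secondary subtlety is bookkeeping the pseudo-inverses: $(\Cov P_\parallel)^\dagger$, $P_\parallel\Cov^*$, and $(\Cov P_\perp)^\dagger$ all act on different subspaces, and I must check the range/kernel compatibility so that identities like $(\Cov P_\parallel)(\Cov P_\parallel)^\dagger$ acting on $\Cov P_\perp\Delta$ produce exactly the operator defining $\mu$. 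Once those two points are pinned down, the rest is assembling the basic inequality with the two concentration bounds and the triangle inequality $\|\Delta\|_\fro \le \|P_\parallel\Delta\|_\fro + \|P_\perp\Delta\|_\fro$.
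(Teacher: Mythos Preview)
Your skeleton matches the paper's: basic inequality, cone condition from decomposability, split $\Cov\Delta$ into $P_\parallel$ and $P_\perp$ pieces, control the cross term via $\mu$, and handle two noise contributions separately. However, two of your three technical steps differ from the paper in ways that matter, and your assessment of where the difficulty lies is inverted.

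\textbf{The cross-term / $\mu$ step is not the obstacle.} The paper dispatches it in two lines: write $\Cov P_\parallel\Delta = (\Cov P_\parallel)(\Cov P_\parallel)^\dagger \Cov P_\parallel\Delta$ and similarly for $P_\perp$, apply Cauchy--Schwarz to get $|\langle \Cov P_\parallel\Delta,\Cov P_\perp\Delta\rangle|\le \mu\,\|\Cov P_\parallel\Delta\|_2\|\Cov P_\perp\Delta\|_2$, and then use the elementary inequality $a^2+b^2-2\mu ab\ge(1-\mu)a^2$. There is no range/kernel subtlety beyond the definition of the Moore--Penrose inverse.

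\textbf{The genuine gap is the in-tangent noise term.} Your plan---project $\eta$ onto the range of $\Cov P_\parallel$ and apply Cauchy--Schwarz plus a $\chi^2$ bound---yields only $\|P_\parallel\Delta\|_\fro \lesssim (1-\mu)^{-1}\sqrt{\| (P_\parallel\Cov^*\Cov P_\parallel)^\dagger\|_{\op}\cdot \dim(\text{range of }\Cov P_\parallel)}$, because after bounding $\|\Cov P_\parallel\Delta\|_2$ you must divide by the \emph{smallest} singular value to recover $\|P_\parallel\Delta\|_\fro$. This is weaker than $\sqrt{\tr\big((P_\parallel\Cov^*\Cov P_\parallel)^\dagger\big)}$ whenever the spectrum of $\Cov P_\parallel$ is non-flat, and the discrepancy can be as large as the condition number. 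The paper obtains the trace by a dyadic peeling over singular-value levels of $\Cov P_\parallel$: it groups singular values into blocks $[\beta_{2^k},\beta_{2^{k+1}})$, bounds the Gaussian energy in each block by $\sim 2^k + \log(1/\delta)$, and observes that $\max_k 2^k/\beta_{2^{k+1}}^2 \le \sum_n 1/\beta_n^2 = \tr\big((P_\parallel\Cov^*\Cov P_\parallel)^\dagger\big)$. This self-normalizing step is the crux of the theorem and is absent from your proposal.

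\textbf{The off-tangent noise is handled differently too.} You propose ``routing through $(\Cov P_\perp)^\dagger$,'' but the paper never inverts $\Cov P_\perp$. It instead writes $\langle \Cov P_\perp\Delta,\eta\rangle = \langle P_\perp\Delta,\Cov^*\eta\rangle \le \|P_\perp\Delta\|_\nuc\,\|P_\perp\Cov^*\eta\|_{\op}$, uses the cone condition to bound $\|P_\perp\Delta\|_\nuc \le t$, and recognizes $P_\perp\Cov^*\eta$ as $P_\perp(\Cov^*\Cov)^{1/2}$ applied to a standard Gaussian matrix, whose operator norm concentrates at $\sqrt{d_1}+\sqrt{d_2}$. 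The nuclear/operator duality is what produces the $(\sqrt{d_1}+\sqrt{d_2})$ factor; going through a pseudoinverse would not.
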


\Cref{thm:nuc_norm_result} provides a precise bound on the estimation error of the nuclear norm estimator under arbitrary inputs $\{ \cov_n \}_n$. To the best of our knowledge, this is the first such characterization of this estimator.
This characterization is particularly essential in active learning problems, such as the problem considered here, where it is critical that we understand precisely how the estimation error scales with different inputs, in order to determine which inputs will most effectively reduce the estimation error. As the observation model of \Cref{sec:causal_mat} is a special case of the setting considered in \eqref{eq:obs_model} with $\Thetast \leftarrow H$, \Cref{thm:nuc_norm_result} provides a quantification of how quickly we can estimate the causal connectivity matrix given some set of inputs; we expand on the implications of this connection in \Cref{sec:active_method}.\loose

\Cref{thm:nuc_norm_result} states that the estimation error of the estimator \eqref{eq:nuc_estimator} scales (predominantly) with the strength of our inputs $\cov_n$ in the tangent space of $\Thetast$. Indeed, if $[w_1,\dots,w_{\dima}]$ and $[v_1,\dots,v_{\dimb}]$ are the left and right singular vectors of the full SVD of $\Theta_\star$, and $L \in \R^{\dima \dimb \times r(\dima+\dimb)-r^2}$ is a matrix with orthonormal columns $\text{vec}(w_i v_j^\top)$ for $(i,j) : \{i \leq r\} \cup \{j \leq r \}$, then 
\begin{align*}
     \tr\big(  ( P_\parallel \Cov^* \Cov P_\parallel )^{\dagger}  \big)
      &= \tr \big (   \big( L^\top \tsum_{n=1}^N \text{vec}(\cov_n) \text{vec}(\cov_n)^\top L \big)^\dagger \big ) ,
\end{align*}
so we see that the estimation error depends only on the scaling of $\tsum_{n=1}^N \text{vec}(\cov_n) \text{vec}(\cov_n)^\top$ in the space spanned by $\text{vec}(u_i v_j^\top)$ for $i \le r$ or $j \le r$---the tangent space to $\Thetast$.
As an example of how this scales, assume that for $n=1,\dots,N$ the entries of each $\cov_n$ are IID $\mc{N}(0,1)$ and $N \geq r(\dima+\dimb)-r^2$. Then $\mu \approx 0$, $\tr\big(  (P_{\parallel} \Cov^* \Cov P_{\parallel})^{\dagger}  \big) \approx \frac{r (\dima+\dimb) - r^2}{N}$, $ \| (P_{\parallel} \Cov^* \Cov P_{\parallel})^{\dagger}  \|_{\op} \approx \frac{1}{N}$, and $\| P_\perp ( \Cov^* \Cov )^{1/2} \|_{\op} \approx \sqrt{N}$.
This translates to a bound of $\| \widehat{\Theta} - \Theta_\star \|_{\fro}^2 \leq \frac{r(\dima+\dimb) -r^2 + \log(1/\delta)}{N}$. Critically, we see that this does not scale with the total number of parameters, $\dima \dimb$, but instead with $r(\dima+\dimb)$, which could be much smaller.
The following result, due to \cite{tang2011lower}, provides a lower bound on the estimation error of any unbiased estimator, and shows that the rate obtained by \Cref{thm:nuc_norm_result} is essentially unimprovable.\loose

\iftoggle{arxiv}{
\begin{theorem}[Corollary 1 of \cite{tang2011lower}]
For any unbiased estimator $\Thetahat$, we have that:
\begin{align*}
\Exp[\| \Thetahat - \Thetast \|_\fro^2] \ge  \tr\big( ( P_\parallel  \Cov^* \Cov P_\parallel )^{\dagger} \big).
\end{align*}
\end{theorem}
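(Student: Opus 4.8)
This is a Cram\'er--Rao--type inequality, so the plan is to recognize \eqref{eq:obs_model} as a Gaussian linear model, compute its Fisher information, and then invoke the Cram\'er--Rao bound adapted to the constraint that $\Thetast$ has rank $r$. Vectorizing, set $\theta := \vecb(\Thetast) \in \R^{\dima\dimb}$ and $g_n := \vecb(\cov_n)$, so that \eqref{eq:obs_model} reads $z_n = g_n^\top \theta + \eta_n$ with $\eta_n \iidsim \cN(0,1)$. The log-likelihood is $-\tfrac12\sum_{n=1}^N (z_n - g_n^\top\theta)^2$ up to an additive constant, its score is $\sum_n (z_n - g_n^\top\theta)g_n$, and hence the Fisher information is the matrix $\mathcal I := \sum_{n=1}^N g_n g_n^\top$, which---viewed as a self-adjoint operator on $\R^{\dima\times\dimb}$---is precisely $\Cov^*\Cov$, since $\vecb(\Cov^*\Cov(M)) = \sum_n g_n g_n^\top \vecb(M)$.

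Next I encode the low-rank constraint. Near $\Thetast = U\Sigma V^\top$, the set of rank-$r$ matrices is a smooth manifold whose tangent space at $\Thetast$ is exactly $\mathrm{range}(P_\parallel)$, of dimension $r(\dima+\dimb)-r^2$; an orthonormal basis is furnished by the matrix $L$ introduced before the theorem, so that $LL^\top$ is the matrix of $P_\parallel$ and $L^\top L = I$. If $\Thetahat$ is unbiased, $\Exp_\Theta[\Thetahat] = \Theta$ for every rank-$r$ $\Theta$ in a neighborhood of $\Thetast$, then the Jacobian of its mean function restricted to tangent directions at $\Thetast$ is the identity---this is the regularity condition underlying the constrained Cram\'er--Rao bound (in the sense of Gorman--Hero / Stoica--Ng; see also \cite{tang2011lower}). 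That bound gives $\mathrm{Cov}(\vecb(\Thetahat)) \succeq L(L^\top \mathcal I L)^\dagger L^\top$ as positive semidefinite operators, where the pseudo-inverse is needed because $L^\top \mathcal I L$ can be singular when $N < r(\dima+\dimb)-r^2$. Taking traces, $\Exp\|\Thetahat-\Thetast\|_\fro^2 = \tr(\mathrm{Cov}(\vecb(\Thetahat))) \ge \tr(L(L^\top\mathcal I L)^\dagger L^\top) = \tr((L^\top\mathcal I L)^\dagger)$ by cyclicity and $L^\top L = I$, and this equals $\tr\big((P_\parallel\Cov^*\Cov P_\parallel)^\dagger\big)$ because $P_\parallel\Cov^*\Cov P_\parallel$ and $L^\top\mathcal I L$ have the same nonzero eigenvalues---the identity already recorded in the text preceding the theorem.

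The step that needs the most care is the constrained Cram\'er--Rao bound itself in the degenerate regime: one must confirm that ``unbiased on the rank-$r$ manifold'' (the estimator $\Thetahat$ itself need not be low rank) already forces the correct first-order behavior of the bias along tangent curves, and one must handle the Moore--Penrose inverse correctly rather than assuming $\mathcal I$, or its tangent-space restriction, is invertible. A clean way to avoid the manifold machinery is to reduce to one dimension: for any $\Delta$ with $P_\parallel\Delta = \Delta$ and $\|\Delta\|_\fro = 1$, pick a rank-$r$ curve $t\mapsto\Theta(t)$ with $\Theta(0)=\Thetast$ and $\dot\Theta(0)=\Delta$; then $\langle\Delta,\Thetahat\rangle$ is a locally unbiased estimator of the scalar $t$ at $t=0$, the Fisher information for $t$ there is $\langle\Delta,\Cov^*\Cov\,\Delta\rangle$, and the scalar Cram\'er--Rao bound yields $\Var(\langle\Delta,\Thetahat\rangle) \ge 1/\langle\Delta,\Cov^*\Cov\,\Delta\rangle$ whenever that denominator is positive (no unbiased estimator exists otherwise, making the claim vacuous). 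Running $\Delta$ over an eigenbasis of $P_\parallel\Cov^*\Cov P_\parallel$ supported on its range, extending it to an orthonormal basis of $\R^{\dima\times\dimb}$, and using $\tr(\mathrm{Cov}(\vecb(\Thetahat))) \ge \sum_i \Var(\langle\Delta_i,\Thetahat\rangle)$ reproduces the stated bound $\tr\big((P_\parallel\Cov^*\Cov P_\parallel)^\dagger\big)$.
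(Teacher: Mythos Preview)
The paper does not supply its own proof of this statement; it simply quotes it as Corollary~1 of \cite{tang2011lower}. Your sketch is correct and in fact reconstructs the argument used there: Tang et al.\ derive the bound via the constrained Cram\'er--Rao inequality (in the Stoica--Ng / Ben-Haim--Eldar formulation) applied to the Gaussian linear model with the rank-$r$ constraint, whose tangent space at $\Thetast$ is $\mathrm{range}(P_\parallel)$. Your identification of the Fisher information with $\Cov^*\Cov$, the passage to the tangent-space restriction $L^\top\mathcal I L$, and the trace manipulation are all standard and correct, and coincide with the cited reference's derivation.

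Your alternative one-dimensional reduction is also valid. The only place to be slightly careful is your treatment of tangent directions in the kernel of $P_\parallel\Cov^*\Cov P_\parallel$: if such a direction exists, no (locally) unbiased estimator can exist at all, so the entire hypothesis is vacuous and the stated pseudo-inverse bound holds trivially. You handle this correctly by summing only over eigenvectors in the range; just note that this is precisely why the pseudo-inverse (rather than the inverse) appears in the bound.
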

}{
\begin{theorem}[Corollary 1 of \cite{tang2011lower}]
For unbiased estimator $\Thetahat$, $\Exp[\| \Thetahat - \Thetast \|_\fro^2] \ge  \tr\big( ( P_\parallel  \Cov^* \Cov P_\parallel )^{\dagger} \big)$.
\end{theorem}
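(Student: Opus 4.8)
The plan is to recognize this as a \emph{constrained} Cram\'er--Rao bound, the constraint being that $\Thetast$ lies on the manifold $\mathcal{M}_r$ of matrices of rank at most $r$. Since $\Thetast$ has rank exactly $r$, it lies in the smooth locus of $\mathcal{M}_r$, whose tangent space at $\Thetast$ is precisely $\range(P_\parallel)$, of dimension $m := r(\dima+\dimb)-r^2$. Because the observations in \eqref{eq:obs_model} are Gaussian with unit variance, the unconstrained Fisher information for the parameter $\vecb(\Thetast)\in\R^{\dima\dimb}$ is exactly $\Cov^*\Cov = \tsum_n \vecb(\cov_n)\vecb(\cov_n)^\top$ (the matrix identified with the operator in \Cref{thm:nuc_norm_result}). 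The content of the proof is that an unbiased estimator only needs to — and hence only "pays" to — track the component of the score lying in the tangent space $\range(P_\parallel)$, so the effective information it sees is the restriction of $\Cov^*\Cov$ to that subspace. An alternative is to simply invoke the general constrained Cram\'er--Rao bound (Gorman--Hero / Stoica--Ng) with this tangent space as the constraint set; we sketch the self-contained argument instead.

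Concretely, let $S := \vecb(\Cov^*(\eta)) = \tsum_n \eta_n \vecb(\cov_n)\in\R^{\dima\dimb}$ be the score, i.e.\ the gradient of the log-likelihood with respect to $\vecb(\Theta)$ at $\Theta=\Thetast$; then $\Exp[S]=0$ and $\Exp[S S^\top] = \Cov^*\Cov$. Let $L\in\R^{\dima\dimb\times m}$ be the matrix with orthonormal columns $\vecb(w_i v_j^\top)$ spanning $\vecb(\range(P_\parallel))$ introduced after \Cref{thm:nuc_norm_result}, so $P_\parallel$ corresponds to $LL^\top$ in vectorized coordinates and $L^\top L = I_m$. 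The first step is to differentiate the unbiasedness identity $\Exp_\Theta[\vecb(\Thetahat)] = \vecb(\Theta)$ along smooth curves $t\mapsto\Theta(t)\in\mathcal{M}_r$ with $\Theta(0)=\Thetast$: differentiating under the integral sign and using that the directional score along $\dot\Theta(0)\in\range(P_\parallel)$ is $S^\top\vecb(\dot\Theta(0))$, then ranging over a basis of tangent directions, one gets
\begin{align*}
    \Exp\big[\vecb(\Thetahat-\Thetast)\,(L^\top S)^\top\big] = L .
\end{align*}
The second step is the covariance inequality. Writing $v:=\vecb(\Thetahat-\Thetast)$ and $\mathcal{I} := L^\top \Cov^*\Cov\, L$, both $v$ and $L^\top S$ have mean zero (the former by unbiasedness), so the joint second-moment matrix $\begin{pmatrix} \Exp[vv^\top] & L \\ L^\top & \mathcal{I} \end{pmatrix}$ is positive semidefinite; moreover any $w$ with $w^\top\mathcal{I}w=0$ forces $w^\top L^\top S = 0$ almost surely, hence $Lw=0$, hence $w=0$, so $\mathcal{I}$ is invertible whenever an unbiased estimator exists. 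Taking the Schur complement gives $\Exp[vv^\top]\succeq L\mathcal{I}^{-1}L^\top$, and taking traces, using $L^\top L=I_m$ together with the identity $\tr\big((P_\parallel\Cov^*\Cov P_\parallel)^{\dagger}\big) = \tr\big((L^\top\Cov^*\Cov\, L)^{-1}\big)$ already recorded above,
\begin{align*}
    \Exp\big[\fronorm{\Thetahat-\Thetast}^2\big] = \tr\big(\Exp[vv^\top]\big) \ \ge\ \tr\big((P_\parallel\Cov^*\Cov P_\parallel)^{\dagger}\big),
\end{align*}
which is the claim.

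I expect the main obstacle to be making the "restrict to the tangent space" step fully rigorous: one must verify that $\mathcal{M}_r$ is a smooth embedded manifold near $\Thetast$ with tangent space exactly $\range(P_\parallel)$ (a standard but nontrivial fact about rank-constrained sets), that every tangent direction is realized by an admissible curve along which the unbiasedness identity may be differentiated, and that differentiation under the integral sign is licensed by the smoothness and exponential tails of the Gaussian likelihood. A minor accompanying subtlety is the degenerate regime — e.g.\ fewer than $m$ measurements, where $\mathcal{I}$ is singular — in which case the argument shows that no unbiased estimator exists and the bound holds vacuously. Everything else (the Gaussian Fisher-information computation, the Schur-complement inequality, and the trace identity relating the $L$-coordinates to $P_\parallel$) is routine.
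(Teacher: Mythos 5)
Your derivation is correct: the score computation, the differentiation of unbiasedness along curves in the rank-$r$ manifold (whose tangent space at $\Thetast$ is indeed $\range(P_\parallel)$), the Schur-complement covariance inequality, and the trace identity $\tr\big((P_\parallel\Cov^*\Cov P_\parallel)^{\dagger}\big)=\tr\big((L^\top\Cov^*\Cov L)^{-1}\big)$ all check out, and your handling of the singular case by vacuity is fine. Note that the paper itself offers no proof of this statement---it is imported as Corollary~1 of the cited reference---and the argument there is precisely the constrained Cram\'er--Rao bound over the rank-$r$ manifold that you reconstruct, so your proposal matches the source's approach rather than providing a genuinely different route.
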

}

\subsection{Active Learning for Low-Rank Matrix Estimation}\label{sec:active_method}
Given the above characterization, we turn now to the active learning problem: how can we best choose our inputs $\cov_n$ to speed up estimation error of $\Thetast$? For simplicity, rather than the general matrix regression setting of \eqref{eq:obs_model}, we consider here the vector regression case, as this is the setting of interest in learning the causal connectivity. 
In particular, assume that we play some $u_n \in \R^{\dimb}$ and observe $z_n = \Theta_\star u_n + \eta_n$, for $\eta_n \sim \mc{N}(0,I_{\dima})$.
A single vector observation corresponds to observing $\dima$ observations from \eqref{eq:obs_model}, the responses to the matrix inputs $\cov_j \equiv \mb{e}_j u_n^\top$ for $j=1,\dots,\dima$. 
Assume that $\Thetast$ is rank $r$ and let $V_0 := [v_1,\dots,v_r]$ denote the first $r$ right singular vectors of the full SVD of $\Theta_\star$. Then we have that:
\begin{align}\label{eq:exp_design_objective}
    \tr\big( (P_{\parallel} \Cov^* \Cov P_{\parallel})^\dagger \big) 
    &= (\dima-r) \cdot \tr\big( ( V_0^\top \Sigma_N V_0 )^\dagger \big) + r \cdot \tr\big( ( \Sigma_N )^\dagger \big), \quad \Sigma_N := \tsum_{n=1}^N u_n u_n^\top.
\end{align}
This calculation, combined with \Cref{thm:nuc_norm_result}, shows that the estimation error of $\Thetast$ scales with a weighting of two terms: one quantifying the amount of input energy we put into directions spanned by the top-$r$ right singular vectors, and one that quantifies the amount of input energy played isotropically (that is, in all directions).
Note, however, that the input energy played in directions $V_0$ is weighted by a factor of $\dima-r \approx \dima$, much larger weight than the weight of $r$ given to the term quantifying the isotropic input energy. This suggests that, to minimize the estimation error of $\Thetast$, we should focus a large portion of our sampling budget to target the directions spanned by the top-$r$ right singular vectors of $\Thetast$. \loose

This strategy admits a transparent intuition. If $\Thetast$ is rank-$r$ and some vector $u$ is orthogonal to the top-$r$ right singular vectors of $\Thetast$, then $\Thetast u = 0$. Thus, if we \emph{know} what subspace the top-$r$ right singular vectors of $\Thetast$ span, playing $u$ orthogonal to this subspace gives us no additional information about $\Thetast$; in this case we should instead play $u$ aligned with this subspace. This is precisely what the first term in \eqref{eq:exp_design_objective} quantifies, while the second term reflects the fact that we must also estimate the subspace spanned by the top-$r$ right singular vectors of $\Thetast$, for which playing inputs isotropically is optimal.\loose


In general, as we do not know $\Thetast$, we do not know $V_0$, and so cannot directly compute inputs minimizing \eqref{eq:exp_design_objective}. To circumvent this, we consider the following iterative procedure, which alternates between obtaining an estimate of $\Thetast$, $\Thetahat$, and then playing the inputs that would minimize the estimation error---minimize \eqref{eq:exp_design_objective}---if $\Thetahat$ were the true parameter.
We present this procedure in \Cref{alg:main_alg}.\loose

\newcommand{\bLambda}{\mathbf{\Lambda}}
\newcommand{\Vhat}{\widehat{V}}
\newcommand{\lamunif}{\lambda^{\mathrm{unif}}}
\begin{algorithm}[h]
\begin{algorithmic}[1]
\State \textbf{input:} horizon $N$, feasible inputs $\cU$, rank $r$, feasible set $\cK$
\State $\Thetahat_1 \leftarrow I$, $\frakD \leftarrow \emptyset$
\For{$\ell = 1,2,3,\ldots, \lceil \log_2 N \rceil$}
	\State Let $\Vhat_0$ denote the top-$r$ right singular vectors of $\Thetahat_\ell$ and $\bLambda(\lambda) := {\textstyle \sum}_{u \in \cU} \lambda_u u u^\top$, solve: \label{line:exp_design}
 \begin{align*}
		\textstyle \lambda_\ell^V \leftarrow \argmin_{\lambda \in \simplex_{\cU}} \tr \big ( (\Vhat_0^\top \bLambda(\lambda) \Vhat_0)^{\dagger} \big ), \quad \lamunif_\ell \leftarrow \argmin_{\lambda \in \simplex_{\cU}} \tr \big ( \bLambda(\lambda)^\dagger \big ) 
	\end{align*}
	\State For $2^\ell$ steps, play input $u_n \sim \frac{1}{2} \lambda^V_\ell + \frac{1}{2} \lamunif_\ell$, add observations to $\frakD$
	\State Update estimate of $\Thetast$: $\Thetahat_{\ell+1} \leftarrow \argmin_{\Theta \in \cK} \sum_{(u,z) \in \frakD} \| z - \Theta u \|_\fro^2$ \label{line:nuc_est}
\EndFor
\State \textbf{return} $\Thetahat_{\ell + 1}$. 
\end{algorithmic}
\caption{Active Estimation of Low-Rank Matrices}
\label{alg:main_alg}
\end{algorithm}

At every iteration $\ell$, \Cref{alg:main_alg} computes two distributions over inputs: $\lambda_\ell^V$, which targets the top-$r$ right singular vectors of our current estimate of $\Thetast$, and $\lamunif_\ell$, which plays inputs isotropically, covering all directions. 
Rather than playing these distributions according to the precise weighting given in \eqref{eq:exp_design_objective}, we instead found it most effective to mix them at an equal rate. 
As we do not initially know which directions are spanned by the top-$r$ right singular vectors of $\Thetast$, $\lambda_\ell^V$ is not guaranteed to target the correct directions, especially in early iterations. $\lamunif_\ell$ plays inputs in every direction, however, and thus, even if $\lambda_\ell^V$ is not aligned to the top-$r$ right singular vectors of $\Thetast$, will ensure sufficient energy is still being played in the correct directions to allow for learning. Given this, we increase the weight of playing $\lamunif_\ell$ relative to that prescribed by \eqref{eq:exp_design_objective}.

Note that the computation of the optimal inputs is a form of \emph{$A$-optimal experiment design} \citep{pukelsheim2006optimal}, which in general can be efficiently solved by, for example, the Frank-Wolfe algorithm \citep{frank1956algorithm}. Furthermore, efficient procedures for solving nuclear-norm regression problems exist, allowing us to estimate $\Thetahat_{\ell+1}$ on line \ref{line:nuc_est} efficiently \citep{fazel2002matrix}.
We remark that \Cref{alg:main_alg} takes as input $r$, the rank of $\Thetast$, and $\cK$, which requires knowledge of $\| \Thetast \|_\nuc$. In general, when these quantities are unknown, they can be chosen via standard cross-validation procedures.

We emphasize again that the setting considered here corresponds precisely to the setting considered in \Cref{sec:causal_mat} with $\Thetast \leftarrow H$, $u_n$ the input stimulation patterns, and $z_n$ the observed neural response to input $u_n$. As such, if the causal connectivity $H$ is low rank, \Cref{alg:main_alg} and the preceding results provide a methodology to select input stimuli to most efficiently estimate $H$. In the following section, we will apply this to our photostimulation datasets.


\section{Active Learning for Estimating Neural Population Dynamics}\label{sec:experiments}

We return now to the problem of photostimulus design for learning neural population dynamics, and seek to apply the insights of \Cref{sec:active_learning} to this setting.
We present two sets of experiments. 
In Section~\ref{sec:exp_learned_sim} we use real data to fit a model of the population dynamics, treat this fitted model as a \emph{simulator} for the true dynamics, and then demonstrate that we can learn the causal connectivity matrix $H$ of this simulator faster using active inputs versus passive inputs.
Then, in Section~\ref{sec:exp_real} we split our real data into 750ms long trials of (stimulus, response) pairs (see Section~\ref{sec:prelim}) 
and demonstrate that our active learning algorithm is able to improve the performance of learning dynamical models on real data by adaptively selecting which trials to observe, training a model on the observed trials, and evaluating on a hold-out set of unseen trials. Here we find that our approach is able to learn an accurate model of the dynamics more quickly than non-adaptive approaches.




\subsection{Active Learning on Data-Driven Neural Population Dynamics Simulator}\label{sec:exp_learned_sim}

In \Cref{sec:model_fits}, we demonstrated that photostimulation data can be effectively reconstructed using an AR-$k$ dynamics model. Given the effectiveness of these models at fitting our data, in this section we treat them as a simulated representation of our true dynamics, allowing us to query them arbitrarily as a stand-in for the ground truth dynamics, and seek to determine whether carefully choosing the photostimulation pattern allows for efficient estimation of the causal connectivity matrix $H$.


\textbf{Experiment Details.}
To obtain models of the population dynamics to use for simulation, we fit an AR-$k$ model to each dataset as described in \Cref{sec:model_fits}.
In all cases we use an AR-$k$ model with order $k = 4$. We do one run of the experiments using low-rank model parameters $UV^\top$ with rank $r=15$, and then repeat the experiments using $r=35$.
In each case, we simulate $N = 10000$ trials, 
where each trial corresponds to applying a photostimulus and observing the response for $\tau=15$ timesteps, simulating our true data generation process. To simulate measurement noise and other trial-to-trial variability in neural responses, we corrupt the observations with Gaussian random noise. Motivated by the empirically observed fast decay of population dynamics in our datasets, we reset the initial state of the simulator at each new trial. 

In practice,  both the magnitude of the stimuli and number of neurons stimulated at each timestep are constrained by the photostimulation platform.
To reflect this limitation in our simulator, we  
constrain our inputs to lie in $[0,1]$, and also impose a sparsity penalty. 
Precisely, we choose the input set $\cU$ in \Cref{alg:main_alg} to be $\cU := \{ u \in [0,1]^d \ : \ \| u \|_1 \le \gamma \}$, for some value $\gamma > 0$ (which we set to $\gamma = 30$). 
While this does not explicitly constrain inputs to be sparse, it can be efficiently optimized over, and we found in practice that the optimal inputs within this constraint set are in general at least $2\gamma$-sparse.
As baseline methods, we consider the following: 
\vspace{-0.5em}
\begin{itemize}[leftmargin=*]
\item \textbf{\textit{Random Stimulation}}: At each trial $n$, choose $\gamma$ neurons at random, and set corresponding elements of $u_n$ to 1.
\item \textbf{\textit{Uniform Stimulation}}: Compute $\lamunif$ as in \Cref{alg:main_alg} and play inputs $u_n \sim \lamunif$ for all $n$.
\end{itemize}
\vspace{-0.5em}
Our goal is to estimate the causal connectivity matrix $H$ induced by our learned dynamics (see \Cref{sec:causal_mat}). In practice, we are most interested in estimating the \emph{off-diagonal} elements of $H$, as these correspond to causal interactions between different neurons. To this end, we consider the error metric $\frac{\| M \odot (H - \widehat{H}) \|_\fro}{\| M \odot H \|_{\fro}}$, for $\widehat{H}$ our estimate of $H$, $M$ a matrix with all entries $1$ except its diagonal, which is $0$, and $\odot$ element-wise multiplication.

\begin{figure}[t]
    \rotatebox{90}{ \small \hspace{0.9em} Rank = 15}
    \centering
    \subfigure{
        \includegraphics[width=0.23\textwidth]{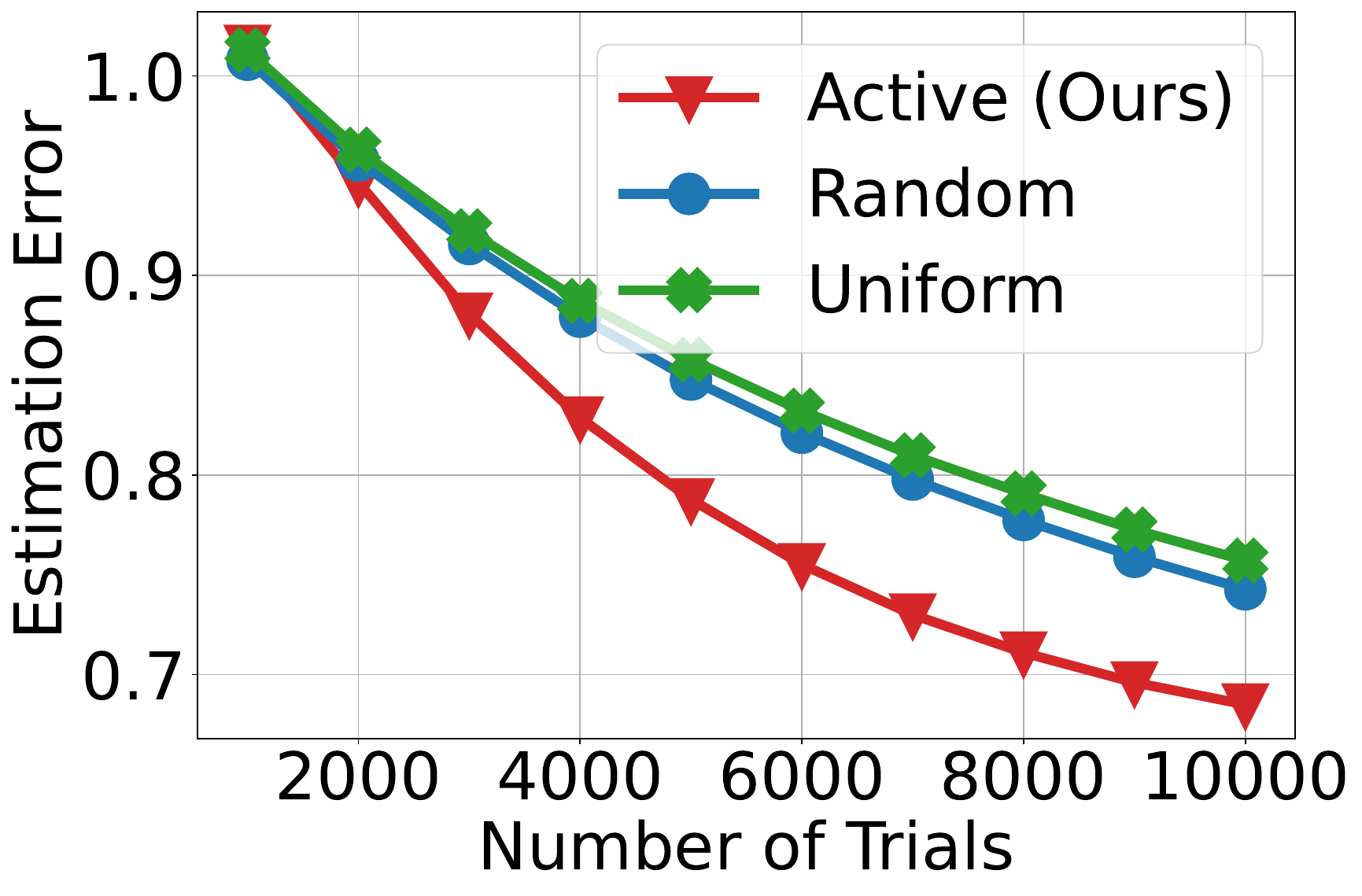}
    }
    \hspace{-0.5em}
    \subfigure{
        \includegraphics[width=0.23\textwidth]{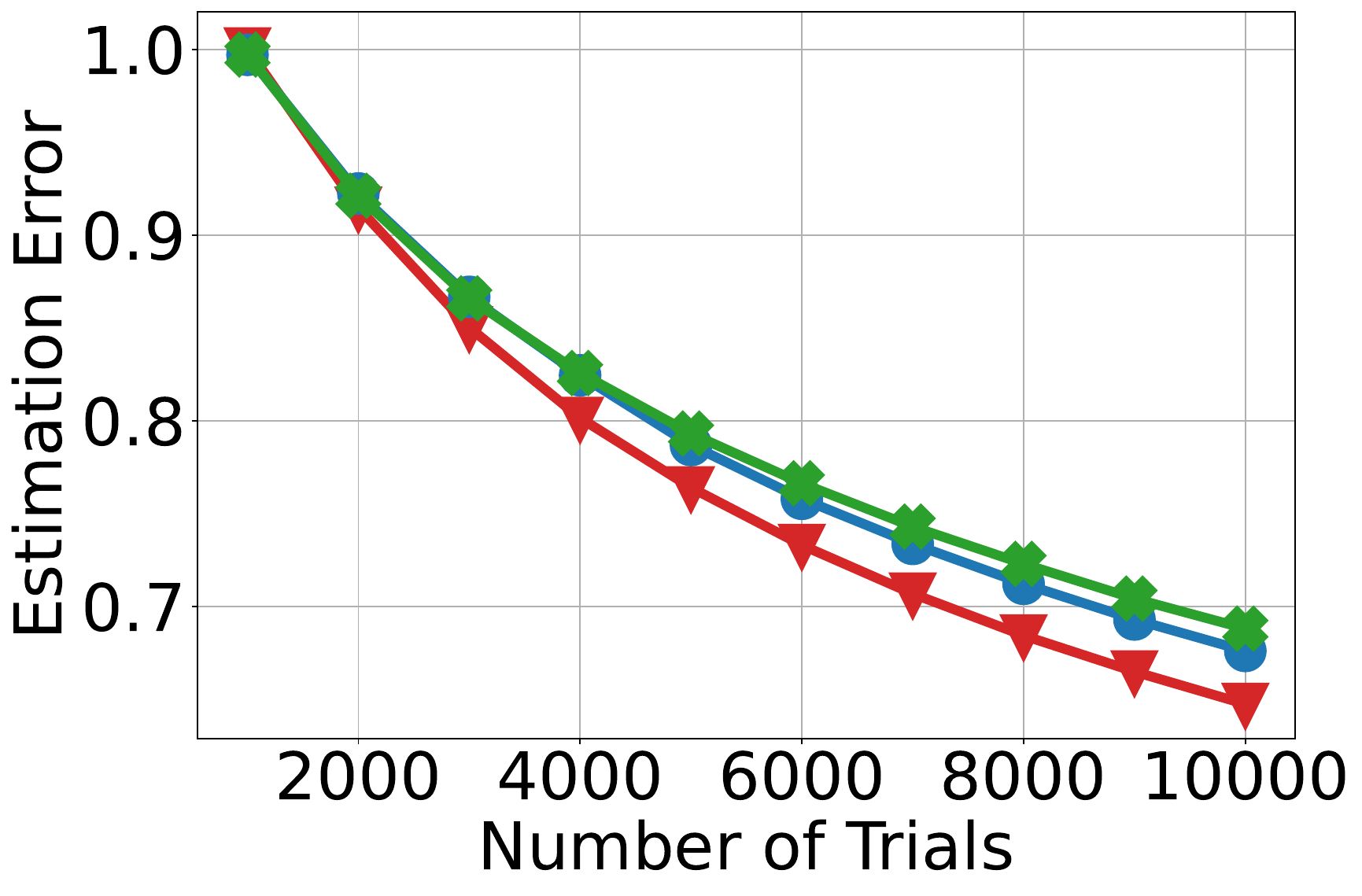}
    }
    \hspace{-0.5em}
    \subfigure{
        \includegraphics[width=0.23\textwidth]{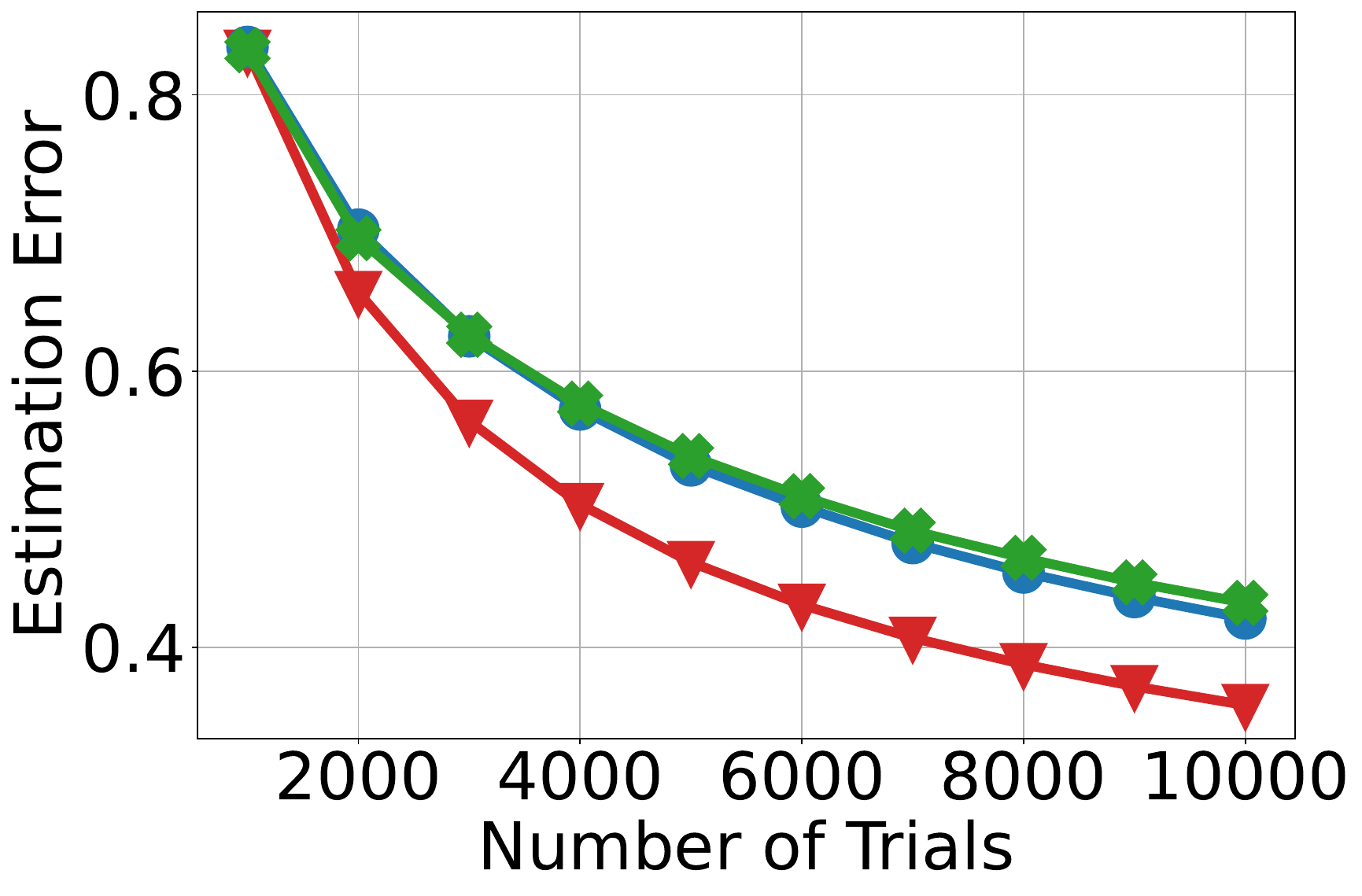}
    }
    \hspace{-0.5em}
    \subfigure{
        \includegraphics[width=0.23\textwidth]{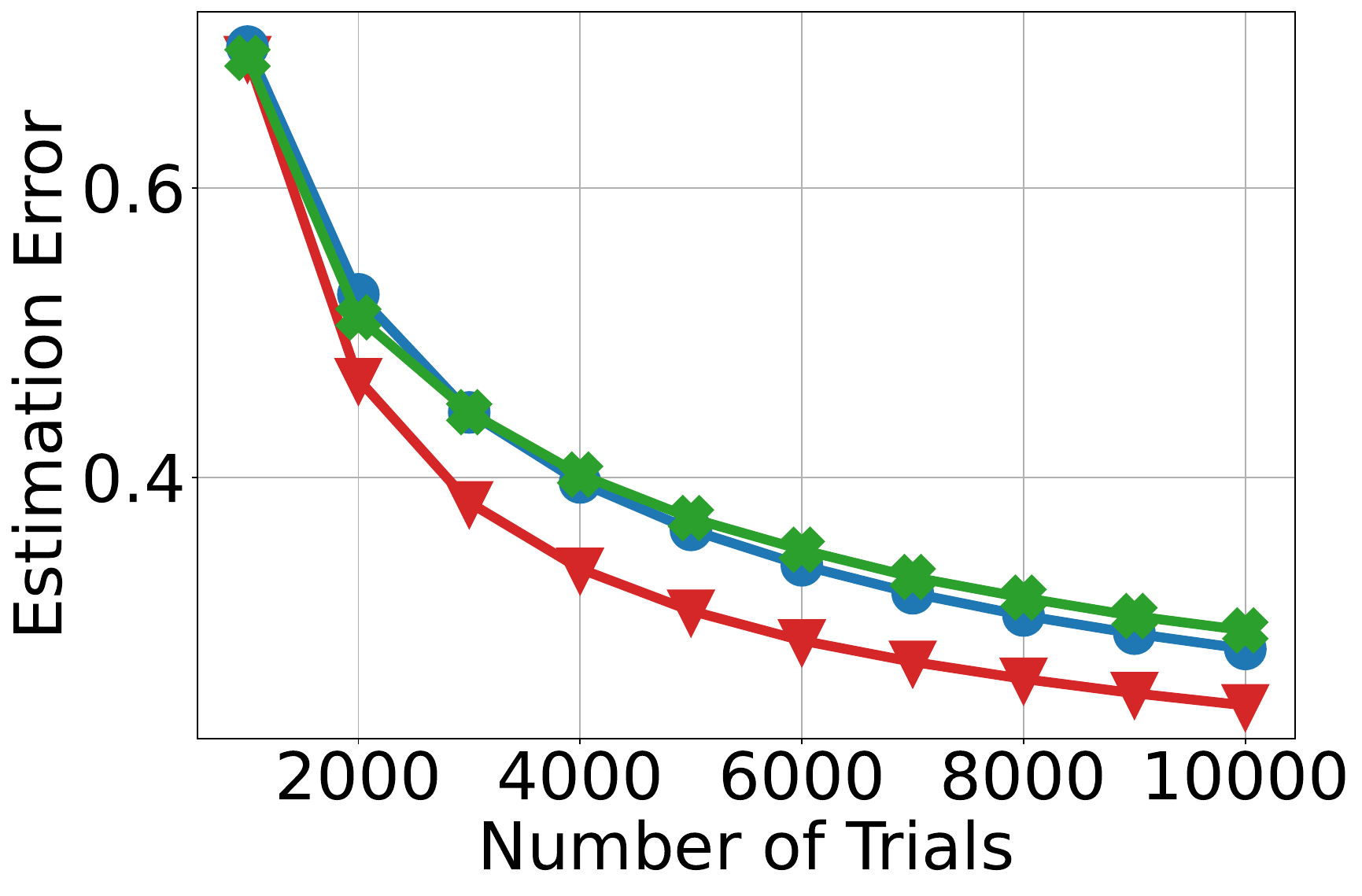}
    }
    \vspace{-2em}
\end{figure}

\setcounter{subfigure}{0}

\begin{figure}[t]
 \rotatebox{90}{ \hspace{0.9em} \small Rank = 35}
    \centering
    \subfigure[Mouse 1]{
        \includegraphics[width=0.23\textwidth]{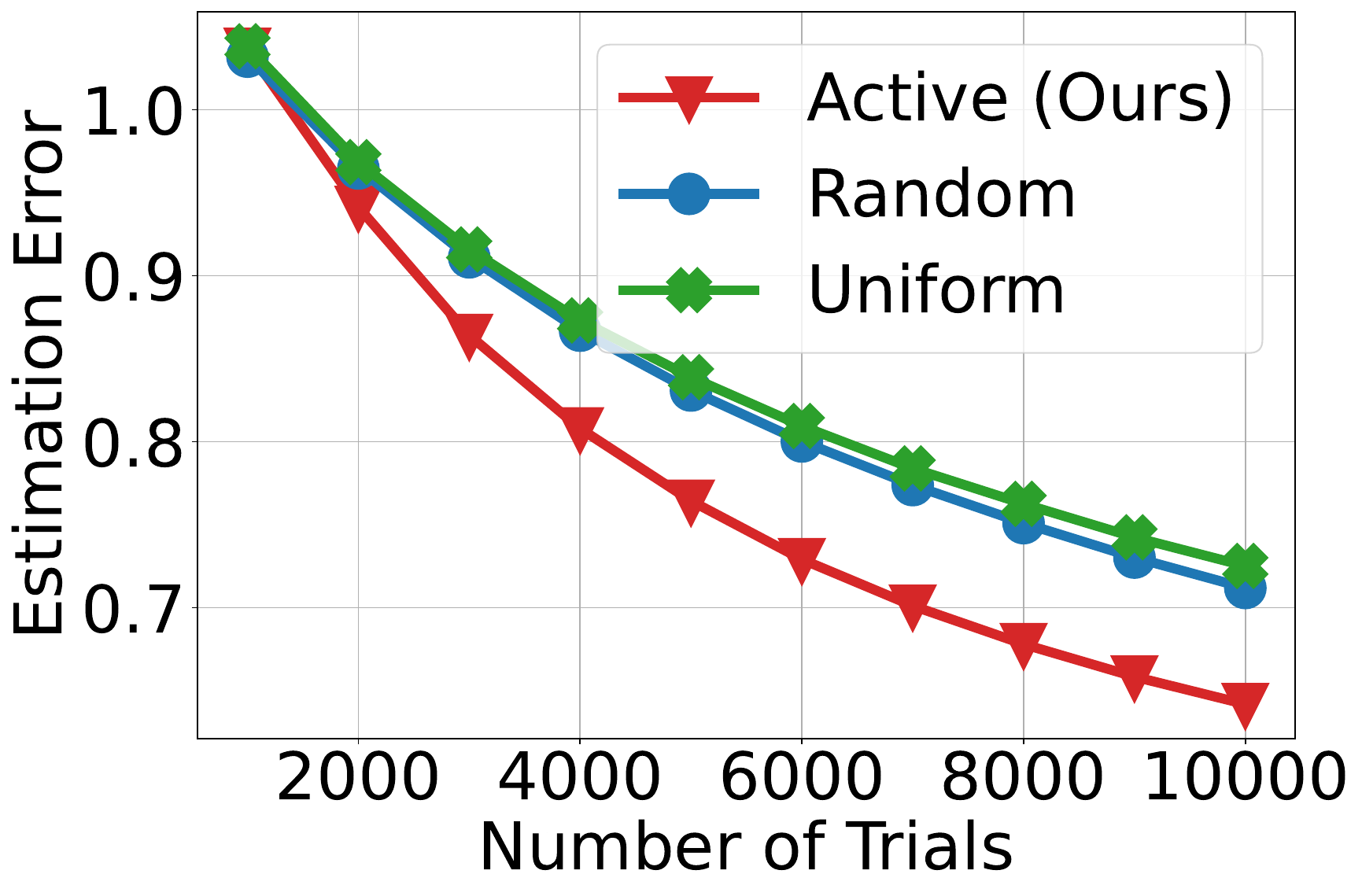}
    }
    \hspace{-0.5em}
    \subfigure[Mouse 2]{
        \includegraphics[width=0.23\textwidth]{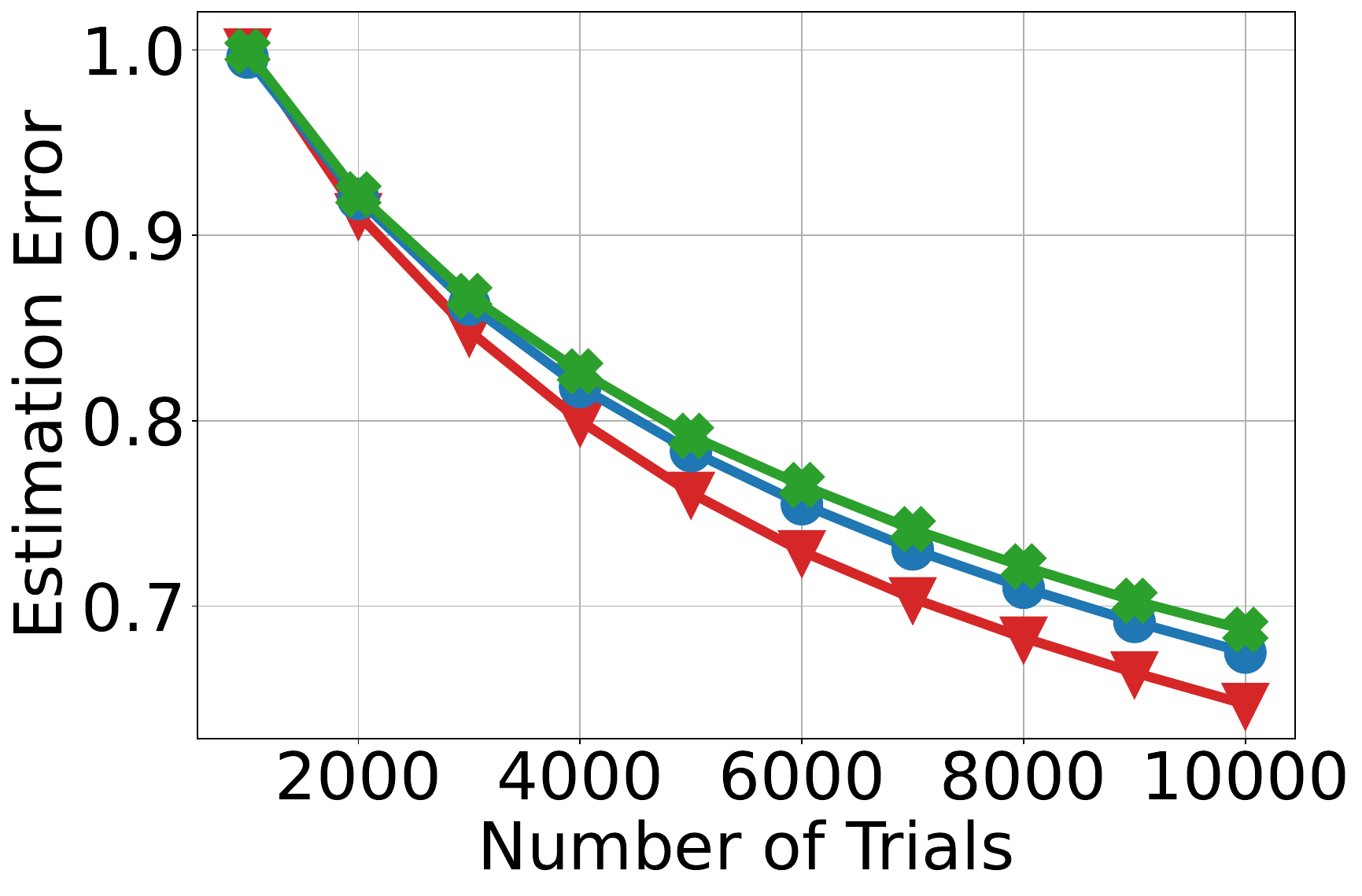}
    }
    \hspace{-0.5em}
    \subfigure[Mouse 3 (FoV A)]{
        \includegraphics[width=0.23\textwidth]{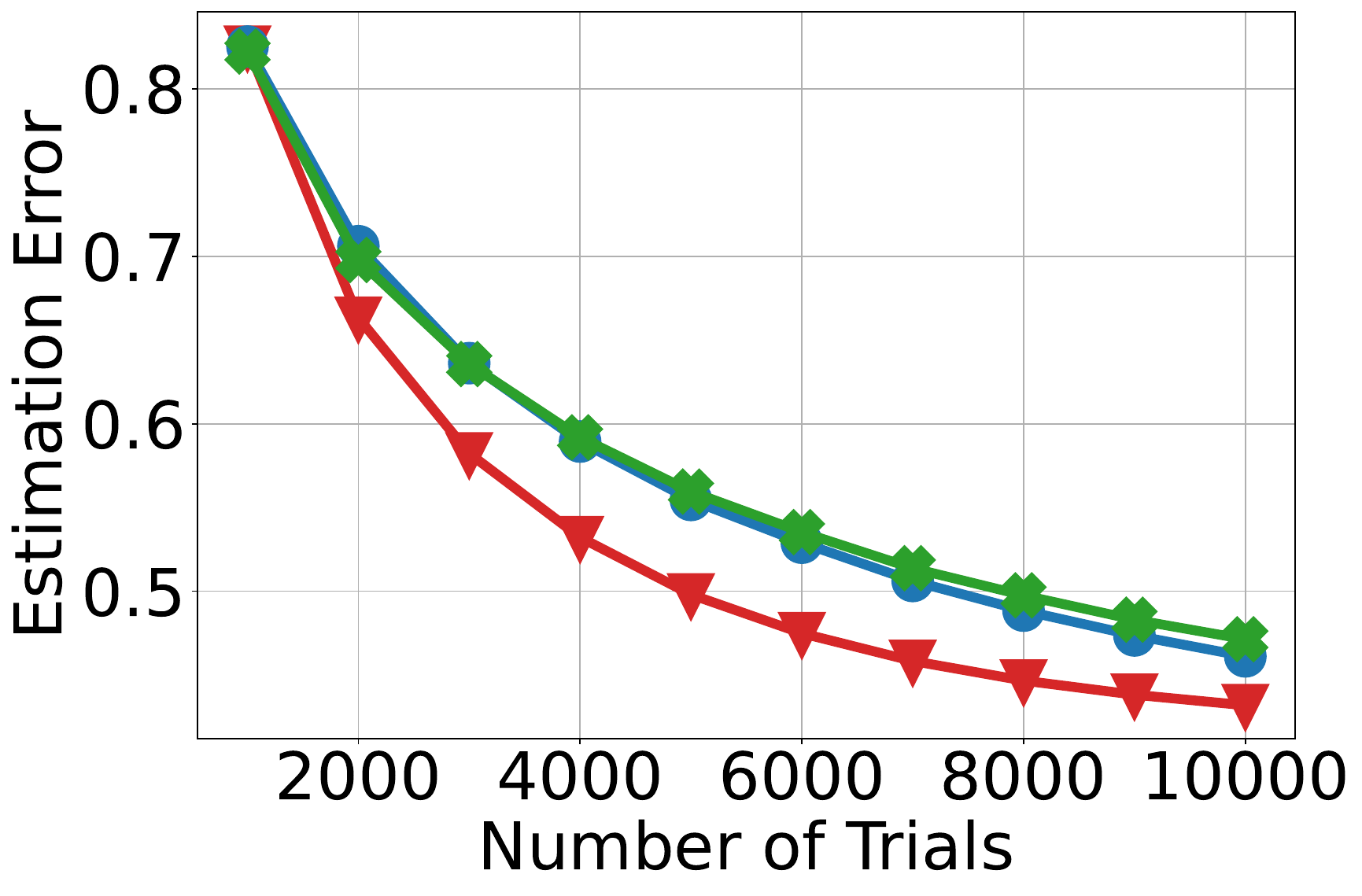}
    }
    \hspace{-0.5em}
    \subfigure[Mouse 3 (FoV B)]{
        \includegraphics[width=0.23\textwidth]{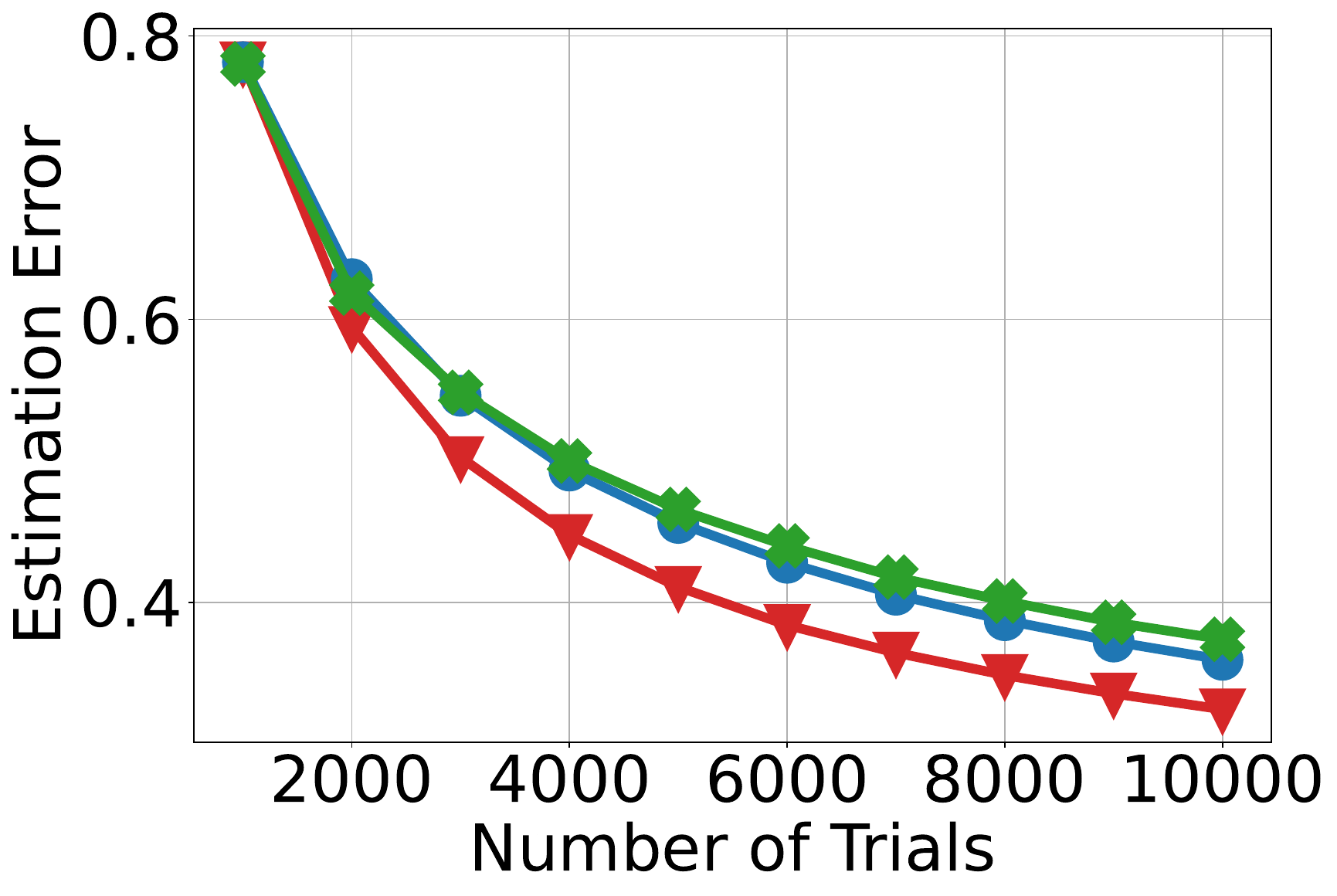}
    }
    \vspace{-0.5em}
    \caption{Performance of active stimulation design on estimating learned dynamics model. For each mouse dataset, we fit a low-rank AR-$k$ model as described in \Cref{sec:model_fits} (for ranks of 15 and 35, and $k = 4$). Treating this as a simulator of the true dynamics, we compare our active stimulation design procedure (Active, \Cref{alg:main_alg}) to randomly choosing groups of neurons to excite (Random), and uniformly allocating stimulation across all neurons (Uniform), and plot how effectively each is able to estimate the connectivity of the simulator dynamics. For each figure and method we average over 20 trials, and plot the mean performance with error bars denoting 1 standard error (note that the error bars are barely visible as the standard deviation is very small).}
    \label{fig:learned_sim_r35}
    \vspace{-1em}
\end{figure}

\textbf{Experiment Results.}
We present our results in \Cref{fig:learned_sim_r35}. As can be seen, across all learned simulators and rank levels, our active learning approach yields a non-trivial gain over both baseline approaches. In particular, on Mouse 1 and both datasets for Mouse 3, we observe a gain of between $1.5$-$2\times$ over baselines---that is, to achieve a given estimation error, our approach requires between $1.5$-$2\times$ fewer samples than baseline methods. This demonstrates the effectiveness of our active learning procedure for estimating low-rank matrices---our method is able to exploit the low-rank structure present in the underlying dynamics to speed up estimation, as compared to methods which do not take into account this structure. Furthermore, it shows that on a realistic simulation of neural population dynamics, we can effectively design stimuli to speed up the estimation of the dynamics.\loose

\newcommand{\frakDtrain}{\frakD_{\mathrm{train}}}
\newcommand{\frakDtest}{\frakD_{\mathrm{test}}}

\vspace{-0.5em}
\subsection{Active Ranking of Real Data Observations}\label{sec:exp_real}



As described in Section \ref{sec:prelim}, each of our datasets consist of roughly 2000 (stimulus, response) trials. In an online photostimulation experiment, we would choose the photostimulus actively for each trial. Here we seek to simulate this process using real experimental data, but offline, by choosing the \emph{ordering} of the trials available in our pre-collected datasets. This serves as a testbed for active learning procedures: if we can more efficiently learn models in this offline setting, that is a strong indication that we should also see gains in online experiments. Indeed, those gains may be even greater online because in our offline setting we are severely restricted to choosing from only 100 candidate stimulation patterns. Thus, we interpret the results in this section as a \emph{lower bound} on the performance we might expect online.

\begin{figure}[t]
    \rotatebox{90}{ \small \hspace{1.4em} Average}
    \centering
    \subfigure{
        \includegraphics[width=0.23\textwidth]{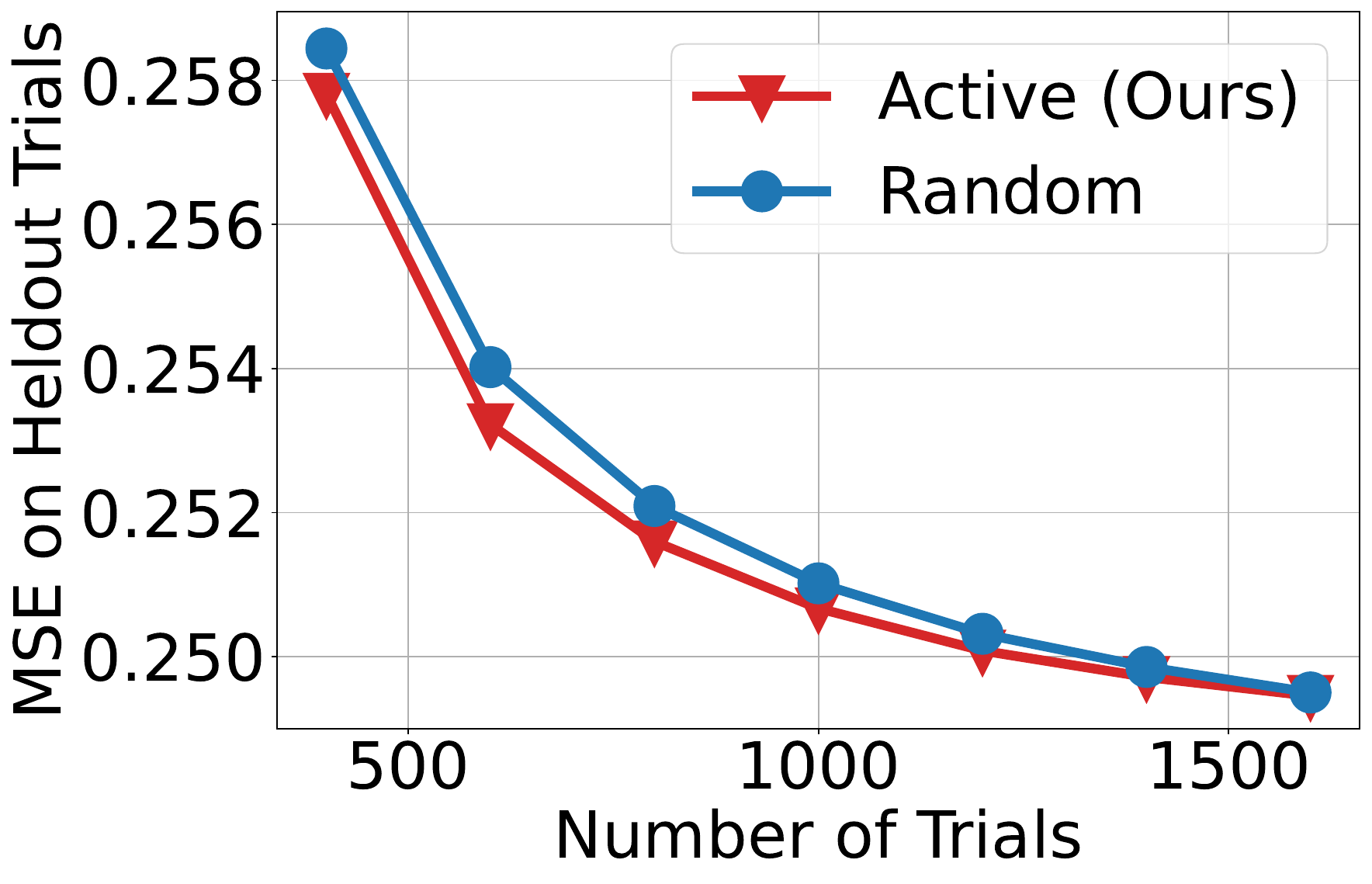}
    }
    \hspace{-0.5em}
    \subfigure{
        \includegraphics[width=0.23\textwidth]{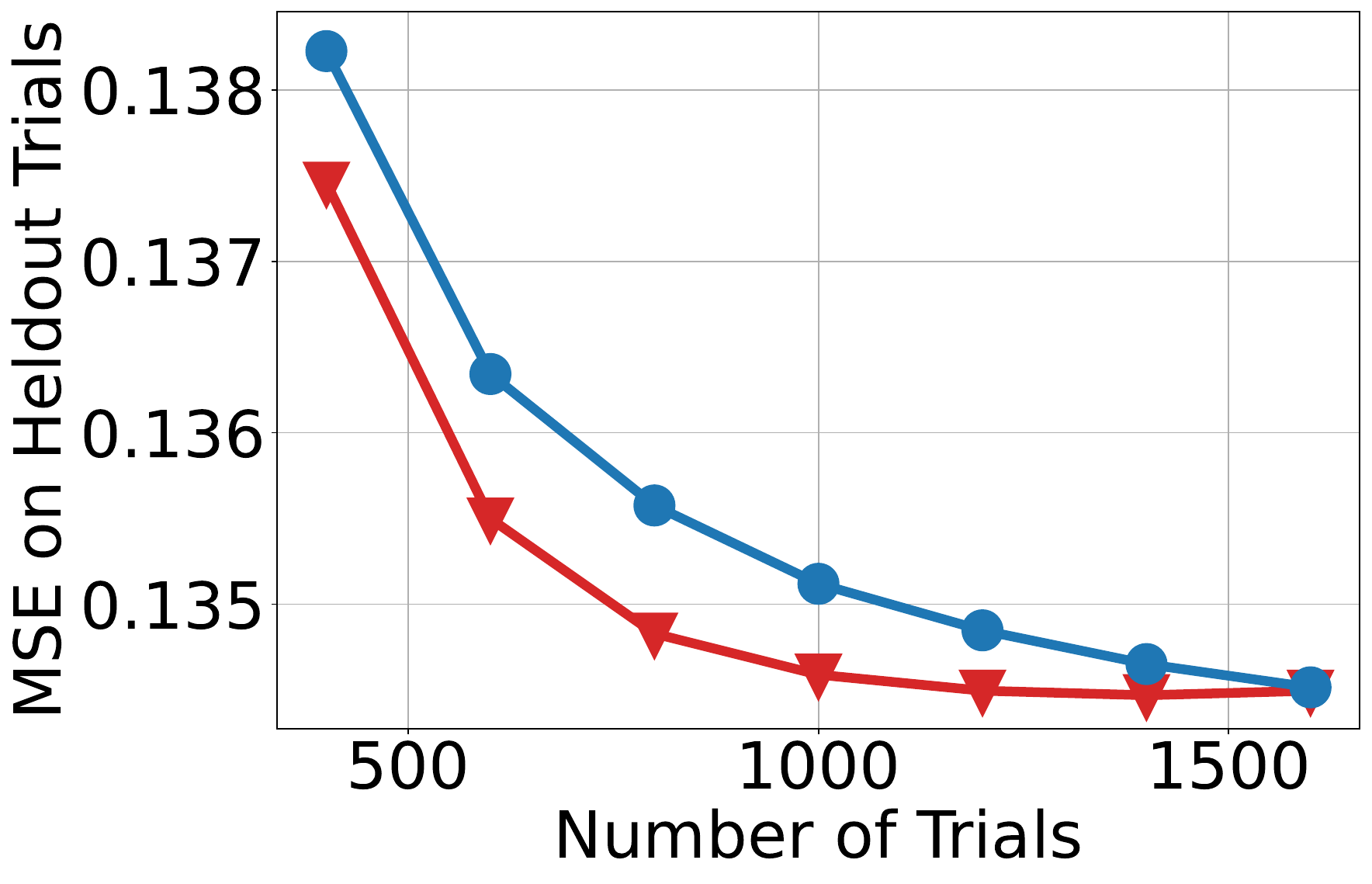}
    }
    \hspace{-0.5em}
    \subfigure{
        \includegraphics[width=0.23\textwidth]{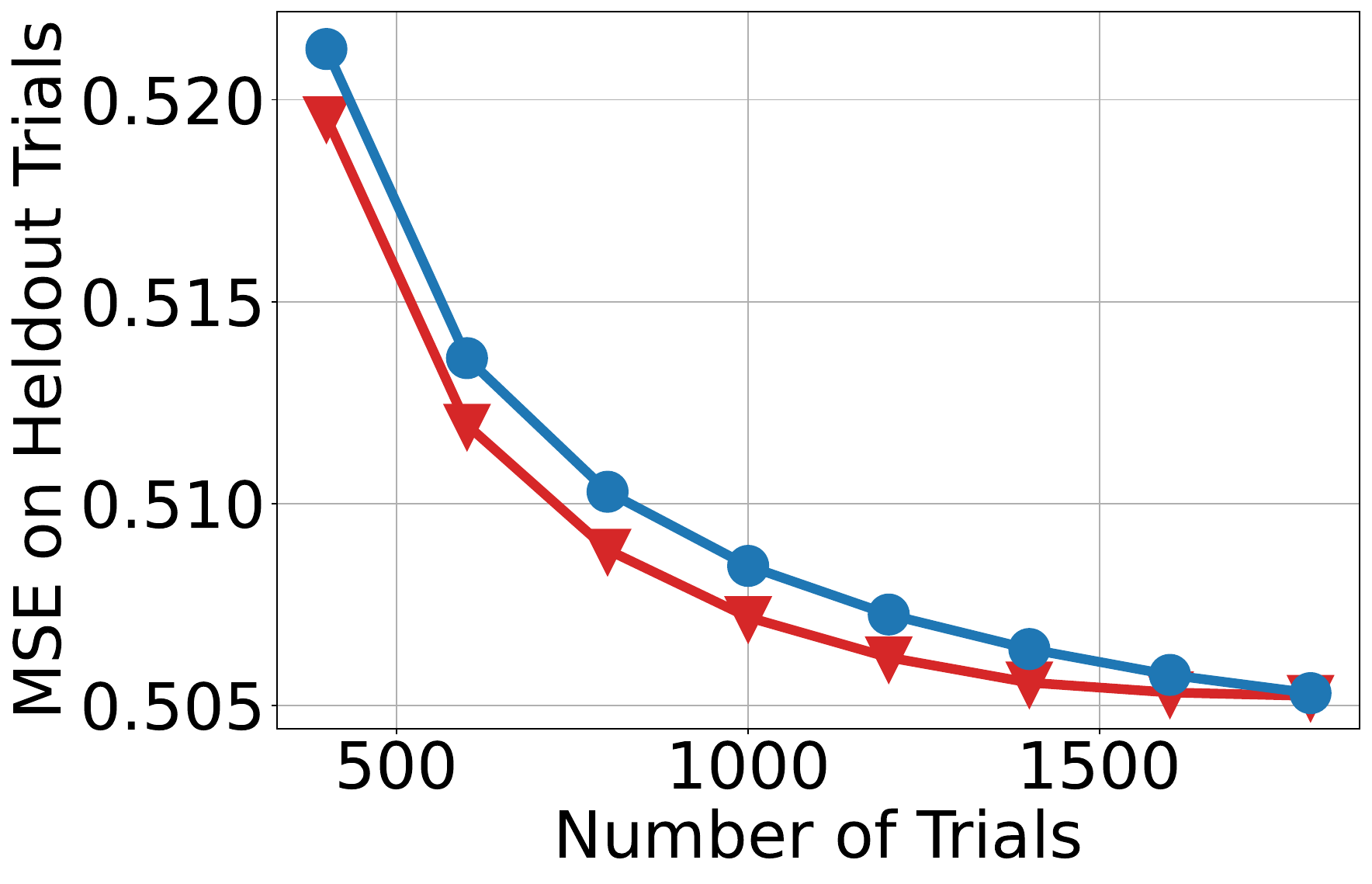}
    }
    \hspace{-0.5em}
    \subfigure{
        \includegraphics[width=0.23\textwidth]{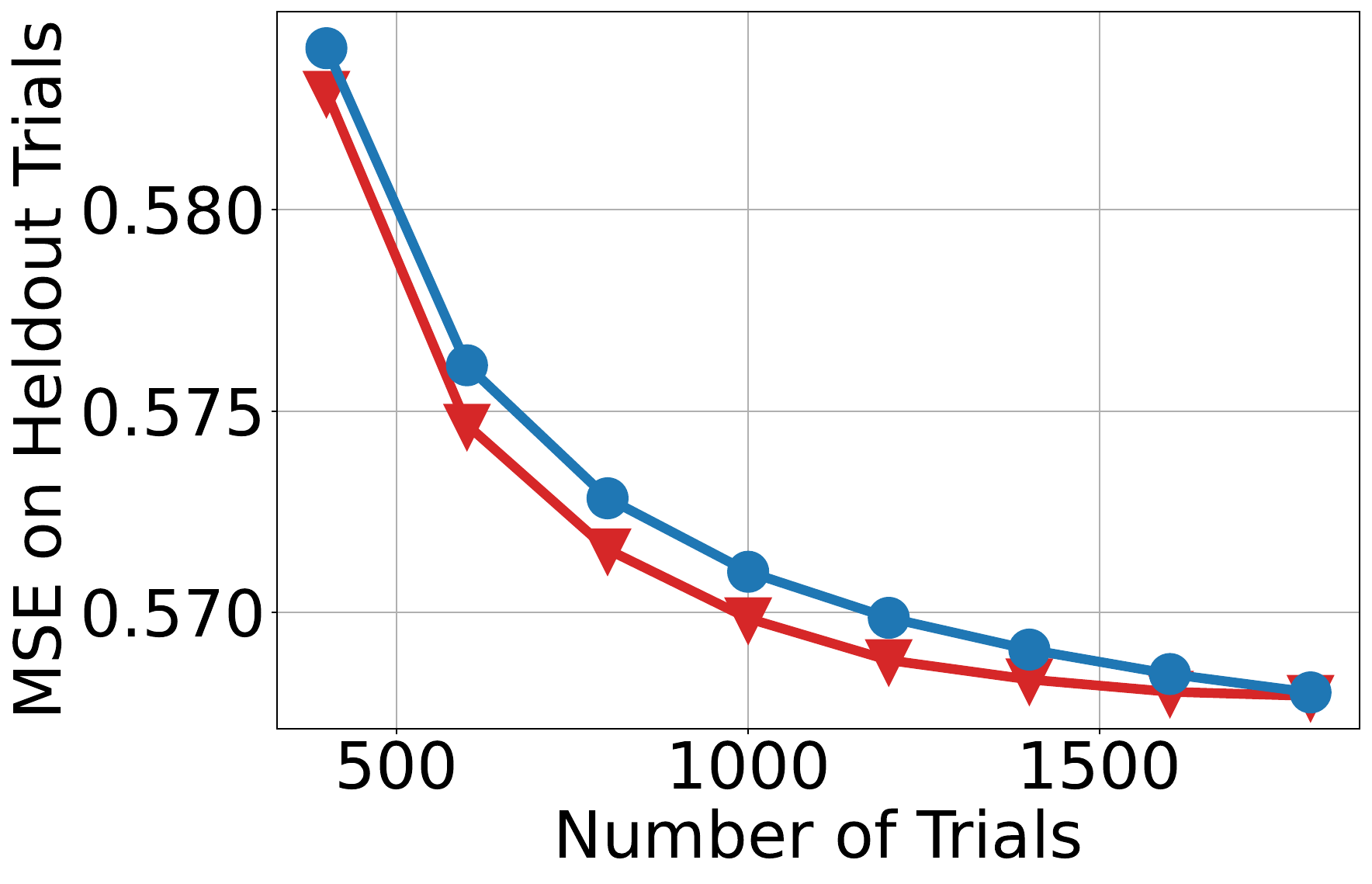}
    }
    \vspace{-2em}
\end{figure}


\begin{figure}[t!]
\rotatebox{90}{ \small \hspace{2.2em} Best}
    \centering
    \subfigure{
        \includegraphics[width=0.23\textwidth]{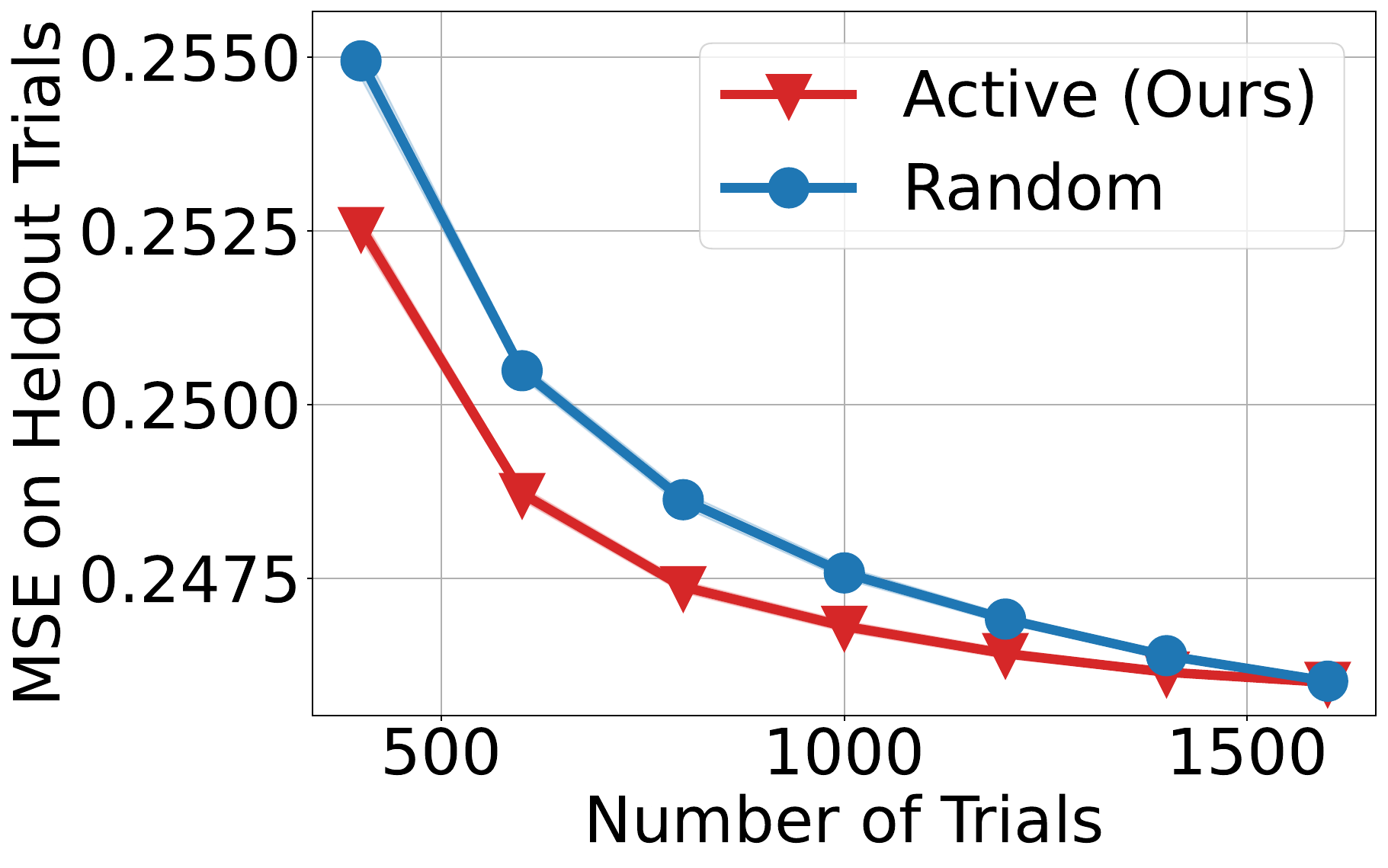}
    }
    \hspace{-0.5em}
    \subfigure{
        \includegraphics[width=0.23\textwidth]{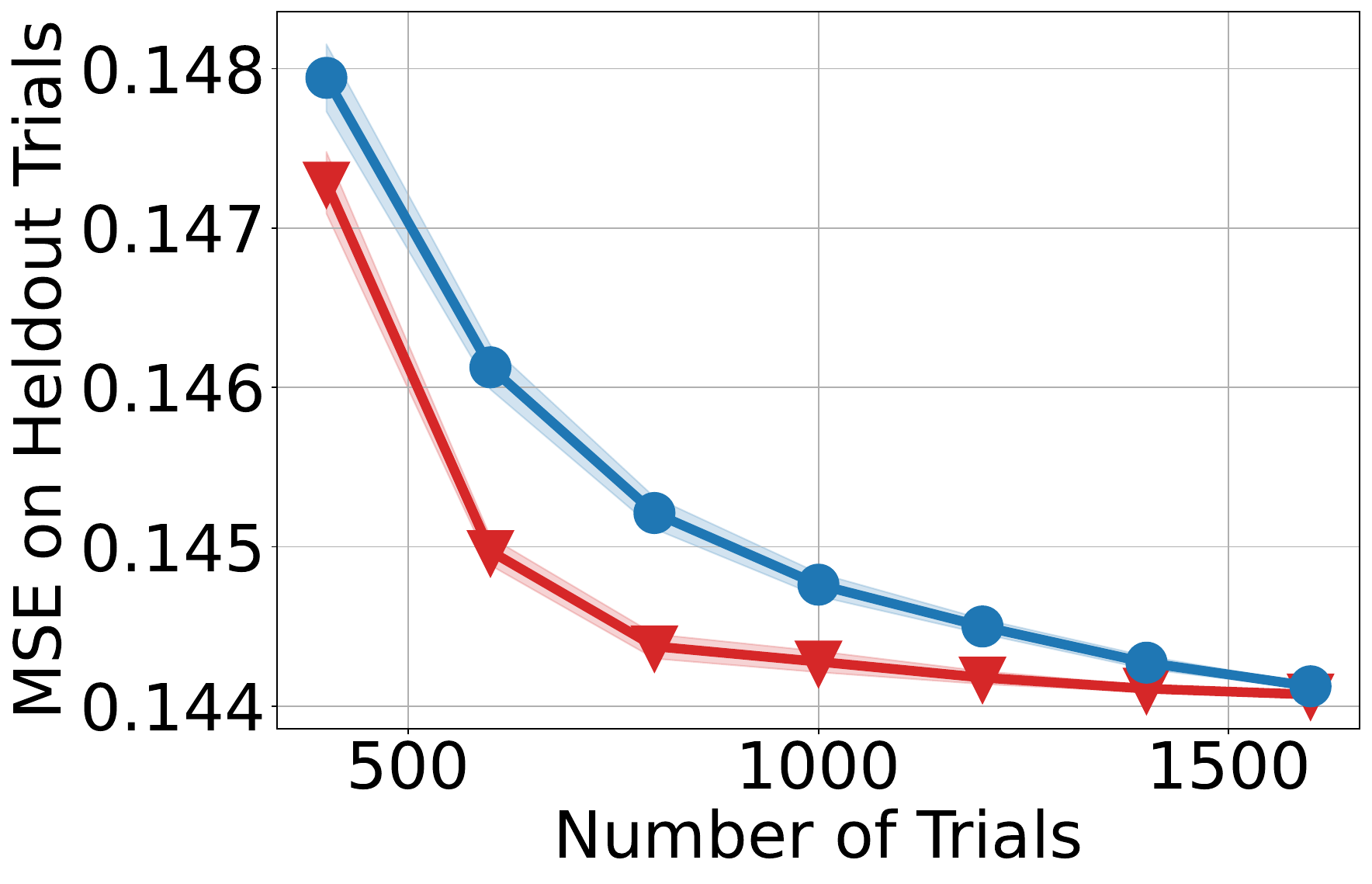}
    }
    \hspace{-0.5em}
    \subfigure{
        \includegraphics[width=0.23\textwidth]{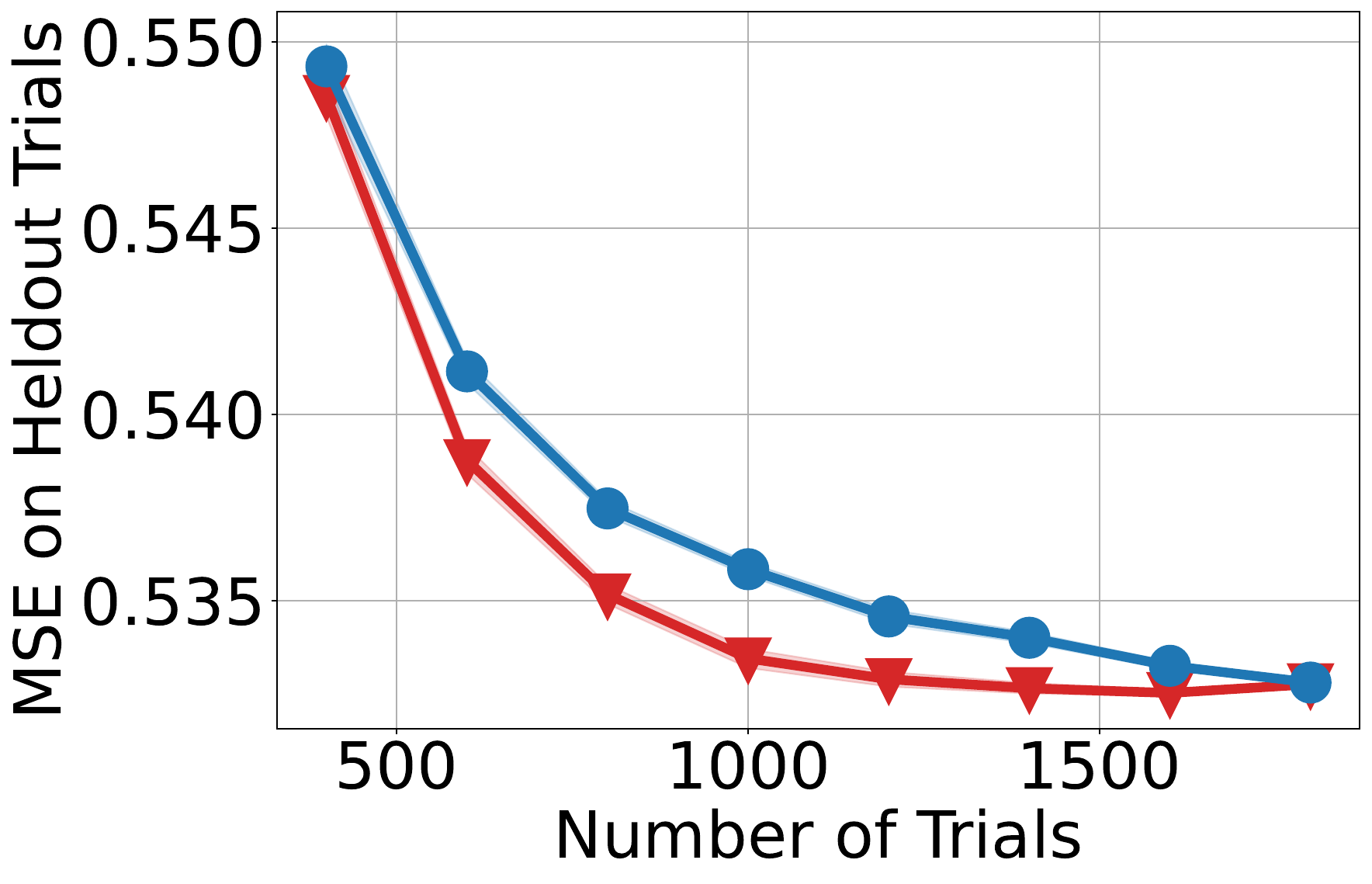}
    }
    \hspace{-0.5em}
    \subfigure{
    \includegraphics[width=0.23\textwidth]{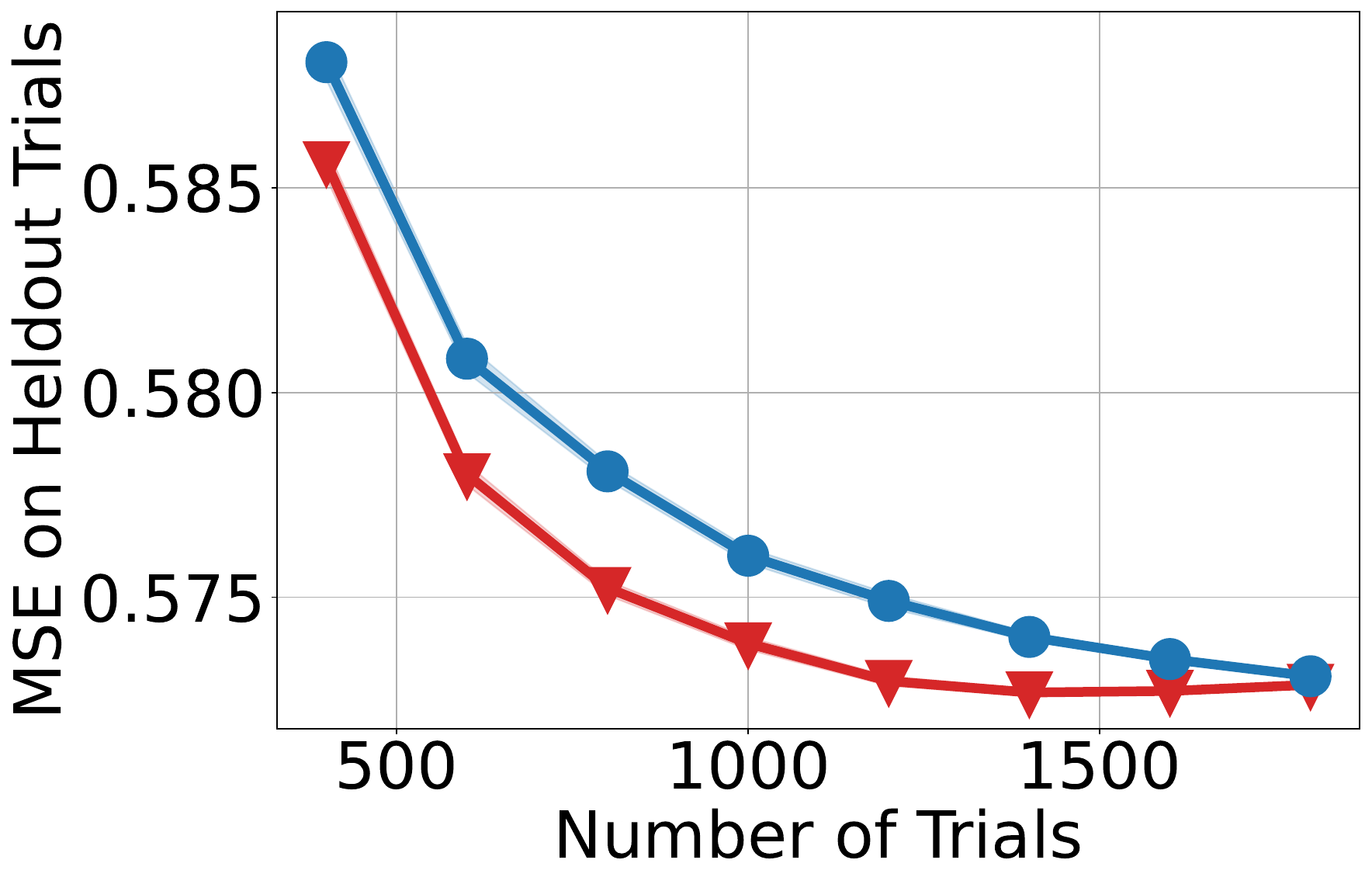}
    }
    \vspace{-2em}
\end{figure}

\setcounter{subfigure}{0}

\begin{figure}[t!]
\rotatebox{90}{ \small \hspace{2em} Worst}
    \centering
    \subfigure[Mouse 1]{
        \includegraphics[width=0.23\textwidth]{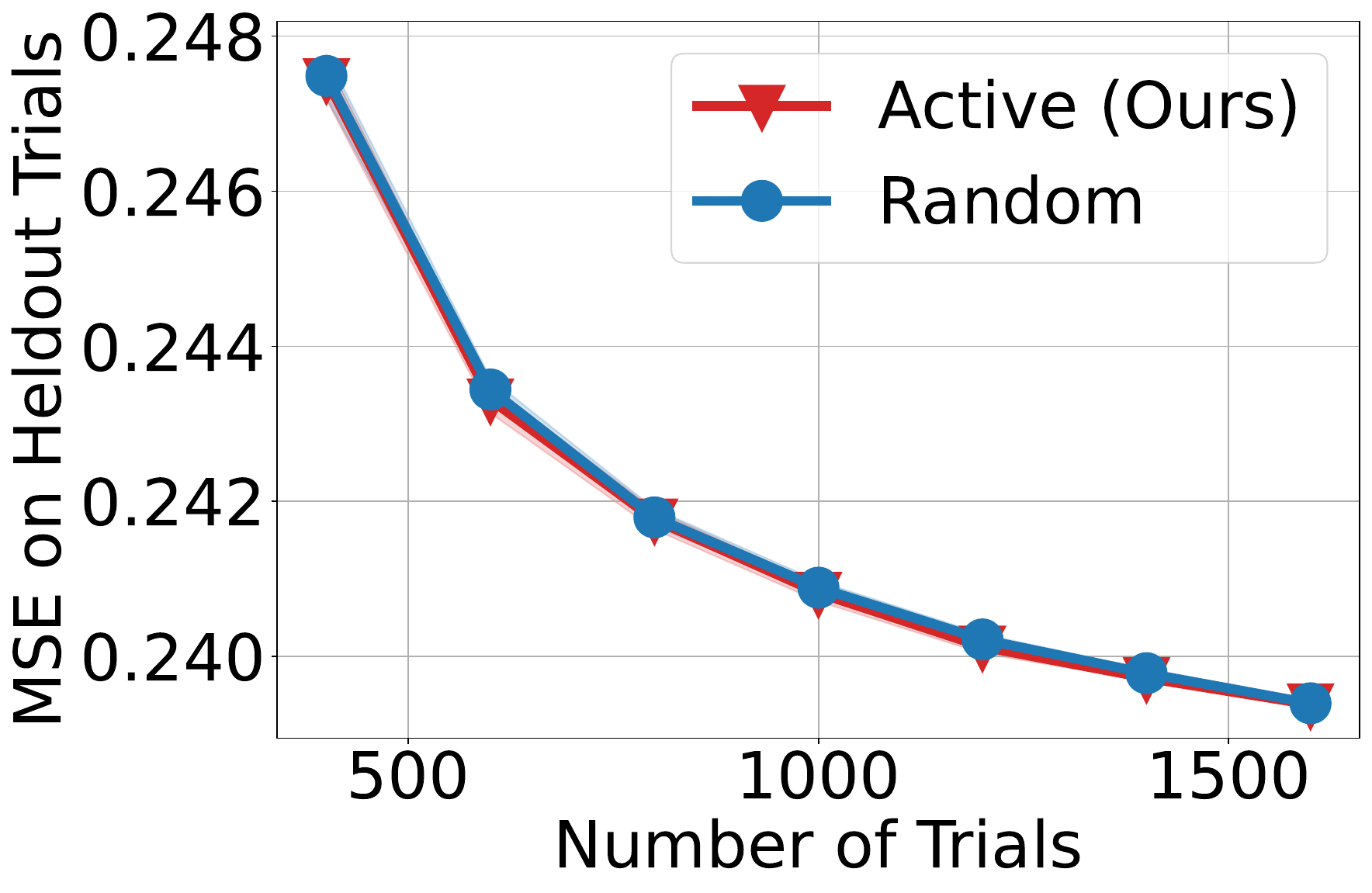}
    }
    \hspace{-0.5em}
    \subfigure[Mouse 2]{
        \includegraphics[width=0.23\textwidth]{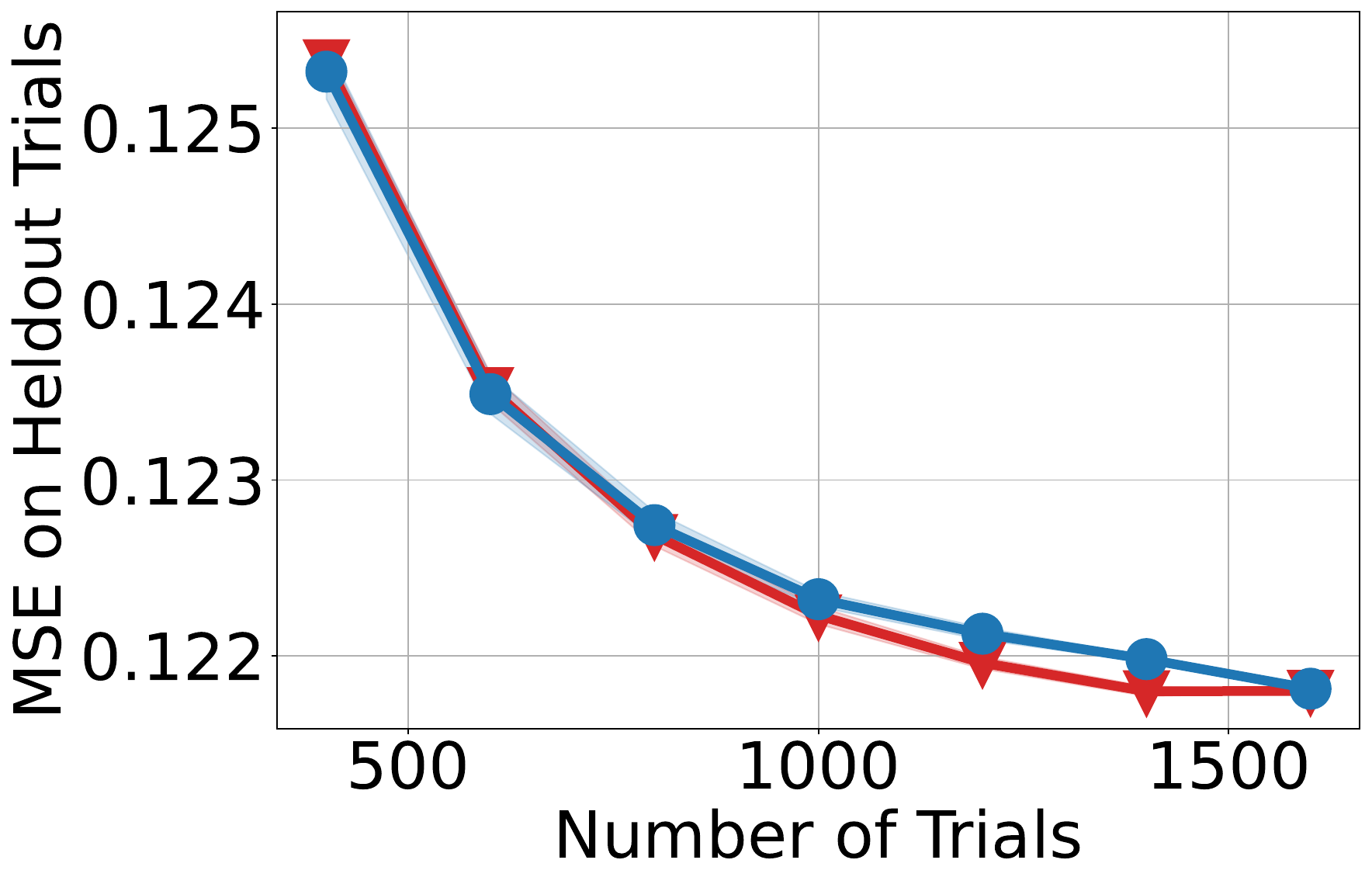}
    }
    \hspace{-0.5em}
    \subfigure[Mouse 3 (FoV A)]{
        \includegraphics[width=0.23\textwidth]{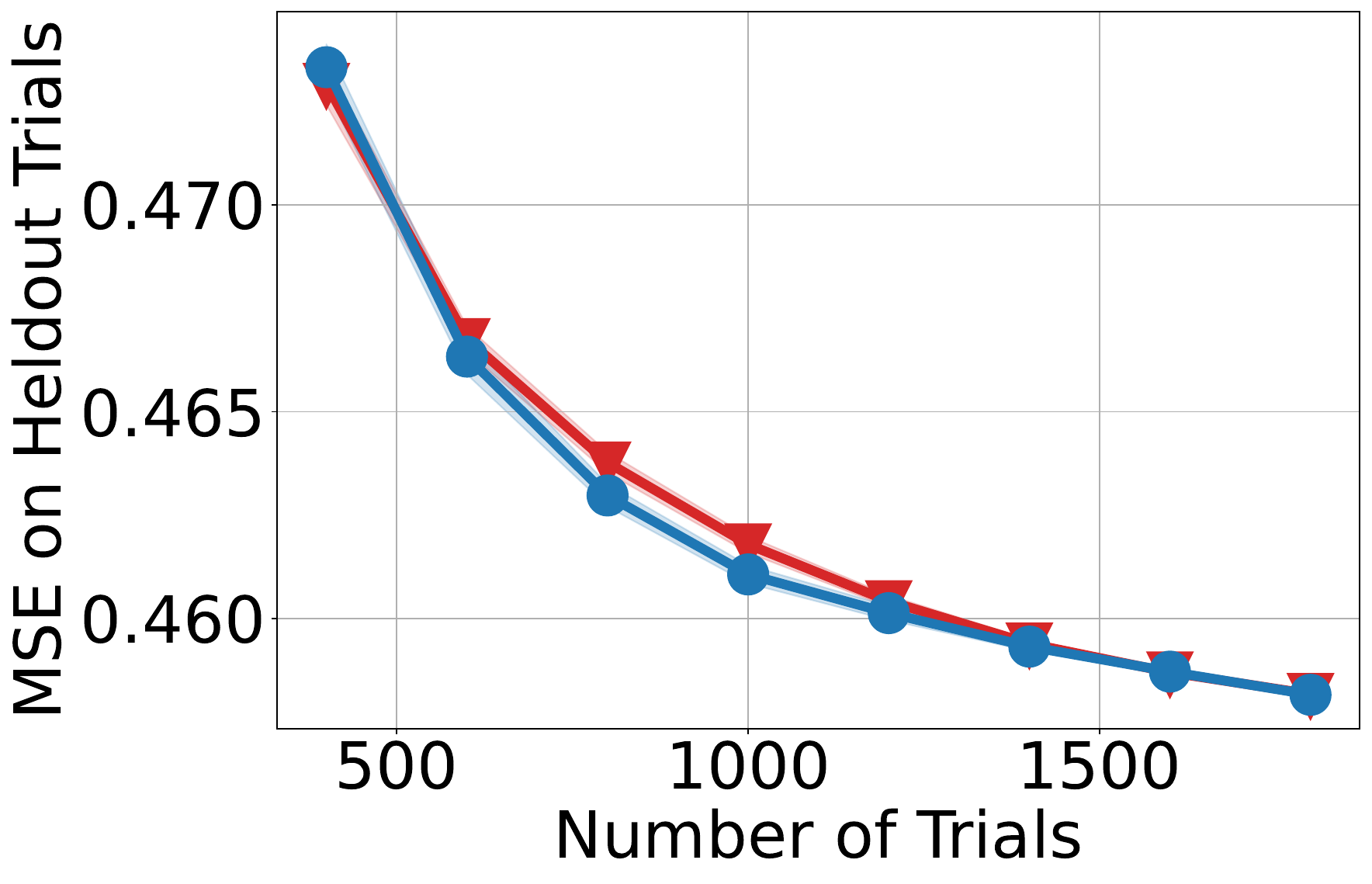}
    }
    \hspace{-0.5em}
    \subfigure[Mouse 3 (FoV B)]{
        \includegraphics[width=0.23\textwidth]{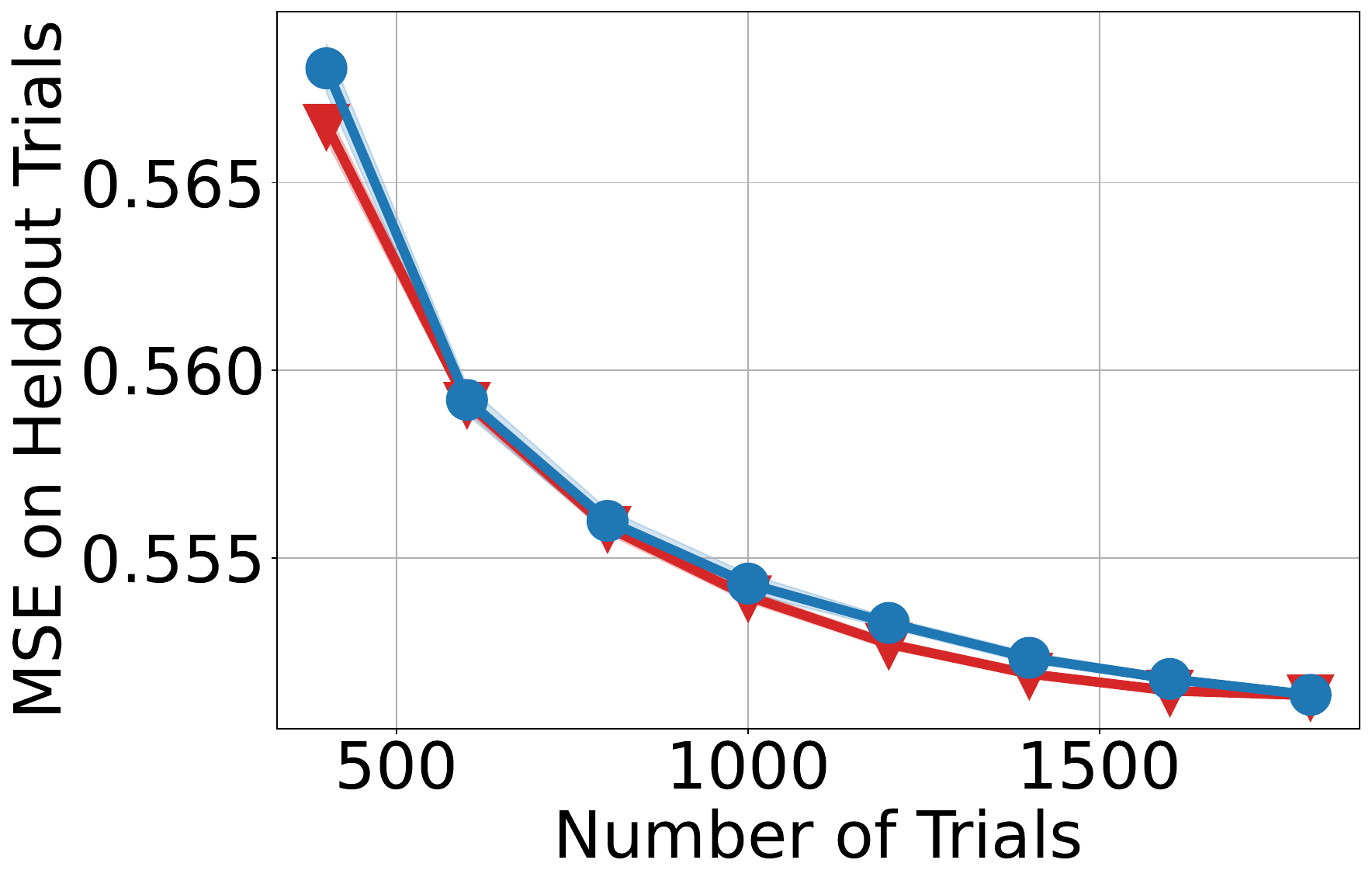}
    }
    \vspace{-0.5em}
    \caption{Performance of active learning estimating photostimulation response on held-out trials. Each mouse dataset is split into trials corresponding to a stimulus-response pair, and we consider how these trials might be ordered to obtain more effective estimates with fewer training data trials, simulating the active learning process. Our approach (Active) is motivated by the low-rank excitation criteria of \Cref{alg:main_alg} (see \Cref{sec:detail_exp2} for more details) and we compare with randomly choosing which trial to observe next (Random). We plot the accuracy of the learned model in predicting neural responses on held-out test trials. We consider 20 different train-test splits (with 20 trials per split), and include plots of average performance across these splits, as well as splits where Active has the largest and smallest improvement over Random. We plot error bars denoting 1 standard error (note again that the error bars are barely visible as the standard deviation is very small).}
    \label{fig:real_ranking_worst}

\end{figure}

To validate this approach, we randomly choose 20 (out of the 100 total) unique photostimulation patterns and set aside a test set containing all 20 repeated trials of those photostimuli. This creates an 80\%/20\% train-test split of non-overlapping stimulus patterns. For $\frakDtrain$ and $\frakDtest$ our train and test datasets, respectively, we consider the following query model:
\begin{algorithm}[h!]
\begin{algorithmic}[1]
\State $\frakD \leftarrow \emptyset$
\For{trials $n = 1,2,\ldots,|\frakDtrain|$}
\State Choose input trajectory $\uptau \in \frakDtrain$, set $\frakD \leftarrow \frakD \cup \{ \uptau \}$, $\frakDtrain \leftarrow \frakDtrain \backslash \{ \uptau \}$ \label{line:exp2_choose}
\State Estimate model using data in $\frakD$, and compute prediction MSE of model on $\frakDtest$
\EndFor
\end{algorithmic}
\end{algorithm}

We fit a dynamics model to the current set of observed trials, as described in \Cref{sec:model_fits}, and use this model to predict the response of the true system on the held-out test inputs, computing the mean-squared error of these predictions as our metric. We apply a variant of \Cref{alg:main_alg}, described in more detail in \Cref{sec:detail_exp2}, and adapted to the query model above. In particular, to apply \Cref{alg:main_alg} to learning a full dynamical system, we choose our inputs to target the right singular vectors of $B_s$ in \eqref{eqn:ARk_model}.
As a baseline method, we consider the procedure which randomly chooses an unobserved segment from $\frakDtrain$ at each iteration. 

We run the above experiment for 20 different randomly generated train-test splits on each dataset, and present our results in \Cref{fig:real_ranking_worst}, providing the results for the average performance over the train-test splits, as well as the best- and worst-case splits for active learning performance.
As these results illustrate, though active learning does not give a substantial gain in all cases, in many cases it is able to give a gain of up to a factor of $2\times$ in the number of samples required over the random baseline, and in the worst case, matches the baseline performance.
This further confirms that taking into account low-rank structure when choosing which measurements to take can improve estimation rates, and, we believe, is a strong indicator that our active learning procedure would speed up estimation of neural population dynamics in online settings.

\vspace{-0.5em}
\section{Discussion}
\vspace{-0.5em}
In this work, we have developed a principled approach to active learning of photostimulation inputs for the identification of neural population dynamics and connectivity. We discuss three limitations of our approach, which each suggest potential future directions. First, we have considered active learning of the causal connectivity matrix and minimization of prediction error, both uniformly across all recorded neurons. Future work may focus on more specific scenarios, such as targeting particular dimensions of the neural activity space or changes in connectivity due to learning. Second, while we found that linear dynamics fit our data remarkably well, this may not always be the case. Does our methodology effectively scale to nonlinear dynamics? Finally, our real-data experiments were performed offline. Future work may explore running our algorithm online during closed-loop photostimulation experiments.\loose

\newpage
\subsection*{Acknowledgments}
This work was supported by NSF DMR award 2308979 to the University of Washington Materials Science Research Center (AW \& KJ), the Shanahan Foundation Fellowship (LM \& MSB), the Paul G. Allen Foundation (MR, KS, KD \& MDG), NIH award R00-MH121533 (MDG), NSF CCF award 2007036 (KJ), and NSF CAREER award 2141511 (KJ).

\bibliographystyle{unsrtnat}
\bibliography{bibliography.bib}

\begin{thebibliography}{83}
\providecommand{\natexlab}[1]{#1}
\providecommand{\url}[1]{\texttt{#1}}
\expandafter\ifx\csname urlstyle\endcsname\relax
  \providecommand{\doi}[1]{doi: #1}\else
  \providecommand{\doi}{doi: \begingroup \urlstyle{rm}\Url}\fi

\bibitem[Vyas et~al.(2020)Vyas, Golub, Sussillo, and
  Shenoy]{vyas2020computation}
Saurabh Vyas, Matthew~D Golub, David Sussillo, and Krishna~V Shenoy.
\newblock Computation through neural population dynamics.
\newblock \emph{Annual Review of Neuroscience}, 43:\penalty0 249--275, 2020.

\bibitem[Churchland et~al.(2012)Churchland, Cunningham, Kaufman, Foster,
  Nuyujukian, Ryu, and Shenoy]{churchland2012neural}
Mark~M Churchland, John~P Cunningham, Matthew~T Kaufman, Justin~D Foster, Paul
  Nuyujukian, Stephen~I Ryu, and Krishna~V Shenoy.
\newblock Neural population dynamics during reaching.
\newblock \emph{Nature}, 487\penalty0 (7405):\penalty0 51--56, 2012.

\bibitem[Sussillo et~al.(2015)Sussillo, Churchland, Kaufman, and
  Shenoy]{sussillo2015neural}
David Sussillo, Mark~M Churchland, Matthew~T Kaufman, and Krishna~V Shenoy.
\newblock A neural network that finds a naturalistic solution for the
  production of muscle activity.
\newblock \emph{Nature Neuroscience}, 18\penalty0 (7):\penalty0 1025--1033,
  2015.

\bibitem[Li et~al.(2016)Li, Daie, Svoboda, and Druckmann]{li2016robust}
Nuo Li, Kayvon Daie, Karel Svoboda, and Shaul Druckmann.
\newblock Robust neuronal dynamics in premotor cortex during motor planning.
\newblock \emph{Nature}, 532\penalty0 (7600):\penalty0 459--464, 2016.

\bibitem[Remington et~al.(2018)Remington, Narain, Hosseini, and
  Jazayeri]{remington2018flexible}
Evan~D Remington, Devika Narain, Eghbal~A Hosseini, and Mehrdad Jazayeri.
\newblock Flexible sensorimotor computations through rapid reconfiguration of
  cortical dynamics.
\newblock \emph{Neuron}, 98\penalty0 (5):\penalty0 1005--1019, 2018.

\bibitem[Inagaki et~al.(2022)Inagaki, Chen, Ridder, Sah, Li, Yang,
  Hasanbegovic, Gao, Gerfen, and Svoboda]{inagaki2022midbrain}
Hidehiko~K Inagaki, Susu Chen, Margreet~C Ridder, Pankaj Sah, Nuo Li, Zidan
  Yang, Hana Hasanbegovic, Zhenyu Gao, Charles~R Gerfen, and Karel Svoboda.
\newblock A midbrain-thalamus-cortex circuit reorganizes cortical dynamics to
  initiate movement.
\newblock \emph{Cell}, 185\penalty0 (6):\penalty0 1065--1081, 2022.

\bibitem[Mante et~al.(2013)Mante, Sussillo, Shenoy, and
  Newsome]{mante2013context}
Valerio Mante, David Sussillo, Krishna~V Shenoy, and William~T Newsome.
\newblock Context-dependent computation by recurrent dynamics in prefrontal
  cortex.
\newblock \emph{Nature}, 503\penalty0 (7474):\penalty0 78--84, 2013.

\bibitem[Carnevale et~al.(2015)Carnevale, de~Lafuente, Romo, Barak, and
  Parga]{carnevale2015dynamic}
Federico Carnevale, Victor de~Lafuente, Ranulfo Romo, Omri Barak, and
  N{\'e}stor Parga.
\newblock Dynamic control of response criterion in premotor cortex during
  perceptual detection under temporal uncertainty.
\newblock \emph{Neuron}, 86\penalty0 (4):\penalty0 1067--1077, 2015.

\bibitem[Mohan et~al.(2021)Mohan, Zhu, and Freedman]{mohan2021interaction}
Krithika Mohan, Ou~Zhu, and David~J Freedman.
\newblock Interaction between neuronal encoding and population dynamics during
  categorization task switching in parietal cortex.
\newblock \emph{Neuron}, 109\penalty0 (4):\penalty0 700--712, 2021.

\bibitem[Finkelstein et~al.(2021)Finkelstein, Fontolan, Economo, Li, Romani,
  and Svoboda]{finkelstein2021attractor}
Arseny Finkelstein, Lorenzo Fontolan, Michael~N Economo, Nuo Li, Sandro Romani,
  and Karel Svoboda.
\newblock Attractor dynamics gate cortical information flow during
  decision-making.
\newblock \emph{Nature Neuroscience}, 24\penalty0 (6):\penalty0 843--850, 2021.

\bibitem[Chaisangmongkon et~al.(2017)Chaisangmongkon, Swaminathan, Freedman,
  and Wang]{chaisangmongkon2017computing}
Warasinee Chaisangmongkon, Sruthi~K Swaminathan, David~J Freedman, and
  Xiao-Jing Wang.
\newblock Computing by robust transience: how the fronto-parietal network
  performs sequential, category-based decisions.
\newblock \emph{Neuron}, 93\penalty0 (6):\penalty0 1504--1517, 2017.

\bibitem[Nair et~al.(2023)Nair, Karigo, Yang, Ganguli, Schnitzer, Linderman,
  Anderson, and Kennedy]{nair2023approximate}
Aditya Nair, Tomomi Karigo, Bin Yang, Surya Ganguli, Mark~J Schnitzer, Scott~W
  Linderman, David~J Anderson, and Ann Kennedy.
\newblock An approximate line attractor in the hypothalamus encodes an
  aggressive state.
\newblock \emph{Cell}, 186\penalty0 (1):\penalty0 178--193, 2023.

\bibitem[Vyas et~al.(2018)Vyas, Even-Chen, Stavisky, Ryu, Nuyujukian, and
  Shenoy]{vyas2018neural}
Saurabh Vyas, Nir Even-Chen, Sergey~D Stavisky, Stephen~I Ryu, Paul Nuyujukian,
  and Krishna~V Shenoy.
\newblock Neural population dynamics underlying motor learning transfer.
\newblock \emph{Neuron}, 97\penalty0 (5):\penalty0 1177--1186, 2018.

\bibitem[Sauerbrei et~al.(2020)Sauerbrei, Guo, Cohen, Mischiati, Guo, Kabra,
  Verma, Mensh, Branson, and Hantman]{sauerbrei2020cortical}
Britton~A Sauerbrei, Jian-Zhong Guo, Jeremy~D Cohen, Matteo Mischiati, Wendy
  Guo, Mayank Kabra, Nakul Verma, Brett Mensh, Kristin Branson, and Adam~W
  Hantman.
\newblock Cortical pattern generation during dexterous movement is
  input-driven.
\newblock \emph{Nature}, 577\penalty0 (7790):\penalty0 386--391, 2020.

\bibitem[Sun et~al.(2022)Sun, O’Shea, Golub, Trautmann, Vyas, Ryu, and
  Shenoy]{sun2022cortical}
Xulu Sun, Daniel~J O’Shea, Matthew~D Golub, Eric~M Trautmann, Saurabh Vyas,
  Stephen~I Ryu, and Krishna~V Shenoy.
\newblock Cortical preparatory activity indexes learned motor memories.
\newblock \emph{Nature}, 602\penalty0 (7896):\penalty0 274--279, 2022.

\bibitem[Oby et~al.(2024)Oby, Degenhart, Grigsby, Motiwala, McClain, Marino,
  Yu, and Batista]{oby2024dynamical}
Emily~R Oby, Alan~D Degenhart, Erinn~M Grigsby, Asma Motiwala, Nicole~T
  McClain, Patrick~J Marino, Byron Yu, and Aaron~P Batista.
\newblock Dynamical constraints on neural population activity.
\newblock \emph{bioRxiv}, pages 2024--01, 2024.

\bibitem[Yu et~al.(2009)Yu, Cunningham, Santhanam, Ryu, Shenoy, and
  Sahani]{yu2009gaussian}
Byron~M Yu, John~P Cunningham, Gopal Santhanam, Stephen~I Ryu, Krishna~V
  Shenoy, and Maneesh Sahani.
\newblock Gaussian-process factor analysis for low-dimensional single-trial
  analysis of neural population activity.
\newblock \emph{Journal of Neurophysiology}, 102\penalty0 (1):\penalty0
  614--635, 2009.

\bibitem[Macke et~al.(2011)Macke, Buesing, Cunningham, Yu, Shenoy, and
  Sahani]{macke2011empirical}
Jakob~H Macke, Lars Buesing, John~P Cunningham, Byron~M Yu, Krishna~V Shenoy,
  and Maneesh Sahani.
\newblock Empirical models of spiking in neural populations.
\newblock \emph{Advances in Neural Information Processing Systems}, 24, 2011.

\bibitem[Archer et~al.(2014)Archer, Koster, Pillow, and Macke]{archer2014low}
Evan~W Archer, Urs Koster, Jonathan~W Pillow, and Jakob~H Macke.
\newblock Low-dimensional models of neural population activity in sensory
  cortical circuits.
\newblock \emph{Advances in Neural Information Processing Systems}, 27, 2014.

\bibitem[Linderman et~al.(2016)Linderman, Adams, and
  Pillow]{linderman2016bayesian}
Scott Linderman, Ryan~P Adams, and Jonathan~W Pillow.
\newblock Bayesian latent structure discovery from multi-neuron recordings.
\newblock \emph{Advances in Neural Information Processing Systems}, 29, 2016.

\bibitem[Gao et~al.(2016)Gao, Archer, Paninski, and Cunningham]{gao2016linear}
Yuanjun Gao, Evan~W Archer, Liam Paninski, and John~P Cunningham.
\newblock Linear dynamical neural population models through nonlinear
  embeddings.
\newblock \emph{Advances in Neural Information Processing Systems}, 29, 2016.

\bibitem[Pandarinath et~al.(2018)Pandarinath, O’Shea, Collins, Jozefowicz,
  Stavisky, Kao, Trautmann, Kaufman, Ryu, Hochberg, Henderson, Shenoy, Abbott,
  and Sussillo]{pandarinath2018inferring}
Chethan Pandarinath, Daniel~J O’Shea, Jasmine Collins, Rafal Jozefowicz,
  Sergey~D Stavisky, Jonathan~C Kao, Eric~M Trautmann, Matthew~T Kaufman,
  Stephen~I Ryu, Leigh~R Hochberg, Jaimie~M Henderson, Krishna~V Shenoy,
  Larry~F Abbott, and David Sussillo.
\newblock Inferring single-trial neural population dynamics using sequential
  auto-encoders.
\newblock \emph{Nature Methods}, 15\penalty0 (10):\penalty0 805--815, 2018.

\bibitem[Glaser et~al.(2020)Glaser, Whiteway, Cunningham, Paninski, and
  Linderman]{glaser2020recurrent}
Joshua Glaser, Matthew Whiteway, John~P Cunningham, Liam Paninski, and Scott
  Linderman.
\newblock Recurrent switching dynamical systems models for multiple interacting
  neural populations.
\newblock \emph{Advances in Neural Information Processing Systems},
  33:\penalty0 14867--14878, 2020.

\bibitem[Kim et~al.(2021)Kim, Luo, Pillow, and Brody]{kim2021inferring}
Timothy~D Kim, Thomas~Z Luo, Jonathan~W Pillow, and Carlos~D Brody.
\newblock Inferring latent dynamics underlying neural population activity via
  neural differential equations.
\newblock In \emph{International Conference on Machine Learning}, pages
  5551--5561. PMLR, 2021.

\bibitem[Karniol-Tambour et~al.(2022)Karniol-Tambour, Zoltowski, Diamanti,
  Pinto, Tank, Brody, and Pillow]{karniol2022modeling}
Orren Karniol-Tambour, David~M Zoltowski, E~Mika Diamanti, Lucas Pinto, David~W
  Tank, Carlos~D Brody, and Jonathan~W Pillow.
\newblock Modeling communication and switching nonlinear dynamics in
  multi-region neural activity.
\newblock \emph{bioRxiv}, pages 2022--09, 2022.

\bibitem[O’Shea et~al.(2022)O’Shea, Duncker, Goo, Sun, Vyas, Trautmann,
  Diester, Ramakrishnan, Deisseroth, Sahani, et~al.]{oshea2022direct}
Daniel~J O’Shea, Lea Duncker, Werapong Goo, Xulu Sun, Saurabh Vyas, Eric~M
  Trautmann, Ilka Diester, Charu Ramakrishnan, Karl Deisseroth, Maneesh Sahani,
  et~al.
\newblock Direct neural perturbations reveal a dynamical mechanism for robust
  computation.
\newblock \emph{bioRxiv}, pages 2022--12, 2022.

\bibitem[Keshtkaran et~al.(2022)Keshtkaran, Sedler, Chowdhury, Tandon, Basrai,
  Nguyen, Sohn, Jazayeri, Miller, and Pandarinath]{keshtkaran2022large}
Mohammad~Reza Keshtkaran, Andrew~R Sedler, Raeed~H Chowdhury, Raghav Tandon,
  Diya Basrai, Sarah~L Nguyen, Hansem Sohn, Mehrdad Jazayeri, Lee~E Miller, and
  Chethan Pandarinath.
\newblock A large-scale neural network training framework for generalized
  estimation of single-trial population dynamics.
\newblock \emph{Nature Methods}, 19\penalty0 (12):\penalty0 1572--1577, 2022.

\bibitem[Valente et~al.(2022)Valente, Pillow, and
  Ostojic]{valente2022extracting}
Adrian Valente, Jonathan~W. Pillow, and Srdjan Ostojic.
\newblock Extracting computational mechanisms from neural data using low-rank
  {RNN}s.
\newblock In Alice~H. Oh, Alekh Agarwal, Danielle Belgrave, and Kyunghyun Cho,
  editors, \emph{Advances in Neural Information Processing Systems}, 2022.

\bibitem[Pellegrino et~al.(2023)Pellegrino, Cayco~Gajic, and
  Chadwick]{pellegrino2023low}
Arthur Pellegrino, N~Alex Cayco~Gajic, and Angus Chadwick.
\newblock Low tensor rank learning of neural dynamics.
\newblock \emph{Advances in Neural Information Processing Systems},
  36:\penalty0 11674--11702, 2023.

\bibitem[Durstewitz et~al.(2023)Durstewitz, Koppe, and
  Thurm]{durstewitz2023reconstructing}
Daniel Durstewitz, Georgia Koppe, and Max~Ingo Thurm.
\newblock Reconstructing computational system dynamics from neural data with
  recurrent neural networks.
\newblock \emph{Nature Reviews Neuroscience}, 24\penalty0 (11):\penalty0
  693--710, 2023.

\bibitem[Lee et~al.(2023)Lee, Warrington, Glaser, and
  Linderman]{lee2023switching}
Hyun~Dong Lee, Andrew Warrington, Joshua Glaser, and Scott Linderman.
\newblock Switching autoregressive low-rank tensor models.
\newblock \emph{Advances in Neural Information Processing Systems},
  36:\penalty0 57976--58010, 2023.

\bibitem[Galgali et~al.(2023)Galgali, Sahani, and Mante]{galgali2023residual}
Aniruddh~R Galgali, Maneesh Sahani, and Valerio Mante.
\newblock Residual dynamics resolves recurrent contributions to neural
  computation.
\newblock \emph{Nature Neuroscience}, 26\penalty0 (2):\penalty0 326--338, 2023.

\bibitem[Sani et~al.(2024)Sani, Pesaran, and Shanechi]{sani2024dissociative}
Omid~G Sani, Bijan Pesaran, and Maryam~M Shanechi.
\newblock Dissociative and prioritized modeling of behaviorally relevant neural
  dynamics using recurrent neural networks.
\newblock \emph{Nature Neuroscience}, pages 1--13, 2024.

\bibitem[Packer et~al.(2015)Packer, Russell, Dalgleish, and
  H{\"a}usser]{packer2015simultaneous}
Adam~M Packer, Lloyd~E Russell, Henry~WP Dalgleish, and Michael H{\"a}usser.
\newblock Simultaneous all-optical manipulation and recording of neural circuit
  activity with cellular resolution in vivo.
\newblock \emph{Nature methods}, 12\penalty0 (2):\penalty0 140--146, 2015.

\bibitem[Zhang et~al.(2018)Zhang, Russell, Packer, Gauld, and
  H{\"a}usser]{zhang2018closed}
Zihui Zhang, Lloyd~E Russell, Adam~M Packer, Oliver~M Gauld, and Michael
  H{\"a}usser.
\newblock Closed-loop all-optical interrogation of neural circuits in vivo.
\newblock \emph{Nature methods}, 15\penalty0 (12):\penalty0 1037--1040, 2018.

\bibitem[Chettih and Harvey(2019)]{chettih2019single}
Selmaan~N Chettih and Christopher~D Harvey.
\newblock Single-neuron perturbations reveal feature-specific competition in
  v1.
\newblock \emph{Nature}, 567\penalty0 (7748):\penalty0 334--340, 2019.

\bibitem[Marshel et~al.(2019)Marshel, Kim, Machado, Quirin, Benson, Kadmon,
  Raja, Chibukhchyan, Ramakrishnan, Inoue, et~al.]{marshel2019cortical}
James~H Marshel, Yoon~Seok Kim, Timothy~A Machado, Sean Quirin, Brandon Benson,
  Jonathan Kadmon, Cephra Raja, Adelaida Chibukhchyan, Charu Ramakrishnan,
  Masatoshi Inoue, et~al.
\newblock Cortical layer--specific critical dynamics triggering perception.
\newblock \emph{Science}, 365\penalty0 (6453):\penalty0 eaaw5202, 2019.

\bibitem[Carrillo-Reid et~al.(2019)Carrillo-Reid, Han, Yang, Akrouh, and
  Yuste]{carrillo2019controlling}
Luis Carrillo-Reid, Shuting Han, Weijian Yang, Alejandro Akrouh, and Rafael
  Yuste.
\newblock Controlling visually guided behavior by holographic recalling of
  cortical ensembles.
\newblock \emph{Cell}, 178\penalty0 (2):\penalty0 447--457, 2019.

\bibitem[Daie et~al.(2021)Daie, Svoboda, and Druckmann]{daie2021targeted}
Kayvon Daie, Karel Svoboda, and Shaul Druckmann.
\newblock Targeted photostimulation uncovers circuit motifs supporting
  short-term memory.
\newblock \emph{Nature Neuroscience}, 24\penalty0 (2):\penalty0 259--265, 2021.

\bibitem[Adesnik and Abdeladim(2021)]{adesnik2021probing}
Hillel Adesnik and Lamiae Abdeladim.
\newblock Probing neural codes with two-photon holographic optogenetics.
\newblock \emph{Nature Neuroscience}, 24\penalty0 (10):\penalty0 1356--1366,
  2021.

\bibitem[Triplett et~al.(2024)Triplett, Gajowa, Adesnik, and
  Paninski]{triplett2024bayesian}
Marcus Triplett, Marta Gajowa, Hillel Adesnik, and Liam Paninski.
\newblock Bayesian target optimisation for high-precision holographic
  optogenetics.
\newblock \emph{Advances in Neural Information Processing Systems}, 36, 2024.

\bibitem[Baker et~al.(2016)Baker, Elyada, Parra, and Bolton]{baker2016cellular}
Christopher~A Baker, Yishai~M Elyada, Andres Parra, and M~McLean Bolton.
\newblock Cellular resolution circuit mapping with temporal-focused excitation
  of soma-targeted channelrhodopsin.
\newblock \emph{Elife}, 5:\penalty0 e14193, 2016.

\bibitem[Aitchison et~al.(2017)Aitchison, Russell, Packer, Yan, Castonguay,
  Hausser, and Turaga]{aitchison2017model}
Laurence Aitchison, Lloyd Russell, Adam~M Packer, Jinyao Yan, Philippe
  Castonguay, Michael Hausser, and Srinivas~C Turaga.
\newblock Model-based bayesian inference of neural activity and connectivity
  from all-optical interrogation of a neural circuit.
\newblock \emph{Advances in Neural Information Processing Systems}, 30, 2017.

\bibitem[Draelos and Pearson(2020)]{draelos2020online}
Anne Draelos and John Pearson.
\newblock Online neural connectivity estimation with noisy group testing.
\newblock \emph{Advances in Neural Information Processing Systems},
  33:\penalty0 7437--7448, 2020.

\bibitem[Hage et~al.(2019)Hage, Bosma-Moody, Baker, Kratz, Campagnola, Jarsky,
  Zeng, and Murphy]{hage2019distribution}
Travis~A Hage, Alice Bosma-Moody, Christopher~A Baker, Megan~B Kratz, Luke
  Campagnola, Tim Jarsky, Hongkui Zeng, and Gabe~J Murphy.
\newblock Distribution and strength of interlaminar synaptic connectivity in
  mouse primary visual cortex revealed by two-photon optogenetic stimulation.
\newblock \emph{BioRxiv}, pages 2019--12, 2019.

\bibitem[Finkelstein et~al.(2023)Finkelstein, Daie, Rozsa, Darshan, and
  Svoboda]{finkelstein2023connectivity}
Arseny Finkelstein, Kayvon Daie, Marton Rozsa, Ran Darshan, and Karel Svoboda.
\newblock Connectivity underlying motor cortex activity during naturalistic
  goal-directed behavior.
\newblock \emph{bioRxiv}, pages 2023--11, 2023.

\bibitem[Churchland and Shenoy(2007)]{churchland2007delay}
Mark~M Churchland and Krishna~V Shenoy.
\newblock Delay of movement caused by disruption of cortical preparatory
  activity.
\newblock \emph{Journal of neurophysiology}, 97\penalty0 (1):\penalty0
  348--359, 2007.

\bibitem[Shah et~al.(2019)Shah, Madugula, Hottowy, Sher, Litke, Paninski, and
  Chichilnisky]{shah2019efficient}
Nishal Shah, Sasidhar Madugula, Pawel Hottowy, Alexander Sher, Alan Litke, Liam
  Paninski, and EJ~Chichilnisky.
\newblock Efficient characterization of electrically evoked responses for
  neural interfaces.
\newblock \emph{Advances in Neural Information Processing Systems}, 32, 2019.

\bibitem[Yang et~al.(2021)Yang, Qiao, Sani, Sedillo, Ferrentino, Pesaran, and
  Shanechi]{yang2021modelling}
Yuxiao Yang, Shaoyu Qiao, Omid~G Sani, J~Isaac Sedillo, Breonna Ferrentino,
  Bijan Pesaran, and Maryam~M Shanechi.
\newblock Modelling and prediction of the dynamic responses of large-scale
  brain networks during direct electrical stimulation.
\newblock \emph{Nature biomedical engineering}, 5\penalty0 (4):\penalty0
  324--345, 2021.

\bibitem[Candes and Recht(2012)]{candes2012exact}
Emmanuel Candes and Benjamin Recht.
\newblock Exact matrix completion via convex optimization.
\newblock \emph{Communications of the ACM}, 55\penalty0 (6):\penalty0 111--119,
  2012.

\bibitem[Vershynin(2018)]{vershynin2018high}
Roman Vershynin.
\newblock \emph{High-dimensional probability: An introduction with applications
  in data science}, volume~47.
\newblock Cambridge university press, 2018.

\bibitem[Wainwright(2019)]{wainwright2019high}
Martin~J Wainwright.
\newblock \emph{High-dimensional statistics: A non-asymptotic viewpoint},
  volume~48.
\newblock Cambridge university press, 2019.

\bibitem[Ljung(1998)]{ljung1998system}
Lennart Ljung.
\newblock \emph{System identification}.
\newblock Springer, 1998.

\bibitem[Mehra(1974)]{mehra1974optimal}
Raman Mehra.
\newblock Optimal input signals for parameter estimation in dynamic
  systems--survey and new results.
\newblock \emph{IEEE Transactions on Automatic Control}, 19\penalty0
  (6):\penalty0 753--768, 1974.

\bibitem[Gerencs{\'e}r and Hjalmarsson(2005)]{gerencser2005adaptive}
L{\'a}szl{\'o} Gerencs{\'e}r and H{\aa}kan Hjalmarsson.
\newblock Adaptive input design in system identification.
\newblock In \emph{Proceedings of the 44th IEEE Conference on Decision and
  Control}, pages 4988--4993. IEEE, 2005.

\bibitem[Katselis et~al.(2012)Katselis, Rojas, Hjalmarsson, and
  Bengtsson]{katselis2012application}
Dimitrios Katselis, Cristian~R Rojas, H{\aa}kan Hjalmarsson, and Mats
  Bengtsson.
\newblock Application-oriented finite sample experiment design: A semidefinite
  relaxation approach.
\newblock \emph{IFAC Proceedings Volumes}, 45\penalty0 (16):\penalty0
  1635--1640, 2012.

\bibitem[Manchester(2010)]{manchester2010input}
Ian~R Manchester.
\newblock Input design for system identification via convex relaxation.
\newblock In \emph{49th IEEE Conference on Decision and Control (CDC)}, pages
  2041--2046. IEEE, 2010.

\bibitem[Rojas et~al.(2007)Rojas, Welsh, Goodwin, and Feuer]{rojas2007robust}
Cristian~R Rojas, James~S Welsh, Graham~C Goodwin, and Arie Feuer.
\newblock Robust optimal experiment design for system identification.
\newblock \emph{Automatica}, 43\penalty0 (6):\penalty0 993--1008, 2007.

\bibitem[Goodwin and Payne(1977)]{goodwin1977dynamic}
Graham~Clifford Goodwin and Robert~L Payne.
\newblock \emph{Dynamic system identification: experiment design and data
  analysis}.
\newblock Academic press, 1977.

\bibitem[Lindqvist and Hjalmarsson(2001)]{lindqvist2001identification}
Kristian Lindqvist and H{\aa}kan Hjalmarsson.
\newblock Identification for control: Adaptive input design using convex
  optimization.
\newblock In \emph{Proceedings of the 40th IEEE Conference on Decision and
  Control (Cat. No. 01CH37228)}, volume~5, pages 4326--4331. IEEE, 2001.

\bibitem[Gerencs{\'e}r et~al.(2007)Gerencs{\'e}r, M{\aa}rtensson, and
  Hjalmarsson]{gerencser2007adaptive}
L{\'a}szl{\'o} Gerencs{\'e}r, Jonas M{\aa}rtensson, and H{\aa}kan Hjalmarsson.
\newblock Adaptive input design for arx systems.
\newblock In \emph{2007 European Control Conference (ECC)}, pages 5707--5714.
  IEEE, 2007.

\bibitem[Wagenmaker and Jamieson(2020)]{wagenmaker2020active}
Andrew Wagenmaker and Kevin Jamieson.
\newblock Active learning for identification of linear dynamical systems.
\newblock In \emph{Conference on Learning Theory}, pages 3487--3582. PMLR,
  2020.

\bibitem[Wagenmaker et~al.(2021)Wagenmaker, Simchowitz, and
  Jamieson]{wagenmaker2021task}
Andrew~J Wagenmaker, Max Simchowitz, and Kevin Jamieson.
\newblock Task-optimal exploration in linear dynamical systems.
\newblock In \emph{International Conference on Machine Learning}, pages
  10641--10652. PMLR, 2021.

\bibitem[Mania et~al.(2022)Mania, Jordan, and Recht]{mania2020active}
Horia Mania, Michael~I Jordan, and Benjamin Recht.
\newblock Active learning for nonlinear system identification with guarantees.
\newblock \emph{J. Mach. Learn. Res.}, 23:\penalty0 32--1, 2022.

\bibitem[Wagenmaker et~al.(2024)Wagenmaker, Shi, and
  Jamieson]{wagenmaker2024optimal}
Andrew Wagenmaker, Guanya Shi, and Kevin~G Jamieson.
\newblock Optimal exploration for model-based rl in nonlinear systems.
\newblock \emph{Advances in Neural Information Processing Systems}, 36, 2024.

\bibitem[Jha et~al.(2024)Jha, Ashwood, and Pillow]{jha2024active}
Aditi Jha, Zoe~C Ashwood, and Jonathan~W Pillow.
\newblock Active learning for discrete latent variable models.
\newblock \emph{Neural Computation}, 36\penalty0 (3):\penalty0 437--474, 2024.

\bibitem[Katariya et~al.(2017)Katariya, Kveton, Szepesvari, Vernade, and
  Wen]{katariya2017stochastic}
Sumeet Katariya, Branislav Kveton, Csaba Szepesvari, Claire Vernade, and Zheng
  Wen.
\newblock Stochastic rank-1 bandits.
\newblock In \emph{Artificial Intelligence and Statistics}, pages 392--401.
  PMLR, 2017.

\bibitem[Jun et~al.(2019)Jun, Willett, Wright, and Nowak]{jun2019bilinear}
Kwang-Sung Jun, Rebecca Willett, Stephen Wright, and Robert Nowak.
\newblock Bilinear bandits with low-rank structure.
\newblock In \emph{International Conference on Machine Learning}, pages
  3163--3172. PMLR, 2019.

\bibitem[Lu et~al.(2021)Lu, Meisami, and Tewari]{lu2021low}
Yangyi Lu, Amirhossein Meisami, and Ambuj Tewari.
\newblock Low-rank generalized linear bandit problems.
\newblock In \emph{International Conference on Artificial Intelligence and
  Statistics}, pages 460--468. PMLR, 2021.

\bibitem[Kang et~al.(2022)Kang, Hsieh, and Lee]{kang2022efficient}
Yue Kang, Cho-Jui Hsieh, and Thomas Chun~Man Lee.
\newblock Efficient frameworks for generalized low-rank matrix bandit problems.
\newblock \emph{Advances in Neural Information Processing Systems},
  35:\penalty0 19971--19983, 2022.

\bibitem[Arias-Castro et~al.(2012)Arias-Castro, Candes, and
  Davenport]{arias2012fundamental}
Ery Arias-Castro, Emmanuel~J Candes, and Mark~A Davenport.
\newblock On the fundamental limits of adaptive sensing.
\newblock \emph{IEEE Transactions on Information Theory}, 59\penalty0
  (1):\penalty0 472--481, 2012.

\bibitem[Cunningham and Yu(2014)]{cunningham2014dimensionality}
John~P Cunningham and Byron~M Yu.
\newblock Dimensionality reduction for large-scale neural recordings.
\newblock \emph{Nature Neuroscience}, 17\penalty0 (11):\penalty0 1500--1509,
  2014.

\bibitem[Sadtler et~al.(2014)Sadtler, Quick, Golub, Chase, Ryu, Tyler-Kabara,
  Yu, and Batista]{sadtler2014neural}
Patrick~T Sadtler, Kristin~M Quick, Matthew~D Golub, Steven~M Chase, Stephen~I
  Ryu, Elizabeth~C Tyler-Kabara, Byron~M Yu, and Aaron~P Batista.
\newblock Neural constraints on learning.
\newblock \emph{Nature}, 512:\penalty0 423--426, 2014.

\bibitem[Kato et~al.(2015)Kato, Kaplan, Schr{\"o}del, Skora, Lindsay, Yemini,
  Lockery, and Zimmer]{kato2015global}
Saul Kato, Harris~S Kaplan, Tina Schr{\"o}del, Susanne Skora, Theodore~H
  Lindsay, Eviatar Yemini, Shawn Lockery, and Manuel Zimmer.
\newblock Global brain dynamics embed the motor command sequence of
  caenorhabditis elegans.
\newblock \emph{Cell}, 163\penalty0 (3):\penalty0 656--669, 2015.

\bibitem[Golub et~al.(2018)Golub, Sadtler, Oby, Quick, Ryu, Tyler-Kabara,
  Batista, Chase, and Yu]{golub2018learning}
Matthew~D Golub, Patrick~T Sadtler, Emily~R Oby, Kristin~M Quick, Stephen~I
  Ryu, Elizabeth~C Tyler-Kabara, Aaron~P Batista, Steven~M Chase, and Byron~M
  Yu.
\newblock Learning by neural reassociation.
\newblock \emph{Nature Neuroscience}, 21\penalty0 (4):\penalty0 607--616, 2018.

\bibitem[Gallego et~al.(2018)Gallego, Perich, Naufel, Ethier, Solla, and
  Miller]{gallego2018cortical}
Juan~A Gallego, Matthew~G Perich, Stephanie~N Naufel, Christian Ethier, Sara~A
  Solla, and Lee~E Miller.
\newblock Cortical population activity within a preserved neural manifold
  underlies multiple motor behaviors.
\newblock \emph{Nature Communications}, 9\penalty0 (1):\penalty0 4233, 2018.

\bibitem[Chaudhuri et~al.(2019)Chaudhuri, Ger{\c{c}}ek, Pandey, Peyrache, and
  Fiete]{chaudhuri2019intrinsic}
Rishidev Chaudhuri, Berk Ger{\c{c}}ek, Biraj Pandey, Adrien Peyrache, and Ila
  Fiete.
\newblock The intrinsic attractor manifold and population dynamics of a
  canonical cognitive circuit across waking and sleep.
\newblock \emph{Nature Neuroscience}, 22\penalty0 (9):\penalty0 1512--1520,
  2019.

\bibitem[Tang and Nehorai(2011)]{tang2011lower}
Gongguo Tang and Arye Nehorai.
\newblock Lower bounds on the mean-squared error of low-rank matrix
  reconstruction.
\newblock \emph{IEEE Transactions on Signal Processing}, 59\penalty0
  (10):\penalty0 4559--4571, 2011.

\bibitem[Pukelsheim(2006)]{pukelsheim2006optimal}
Friedrich Pukelsheim.
\newblock \emph{Optimal design of experiments}.
\newblock SIAM, 2006.

\bibitem[Frank and Wolfe(1956)]{frank1956algorithm}
Marguerite Frank and Philip Wolfe.
\newblock An algorithm for quadratic programming.
\newblock \emph{Naval research logistics quarterly}, 3\penalty0 (1-2):\penalty0
  95--110, 1956.

\bibitem[Fazel(2002)]{fazel2002matrix}
Maryam Fazel.
\newblock \emph{Matrix rank minimization with applications}.
\newblock PhD thesis, PhD thesis, Stanford University, 2002.

\bibitem[Pachitariu et~al.(2016)Pachitariu, Stringer, Schr{\"o}der, Dipoppa,
  Rossi, Carandini, and Harris]{pachitariu2016suite2p}
Marius Pachitariu, Carsen Stringer, Sylvia Schr{\"o}der, Mario Dipoppa,
  L~Federico Rossi, Matteo Carandini, and Kenneth~D Harris.
\newblock Suite2p: beyond 10,000 neurons with standard two-photon microscopy.
\newblock \emph{BioRxiv}, page 061507, 2016.

\bibitem[Kingma(2014)]{kingma2014adam}
Diederik~P Kingma.
\newblock Adam: A method for stochastic optimization.
\newblock \emph{arXiv preprint arXiv:1412.6980}, 2014.

\end{thebibliography}

\newpage
\appendix


\section{Proof of Theorem~\ref{thm:nuc_norm_result}}

\noindent 
As $\widehat{\Theta}$ is a minimizer, we have that $\| \Cov(\widehat{\Theta}) - z \|_{2}^2 \leq \| \Cov(\Theta_\star) - z \|_{2}^2$.
Consequently,
\begin{align*}
    \| \Cov({\Theta}_\star) - z \|_{2}^2 < \min_{\Theta_\star+\Delta \in \cK: \| \Delta \|_{\fro} \geq \rho} \| \Cov(\Theta_\star+\Delta) -z \|_{2}^2 
    &\implies \widehat{\Theta} \not\in \{ \Theta_\star + \Delta \in \cK :  \| \Delta \|_{\fro} \geq \rho \} \\
    &\implies \| \widehat{\Theta} - \Theta_\star \|_{\fro} < \rho.
\end{align*}
Thus, our strategy will attempt to find a minimum $\rho >0$ that makes the first expression true. 
Equivalently, we show that the following quantity is non-positive
\begin{align*}
    \| \Cov(\Theta_\star) - z \|_2^2 -& \min_{\Theta_\star+\Delta \in \cK: \| \Delta \|_{\fro} \geq \rho} \| \Cov(\Theta_\star+\Delta) -z \|_2^2 \\
    &= - \Big(  \min_{\Theta_\star+\Delta \in \cK: \| \Delta \|_{\fro} \geq \rho} 2\langle \Cov(\Delta), \Cov(\Theta_\star)-z \rangle  + \| \Cov(\Delta) \|_2^2 \Big)\\
    &= \max_{M+\Delta \in \cK: \| \Delta \|_{\fro} \geq \rho} 2\langle \Cov(\Delta), \eta \rangle - \| \Cov(\Delta) \|_{2}^2 \\
    &= \max_{t \geq \rho} \max_{M+\Delta \in \cK: \| \Delta \|_{\fro} = t} 2\langle \Cov(\Delta), \eta \rangle - \| \Cov(\Delta) \|_{2}^2 .
\end{align*}
Intuitively, when $t$ is large, the quadratic term will dominate the inner product making the entire expression non-positive.
We will employ a geometric fact about the nuclear norm ball.
\begin{lemma}
    Let $\cK = \{ \Theta : \|\Theta\|_* \leq \|\Theta_\star\|_* \}$.
    If $\Theta_\star+\Delta \in \cK$ then $\| P_\perp( \Delta ) \|_* \leq -\langle \Delta, U V^\top \rangle \leq \| P_\parallel (\Delta) \|_{\fro}$. 
\end{lemma}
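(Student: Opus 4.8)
The plan is to prove the three-term chain $\|P_\perp(\Delta)\|_* \le -\langle \Delta, UV^\top\rangle \le \|P_\parallel(\Delta)\|_\fro$ by working from the feasibility constraint $\|\Theta_\star + \Delta\|_* \le \|\Theta_\star\|_*$. First I would decompose $\Delta$ according to the tangent-space projection: write $\Delta = P_\parallel(\Delta) + P_\perp(\Delta)$, and observe that by construction $P_\perp(\Delta) = (I - UU^\top)\Delta(I - VV^\top)$ has row space orthogonal to that of $\Theta_\star = U\Sigma V^\top$ and column space orthogonal to that of $\Theta_\star$. The key linear-algebra fact to invoke is that for matrices $X, Y$ with $X Y^\top = 0$ and $X^\top Y = 0$ (orthogonal row and column spaces), $\|X + Y\|_* = \|X\|_* + \|Y\|_*$; applying this with $X = \Theta_\star + P_\parallel(\Delta)$ — which still has row/column spaces inside those of $\Theta_\star$ — and $Y = P_\perp(\Delta)$ gives $\|\Theta_\star + \Delta\|_* = \|\Theta_\star + P_\parallel(\Delta)\|_* + \|P_\perp(\Delta)\|_*$.

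Next I would lower-bound $\|\Theta_\star + P_\parallel(\Delta)\|_*$ using duality of the nuclear and operator norms: $\|\Theta_\star + P_\parallel(\Delta)\|_* \ge \langle \Theta_\star + P_\parallel(\Delta), W\rangle$ for any $W$ with $\|W\|_\op \le 1$. Choosing $W = UV^\top$ (which indeed has operator norm $1$), and using that $\langle \Theta_\star, UV^\top\rangle = \tr(\Sigma) = \|\Theta_\star\|_*$, this yields $\|\Theta_\star + P_\parallel(\Delta)\|_* \ge \|\Theta_\star\|_* + \langle P_\parallel(\Delta), UV^\top\rangle = \|\Theta_\star\|_* + \langle \Delta, UV^\top\rangle$, where the last equality holds because $UV^\top$ lies in the tangent space so $\langle P_\perp(\Delta), UV^\top\rangle = 0$. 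Combining with the feasibility constraint $\|\Theta_\star\|_* \ge \|\Theta_\star + \Delta\|_* = \|\Theta_\star + P_\parallel(\Delta)\|_* + \|P_\perp(\Delta)\|_*$ and rearranging gives $\|P_\perp(\Delta)\|_* \le -\langle \Delta, UV^\top\rangle$, which is the first inequality.

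For the second inequality, $-\langle \Delta, UV^\top\rangle \le \|P_\parallel(\Delta)\|_\fro$, I would again use $\langle \Delta, UV^\top\rangle = \langle P_\parallel(\Delta), UV^\top\rangle$, then apply Cauchy–Schwarz in the Frobenius inner product: $|\langle P_\parallel(\Delta), UV^\top\rangle| \le \|P_\parallel(\Delta)\|_\fro \|UV^\top\|_\fro$, and note $\|UV^\top\|_\fro = \sqrt{\tr(V U^\top U V^\top)} = \sqrt{\tr(I_r)} = \sqrt r$. This gives a bound with a spurious $\sqrt r$ factor; to remove it I would instead bound $\langle P_\parallel(\Delta), UV^\top\rangle$ more carefully, noting that $UV^\top$ only interacts with the component of $P_\parallel(\Delta)$ inside the $r\times r$ "core" block, i.e., project further and apply Cauchy–Schwarz only against $UU^\top P_\parallel(\Delta) VV^\top$, whose Frobenius norm is at most $\|P_\parallel(\Delta)\|_\fro$ and against which $\langle \cdot, UV^\top\rangle$ pairs with a unit-Frobenius-norm object after accounting for orthonormality — more precisely, writing $UV^\top = \sum_{i=1}^r u_i v_i^\top$ with orthonormal $u_i, v_i$, one checks $\langle P_\parallel(\Delta), UV^\top\rangle = \sum_i u_i^\top \Delta v_i$ and this is bounded by $\|P_\parallel(\Delta)\|_\fro$ because the $u_i v_i^\top$ are orthonormal in Frobenius inner product and span an $r$-dimensional subspace of the tangent space. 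The main obstacle is exactly this last point: getting the clean constant $1$ (rather than $\sqrt r$) requires identifying $UV^\top$ as a unit vector in the appropriate orthonormal basis of the tangent space and arguing $P_\parallel$ is a projection, so that $\langle P_\parallel(\Delta), UV^\top\rangle = \langle \Delta, UV^\top\rangle \le \|P_\parallel(\Delta)\|_\fro \cdot 1$; I expect this to follow from the orthonormality of $\{u_i v_j^\top\}$ noted after the theorem statement, but it is the step that needs the most care.
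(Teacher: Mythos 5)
Your first step contains a genuine gap: the identity $\|\Theta_\star+\Delta\|_* = \|\Theta_\star+P_\parallel(\Delta)\|_* + \|P_\perp(\Delta)\|_*$ is false in general, because $\Theta_\star+P_\parallel(\Delta)$ does \emph{not} have its column and row spaces inside those of $\Theta_\star$. The tangent space is the range of $P_\parallel(M)=UU^\top M+(I-UU^\top)MVV^\top$, i.e.\ matrices of the form $UA^\top+BV^\top$; it contains the off-diagonal block $UU^\top \Delta(I-VV^\top)$, whose row space is orthogonal to $\mathrm{range}(V)$, so $(\Theta_\star+P_\parallel(\Delta))^\top P_\perp(\Delta)\neq 0$ in general and the nuclear norms do not add (only the $r\times r$ core $UU^\top\Delta VV^\top$ has the support property you invoke). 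The paper sidesteps this entirely: it uses convexity together with the subdifferential of the nuclear norm at $\Theta_\star$, namely $\|\Theta_\star+\Delta\|_*\ge\|\Theta_\star\|_*+\langle\Delta,\,UV^\top+W\rangle$ for every $W$ with $P_\perp(W)=W$ and $\|W\|_{\op}\le 1$, and then takes the supremum over $W$, which by operator/nuclear duality yields exactly the term $\|P_\perp(\Delta)\|_*$. Your route can be repaired in your own spirit by replacing $P_\parallel(\Delta)$ with $UU^\top\Delta VV^\top$ and the equality with the pinching inequality $\|M\|_*\ge\|UU^\top MVV^\top\|_*+\|P_\perp(M)\|_*$, after which your duality step against $UV^\top$ goes through; but as written the decomposition step fails.

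Your worry about the $\sqrt r$ in the final inequality is well founded, and the fix you sketch does not work: $UV^\top=\sum_{i=1}^r u_iv_i^\top$ is a sum of $r$ Frobenius-orthonormal elements, so $\|UV^\top\|_\fro=\sqrt r$, not $1$, and Cauchy--Schwarz only gives $-\langle\Delta,UV^\top\rangle\le\sqrt r\,\|P_\parallel(\Delta)\|_\fro$. The constant $1$ cannot be recovered by a sharper pairing: for $\Delta=-cUV^\top$ with $0<c<\sigma_r$ one has $\Theta_\star+\Delta\in\cK$ and $-\langle\Delta,UV^\top\rangle=cr$, yet $\|P_\parallel(\Delta)\|_\fro=c\sqrt r$, so the second inequality as stated fails for $r>1$. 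For what it is worth, the paper's own proof only establishes the first inequality $\|P_\perp(\Delta)\|_*\le-\langle\Delta,UV^\top\rangle$ and is silent on the second; the downstream argument relies on the consequence $\|P_\perp(\Delta)\|_*\le\|P_\parallel(\Delta)\|_\fro$, which by the above should carry a $\sqrt r$ factor. So here you have correctly located the weak point, but the honest resolution is that the stated constant is off by $\sqrt r$, not that orthonormality rescues it.
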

\begin{proof}
If $\|\Theta_\star + \Delta \|_* > \| \Theta_\star \|_*$ then $\Theta_\star+\Delta \not\in \cK$. 
By the convexity of the nuclear norm ball, we have that
\begin{align*}
    \|\Theta_\star + \Delta \|_* \geq \|\Theta_\star\|_* + \langle \Delta, U V^\top \rangle + \langle \Delta, W  \rangle \quad\quad \forall W : W = P_\perp(W) \text{ , } \|W \|_2 \leq 1.
\end{align*}
Consequently, as the dual norm to $\| \cdot \|_2$ is the nuclear norm $\| \cdot \|_*$ we have
\begin{align*}
    \|\Theta_\star + \Delta \|_* \geq \| \Theta_\star \|_* + \langle \Delta, U V^\top \rangle + \| P_\perp(\Delta) \|_*.
\end{align*}
Thus, if $\langle \Delta, U V^\top \rangle + \| P_\perp(\Delta) \|_* > 0$ then $\Theta_\star+\Delta \not\in \cK$. Consequently, if $\Theta_\star+\Delta \in \cK$ then $\langle \Delta, U V^\top \rangle + \| P_{\perp}( \Delta ) \|_* \leq 0$. 
\end{proof}

Recalling that $\| M \|_{\fro} \leq \| M \|_*$ for any matrix $M$, an interesting consequence of the above lemma is that $\| P_\perp(\Delta) \|_{\fro}\leq \| P_\parallel(\Delta) \|_{\fro}$ which implies $\| P_\parallel(\Delta) \|_{\fro}^2 \leq \| \Delta \|_{\fro}^2 \leq 2 \| P_\parallel(\Delta) \|_{\fro}^2$.
That is, the total error is dominated by the error in the tanget space of $\Theta_\star$.

\noindent Applying this lemma, we have that
\begin{align*}
    \max_{M+\Delta \in K: \| \Delta \|_{\fro} = t} 2\langle \Cov(\Delta), \eta \rangle - & \| \Cov(\Delta) \|_{2}^2 \leq \max_{\Delta:  \|{P_\perp(\Delta)} \|_* \leq \|{P_\parallel(\Delta)} \|_{\fro} , \| \Delta \|_{\fro} = t} 2\langle \Cov(\Delta), \eta \rangle - \| \Cov(\Delta) \|_{2}^2 \\
    &= \max_{\Delta:  \|{P_\perp(\Delta)} \|_* \leq \|{P_\parallel(\Delta)} \|_{\fro} , \| \Delta \|_{\fro} = t} 2\langle \Cov( P_\parallel(\Delta)) + \Cov( P_\perp(\Delta)), \eta \rangle - \| \Cov(\Delta) \|_{2}^2 \\
    &\leq \max_{\Delta:  \| P_\perp (\Delta) \|_* \leq \| P_\parallel(\Delta) \|_{\fro}, \| \Delta \|_{\fro} = t} 2\langle \Cov( P_\parallel(\Delta)) ,  \eta \rangle  - \| \Cov(\Delta)  \|_{2}^2 \\
    &\hspace{.3in} + \max_{\Delta:  \| P_\perp(\Delta) \|_* \leq \| P_\parallel(\Delta) \|_{\fro},\| \Delta \|_{\fro} = t}  2\langle \Cov(P_\perp(\Delta)) ,  \eta \rangle 
\end{align*}
where the equality uses the fact that $\Delta =  P_\parallel(\Delta) + P_\perp(\Delta) $ and the linearity of $\Cov$.
For any $v \in \R^N$ we have
\begin{align*}
    \langle \Cov(\Delta) , v \rangle = \sum_{n=1}^N \langle \Delta , \cov_n   \rangle v_n =  \langle \Delta , \sum_{n=1}^N \cov_n v_n  \rangle =: \langle \Delta, \Cov^*(v) \rangle   
\end{align*}
where $\Cov^*$ denotes the adjoint of $\Cov$. 
We also recognize that the operator $\Cov^* \Cov : \R^{\dima \times \dimb} \rightarrow \R^{\dima \times \dimb}$ is also linear, defined as $\Cov^* \Cov (M) = \Cov^*( \{ \langle \cov_n, M \rangle \}_n ) = \sum_{n=1}^N \cov_n \langle \cov_n, M \rangle$.
Consequently $(\Cov^* \Cov)^{1/2}$ is well-defined and is the same operator as $\Cov^* \Cov$ after taking the square root of its eigenvalues. 


The next three lemmas bound the two terms of above. Combining them yields the result of the theorem.
\begin{lemma}
With probability at least $1-\delta$ we have 
    \begin{align*}
        \max_{\Delta:  \| P_\perp(\Delta) \|_* \leq \| P_\parallel(\Delta) \|_{\fro},\| \Delta \|_{\fro} = t}  2\langle \Cov( P_\perp(\Delta)) ,  \eta \rangle \leq 2t \| P_\perp ( \Cov^* \Cov )^{1/2}\|_{\op} (\sqrt{\dima}+\sqrt{\dimb} + \sqrt{2 \log(1/\delta)}).
    \end{align*}
\end{lemma}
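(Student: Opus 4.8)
The plan is to remove the supremum over $\Delta$ at the cost of a spectral-norm factor, reducing the claim to a high-probability bound on $\|P_\perp(\Cov^*\eta)\|_{\op}$, and then to control that spectral norm by Gaussian concentration together with a Sudakov--Fernique comparison. For the reduction: by definition of the adjoint, $\langle \Cov(P_\perp(\Delta)),\eta\rangle=\langle P_\perp(\Delta),\Cov^*(\eta)\rangle$, and since $P_\perp$ is an orthogonal projection on $\R^{\dima\times\dimb}$ (self-adjoint and idempotent with respect to $\langle\cdot,\cdot\rangle$) this equals $\langle P_\perp(\Delta),P_\perp(\Cov^*\eta)\rangle$. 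Trace duality gives $2\langle \Cov(P_\perp(\Delta)),\eta\rangle\le 2\,\|P_\perp(\Delta)\|_*\,\|P_\perp(\Cov^*\eta)\|_{\op}$, and here it is crucial to invoke the \emph{nuclear-norm} side of the constraint: $\|P_\perp(\Delta)\|_*\le\|P_\parallel(\Delta)\|_{\fro}\le\|\Delta\|_{\fro}=t$, the middle inequality following from $\|P_\parallel(\Delta)\|_{\fro}^2+\|P_\perp(\Delta)\|_{\fro}^2=\|\Delta\|_{\fro}^2$ since $P_\parallel=I-P_\perp$. Thus the left-hand side of the lemma is at most $2t\,\|P_\perp(\Cov^*\eta)\|_{\op}$, uniformly in $\Delta$; using only $\|P_\perp(\Delta)\|_{\fro}\le t$ would instead leave us with the much larger $\|P_\perp(\Cov^*\eta)\|_{\fro}$.

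\textbf{Spectral norm of the Gaussian matrix via concentration.} Write $G:=P_\perp(\Cov^*\eta)=\sum_n P_\perp(\cov_n)\eta_n$, a centered Gaussian in $\R^{\dima\times\dimb}$, whose vectorization satisfies $\vecb(G)=\Pi\,\vecb(\Cov^*\eta)\sim\cN(0,\Pi\,[\Cov^*\Cov]\,\Pi)$, where $\Pi$ and $[\Cov^*\Cov]=\sum_n\vecb(\cov_n)\vecb(\cov_n)^\top$ denote the matrix representations on $\R^{\dima\dimb}$ of the operators $P_\perp$ and $\Cov^*\Cov$. Representing $\vecb(G)=(\Pi[\Cov^*\Cov]\Pi)^{1/2}g$ with $g\sim\cN(0,I_{\dima\dimb})$, the map $g\mapsto\|\mat((\Pi[\Cov^*\Cov]\Pi)^{1/2}g)\|_{\op}$ is $L$-Lipschitz in $\|g\|_2$ with $L=\|(\Pi[\Cov^*\Cov]\Pi)^{1/2}\|_{\op}$ (because $\|\cdot\|_{\op}\le\|\cdot\|_{\fro}$), and a short computation using self-adjointness of $P_\perp$ identifies $L=\|P_\perp(\Cov^*\Cov)^{1/2}\|_{\op}$, exactly the operator-norm quantity appearing in \Cref{thm:nuc_norm_result}. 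Gaussian concentration of Lipschitz functions then yields, with probability at least $1-\delta$, $\|G\|_{\op}\le\E\|G\|_{\op}+\|P_\perp(\Cov^*\Cov)^{1/2}\|_{\op}\sqrt{2\log(1/\delta)}$.

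\textbf{Bounding $\E\|G\|_{\op}$.} For unit vectors $a\in\R^{\dima}$, $b\in\R^{\dimb}$, the scalar $a^\top Gb=\langle\vecb(ab^\top),\vecb(G)\rangle$ is Gaussian with variance $\vecb(ab^\top)^\top\Pi[\Cov^*\Cov]\Pi\,\vecb(ab^\top)\le\|P_\perp(\Cov^*\Cov)^{1/2}\|_{\op}^2$, so the Gaussian process $(a,b)\mapsto a^\top Gb$ has increments dominated --- using $\|ab^\top-a'b'^\top\|_{\fro}^2=2-2\langle a,a'\rangle\langle b,b'\rangle\le\|a-a'\|_2^2+\|b-b'\|_2^2$, equivalently $(1-\langle a,a'\rangle)(1-\langle b,b'\rangle)\ge0$ --- by those of $(a,b)\mapsto\|P_\perp(\Cov^*\Cov)^{1/2}\|_{\op}(\langle g_1,a\rangle+\langle g_2,b\rangle)$ for independent $g_1\sim\cN(0,I_{\dima})$, $g_2\sim\cN(0,I_{\dimb})$. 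Sudakov--Fernique comparison then gives $\E\|G\|_{\op}=\E\sup_{a,b}a^\top Gb\le\|P_\perp(\Cov^*\Cov)^{1/2}\|_{\op}(\E\|g_1\|_2+\E\|g_2\|_2)\le\|P_\perp(\Cov^*\Cov)^{1/2}\|_{\op}(\sqrt{\dima}+\sqrt{\dimb})$. Substituting this into the concentration bound and then into the reduction of the first paragraph yields the stated inequality with the factor $2t$.

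\textbf{Main obstacle.} The genuinely delicate points are (i) bookkeeping between operators on $\R^{\dima\times\dimb}$ and their vectorized matrix representations, so that the variance proxy of $G$ is pinned down \emph{exactly} as $\|P_\perp(\Cov^*\Cov)^{1/2}\|_{\op}^2$ rather than a looser surrogate, and (ii) obtaining the sharp constant $\sqrt{\dima}+\sqrt{\dimb}$ in $\E\|G\|_{\op}$, which is precisely what the Sudakov--Fernique comparison (powered by $(1-\langle a,a'\rangle)(1-\langle b,b'\rangle)\ge0$) delivers; a crude $\varepsilon$-net bound on $\|G\|_{\op}$ would also close the argument but with a worse constant.
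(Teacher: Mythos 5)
Your proof is correct, and its first half --- non-expansiveness of $P_\parallel$ combined with nuclear/operator duality to reduce the left-hand side to $2t\,\|P_\perp(\Cov^*\eta)\|_{\op}$ --- is exactly the paper's reduction. Where you diverge is in bounding the spectral norm of the resulting non-isotropic Gaussian matrix. The paper factors the noise as $\Cov^*\eta = (\Cov^*\Cov)^{1/2}(\eta')$ with $\eta'$ an i.i.d.\ standard Gaussian matrix, pulls out the factor $\|P_\perp(\Cov^*\Cov)^{1/2}\|_{\op}$ by a submultiplicativity step $\|P_\perp((\Cov^*\Cov)^{1/2}(\eta'))\|_{\op}\le\|P_\perp(\Cov^*\Cov)^{1/2}\|_{\op}\|\eta'\|_{\op}$, and then invokes the standard bound $\E\|\eta'\|_{\op}\le\sqrt{\dima}+\sqrt{\dimb}$ plus a sub-Gaussian tail. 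You instead keep the non-isotropic matrix $G=P_\perp(\Cov^*\eta)$ intact and run Lipschitz concentration plus a Sudakov--Fernique comparison directly on the process $(a,b)\mapsto a^\top G b$, with the increment bound powered by $(1-\langle a,a'\rangle)(1-\langle b,b'\rangle)\ge 0$. The two routes land on the same bound, but yours has a concrete advantage in precision: the paper's submultiplicativity step requires reading $\|P_\perp(\Cov^*\Cov)^{1/2}\|_{\op}$ as an operator norm between spaces equipped with the \emph{spectral} norm (using the Frobenius-to-Frobenius norm there would only control $\|A(\eta')\|_{\op}$ by $\|\eta'\|_{\fro}\approx\sqrt{\dima\dimb}$, which is far too lossy), whereas your argument unambiguously identifies the variance proxy as $\|\Pi[\Cov^*\Cov]\Pi\|_{\op}^{1/2}$, the largest singular value of the vectorized representation, which is the natural and generally smaller reading of the constant in \Cref{thm:nuc_norm_result}. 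The price is that you re-derive the $\sqrt{\dima}+\sqrt{\dimb}$ bound from scratch rather than citing it, but the derivation is standard and the constant is sharp.
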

\begin{proof}
Computing this term amounts to bounding a Gaussian width.
Begin by recognizing that by the non-expansive property of projections, $\| P_\parallel(\Delta) \|_{\fro} \leq \|\Delta \|_{\fro} \leq t$ which results in the simplification:
\begin{align*}
    \max_{\Delta:  \| P_\perp(\Delta) \|_* \leq \| P_\parallel(\Delta) \|_{\fro},\| \Delta \|_{\fro} = t}  2\langle \Cov( P_\perp(\Delta)) ,  \eta \rangle &\leq \max_{\Delta:  \| P_\perp(\Delta) \|_* \leq t}  2\langle \Cov(P_\perp(\Delta)) ,  \eta \rangle \\
    &= \max_{\Delta:  \| P_\perp(\Delta) \|_* \leq t}  2  \langle \sum_{n=1}^N \eta_n \cov_n, P_\perp(\Delta) \rangle  \\
    &= \max_{\Delta:  \| P_\perp(\Delta) \|_* \leq t}  2  \langle P_\perp(  \sum_{n=1}^N \eta_n \cov_n ), P_\perp(\Delta) \rangle \\
    &\leq 2 t \| P_\perp( \sum_{n=1}^N \eta_n \cov_n ) \|_{\op} 
\end{align*}
using the fact that the operator norm and nuclear norm are dual to each other, and that the projection $P_\perp$ is non-expansive.
Note that
\begin{align*}
    \sum_{n=1}^N \eta_n \cov_n = \text{mat}( \sum_{n=1}^N \eta_n \text{vec}(\cov_n) ) = \text{mat}( (\sum_{n=1}^N \text{vec}(\cov_n) \text{vec}(\cov_n)^\top )^{1/2} \text{vec}(\eta') ) =: (\Cov^* \Cov)^{1/2}( \eta )
\end{align*}
where $\eta' \in \R^{\dima \times \dimb}$ with $\eta_{i,j}' \sim \mc{N}(0, 1)$.
We then observe that
\begin{align*}
    \| P_\perp( \sum_{n=1}^N \eta_n \cov_n ) \|_{\op} &= \| P_\perp(\text{mat}(  (\sum_{n=1}^N \text{vec}(\cov_n) \text{vec}(\cov_n)^\top )^{1/2}\text{vec}(\eta') )) \|_{\op}\\
    &\leq \| P_\perp ( \Cov^* \Cov )^{1/2}  \|_{\op} \| \eta' \|_{\op}
\end{align*}
which completes the proof of the first claim. 
Recognizing that $\| \eta' \|_{\op}$ is just the largest singular value of a Gaussian matrix, we find that $\E[ \sup_{\|u\|_2 \leq 1, \|v \|_2 \leq 1} \langle u v^\top, \eta \rangle ] \leq \sqrt{\dima} + \sqrt{\dimb}$ by Exercise 5.14 of \cite{wainwright2019high}. Applying a sub-Gaussian tail bound completes the proof.
\end{proof}

\begin{lemma}
Define $\mu:=\| (P_\parallel \Cov^* )^\dagger  P_\parallel \Cov^*  \Cov P_\perp (\Cov P_\perp)^\dagger \|_{\text{op}}$. Then
    \begin{align*}
         \max_{\Delta:  \| P_\perp (\Delta) \|_* \leq \| P_\parallel(\Delta) \|_{\fro}, \| \Delta \|_{\fro} = t} & 2\langle \Cov( P_\parallel(\Delta)) ,  \eta \rangle  - \| \Cov(\Delta)  \|_{2}^2 \\
         &\leq \max_{\Delta: \| P_\parallel(\Delta) \|_{\fro} \geq t/\sqrt{2}} 2\langle \Cov( P_\parallel(\Delta)) ,  \eta \rangle  - (1-\mu) \| \Cov( P_\parallel(\Delta))  \|_{2}^2 
    \end{align*}
\end{lemma}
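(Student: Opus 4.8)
The plan is to split $\Cov(\Delta) = \Cov(P_\parallel(\Delta)) + \Cov(P_\perp(\Delta))$, argue that the second summand can shrink $\|\Cov(\Delta)\|_2^2$ by at most a factor $1-\mu$, and at the same time relax the constraint set on the right so that only the Frobenius size of $P_\parallel(\Delta)$ appears. I would reduce the lemma to two claims: (i) every $\Delta$ feasible for the left-hand maximum satisfies $\|P_\parallel(\Delta)\|_\fro \ge t/\sqrt 2$, so it is also feasible for the right-hand maximum; and (ii) the pointwise inequality $\|\Cov(\Delta)\|_2^2 \ge (1-\mu)\|\Cov(P_\parallel(\Delta))\|_2^2$ holds for every matrix $\Delta$. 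Subtracting (ii) from $2\langle\Cov(P_\parallel(\Delta)),\eta\rangle$ and taking maxima over the two (nested) feasible sets then gives the stated bound.

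For (i) I would use that $P_\perp$ is an orthogonal projection (idempotent and self-adjoint in the Frobenius inner product) with complement $P_\parallel$, so that $t^2 = \|\Delta\|_\fro^2 = \|P_\parallel(\Delta)\|_\fro^2 + \|P_\perp(\Delta)\|_\fro^2$; combining this with $\|P_\perp(\Delta)\|_\fro \le \|P_\perp(\Delta)\|_{\nuc} \le \|P_\parallel(\Delta)\|_\fro$ (the first inequality being $\|\cdot\|_\fro\le\|\cdot\|_\nuc$, the second the feasibility constraint $\|P_\perp(\Delta)\|_\nuc \le \|P_\parallel(\Delta)\|_\fro$) yields $t^2 \le 2\|P_\parallel(\Delta)\|_\fro^2$, as needed.

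For (ii) I would first rewrite $\mu$. Setting $A := \Cov P_\parallel$ and $B := \Cov P_\perp$ as linear maps $\R^{\dima\times\dimb}\to\R^N$, we have $A^* = P_\parallel\Cov^*$ and $B^* = P_\perp\Cov^*$, so $\mu = \|(A^*)^\dagger A^*\, BB^\dagger\|_\op$; the Moore--Penrose identities $(A^*)^\dagger A^* = AA^\dagger = \Proj_{\range(A)}$ and $BB^\dagger = \Proj_{\range(B)}$ (orthogonal projections on $\R^N$) then identify $\mu = \|\Proj_{\range(A)}\Proj_{\range(B)}\|_\op$, whence $0 \le \mu \le 1$. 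Next, with $\xi := \Cov(P_\parallel(\Delta)) = A\Delta\in\range(A)$ and $\zeta := \Cov(P_\perp(\Delta)) = B\Delta\in\range(B)$, so that $\Cov(\Delta) = \xi+\zeta$ by linearity, I would bound the cross term via
\[
|\langle\xi,\zeta\rangle| = |\langle\xi,\, \Proj_{\range(A)}\Proj_{\range(B)}\zeta\rangle| \le \mu\,\|\xi\|_2\,\|\zeta\|_2 ,
\]
so that $\|\Cov(\Delta)\|_2^2 \ge \|\xi\|_2^2 - 2\mu\|\xi\|_2\|\zeta\|_2 + \|\zeta\|_2^2$; viewing the right side as a quadratic in $\|\zeta\|_2$ whose discriminant $4\mu\|\xi\|_2^2(\mu - 1)$ is $\le 0$, it is at least $(1-\mu)\|\xi\|_2^2 = (1-\mu)\|\Cov(P_\parallel(\Delta))\|_2^2$, which is (ii).

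\textbf{Main obstacle.} The one step that needs care is the operator algebra identifying $\mu$ with $\|\Proj_{\range(\Cov P_\parallel)}\Proj_{\range(\Cov P_\perp)}\|_\op$: I would have to push adjoints and pseudo-inverses correctly through the composite operators, check the range memberships $\Cov(P_\parallel(\Delta))\in\range(\Cov P_\parallel)$ and $\Cov(P_\perp(\Delta))\in\range(\Cov P_\perp)$, and confirm $\mu \le 1$ (needed for the discriminant argument to close). Everything else is a short Cauchy--Schwarz estimate plus a completing-the-square computation.
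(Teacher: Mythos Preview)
Your proposal is correct and follows essentially the same route as the paper: both proofs split $\Cov(\Delta)=\Cov(P_\parallel\Delta)+\Cov(P_\perp\Delta)$, bound the cross term by $\mu\,\|\Cov P_\parallel\Delta\|_2\,\|\Cov P_\perp\Delta\|_2$ via Cauchy--Schwarz and the Moore--Penrose identities, and then use the elementary quadratic inequality $a^2+b^2-2\mu ab\ge(1-\mu)a^2$ together with $\|P_\parallel(\Delta)\|_\fro\ge t/\sqrt{2}$ from Pythagoras and $\|\cdot\|_\fro\le\|\cdot\|_\nuc$. Your recognition that $\mu=\|\Proj_{\range(\Cov P_\parallel)}\Proj_{\range(\Cov P_\perp)}\|_\op$ is a clean repackaging (and immediately yields $\mu\le 1$), but the underlying argument is the same as the paper's.
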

\begin{proof}
Recognizing that $\Cov(\Delta) \in \R^N$ we have
\begin{align*}
    \| \Cov(\Delta)  \|_{2}^2 &= \| \Cov( P_\parallel(\Delta) + P_\perp(\Delta) )  \|_{2}^2 \\
    &= \| \Cov( P_\parallel(\Delta)) + \Cov( P_\perp(\Delta) )  \|_{2}^2 \\
    &= \| \Cov( P_\parallel(\Delta) ) \|_{2}^2 + \| \Cov( P_\perp (\Delta) )  \|_{2}^2 + 2 \langle \Cov( P_\parallel(\Delta) ), \Cov( P_\perp (\Delta) ) \rangle.
\end{align*}

To aid in readability, we make a number of notational modifications. 
First, we drop parentheses so that $\Cov(P_\parallel(\Delta))$ is just notated as $\Cov P_\parallel \Delta$.
Noting that $\Cov(P_\parallel(\cdot))$ is a linear operator from $\R^{d_1 \times d_2} \rightarrow \R^N$ it can be thought of as a matrix operating on the vectorized $\Delta$. 
Thus, the adjoint $(\Cov P_\parallel )^* = P_\parallel^* \Cov^* = P_\parallel \Cov^*$ and Moore-Penrose inverse $(\Cov P_\parallel )^{\dagger}$ are well-defined. 
Then
\begin{align*}
    |\langle \Cov( P_\parallel(\Delta) ),& \Cov( P_\perp (\Delta) ) \rangle| = |\langle \Cov P_\parallel \Delta , \Cov P_\perp \Delta \rangle| \\
    &= |\langle \Cov P_\parallel  (\Cov P_\parallel)^\dagger \Cov P_\parallel \Delta , \Cov P_\perp  (\Cov P_\perp)^\dagger \Cov P_\perp \Delta \rangle| \\
    &= |\langle    \Cov P_\parallel \Delta , (P_\parallel \Cov^* )^\dagger  P_\parallel \Cov^*  \Cov P_\perp (\Cov P_\perp)^\dagger \Cov P_\perp \Delta \rangle| \\
    &\leq \| \Cov P_\parallel \Delta \|_2 \, \| (P_\parallel \Cov^* )^\dagger  P_\parallel \Cov^*  \Cov P_\perp (\Cov P_\perp)^\dagger \Cov P_\perp \Delta \|_2 \\
    &\leq \| \Cov P_\parallel \Delta \|_2 \, \|\Cov P_\perp \Delta \|_2 \, \| (P_\parallel \Cov^* )^\dagger  P_\parallel \Cov^*  \Cov P_\perp (\Cov P_\perp)^\dagger \|_{\text{op}} 
\end{align*}
where the last two lines follow from Cauchy-Schwartz.
Observe that for $\mu < 1$ we have
\begin{align*}
    a^2 + b^2 - 2 a b \mu &= (1-\mu) a^2 + (1-\mu) b^2  + \mu a^2 + \mu b^2 - 2 a b \mu \\
    &= (1-\mu) a^2 + (1-\mu) b^2 + \mu ( a - b )^2 \\
    &\geq (1-\mu) a^2 .
\end{align*}
Thus, if $\mu:=\| (P_\parallel \Cov^* )^\dagger  P_\parallel \Cov^*  \Cov P_\perp (\Cov P_\perp)^\dagger \|_{\text{op}}$ then
\begin{align*}
     \| \Cov(\Delta)  \|_{2}^2 &\geq \| \Cov( P_\parallel(\Delta) ) \|_{2}^2 + \| \Cov( P_\perp (\Delta) )  \|_{2}^2 - 2 |\langle \Cov( P_\parallel(\Delta) ), \Cov( P_\perp (\Delta) ) \rangle | \\
     &\geq \| \Cov( P_\parallel(\Delta) ) \|_{2}^2 + \| \Cov( P_\perp (\Delta) )  \|_{2}^2 - 2 \mu \| \Cov( P_\parallel(\Delta) ) \|_{2}  \| \Cov( P_\perp (\Delta) )  \|_{2} \\
     &\geq (1-\mu)  \| \Cov( P_\parallel(\Delta) ) \|_{2}^2
\end{align*}
Thus,
\begin{align*}
     \hspace{1in}&\hspace{-1in}\max_{\Delta:  \| P_\perp (\Delta) \|_* \leq \| P_\parallel(\Delta) \|_{\fro}, \| \Delta \|_{\fro} = t} 2\langle \Cov( P_\parallel(\Delta)) ,  \eta \rangle  - \| \Cov(\Delta)  \|_{2}^2  \\
     &\leq \max_{\Delta:  \| P_\perp (\Delta) \|_* \leq \| P_\parallel(\Delta) \|_{\fro}, \| \Delta \|_{\fro} = t} 2\langle \Cov( P_\parallel(\Delta)) ,  \eta \rangle  - (1-\mu) \| \Cov( P_\parallel(\Delta))  \|_{2}^2 \\
     &\leq \max_{\Delta:  \| P_\perp (\Delta) \|_{\fro} \leq \| P_\parallel(\Delta) \|_{\fro}, \| \Delta \|_{\fro} = t} 2\langle \Cov( P_\parallel(\Delta)) ,  \eta \rangle  - (1-\mu) \| \Cov( P_\parallel(\Delta))  \|_{2}^2 \\
     &\leq \max_{\Delta: \| P_\parallel(\Delta) \|_{\fro} \geq t/\sqrt{2}} 2\langle \Cov( P_\parallel(\Delta)) ,  \eta \rangle  - (1-\mu) \| \Cov( P_\parallel(\Delta))  \|_{2}^2 
\end{align*}
where we've used the facts that $\| \cdot \|_* \leq \| \cdot \|_{\fro}$ and $t^2 = \| \Delta \|_{\fro}^2 = \| P_\parallel (\Delta) \|_{\fro}^2 + \| P_\perp(\Delta) \|_{\fro}^2 \leq 2 \| P_\parallel (\Delta) \|_{\fro}^2$.
\end{proof}

\begin{lemma}
Let $K = \min\{ \log_2(N), \dima \dimb \}$. Then for any $\alpha > 0$, if
    \begin{align*}
        t \geq \frac{1}{1-\mu} \sqrt{  16 \tr\Big( ( P_\parallel \Cov^* \Cov P_\parallel )^{\dagger}  \Big) + 32 \| (P_\parallel \Cov^* \Cov P_\parallel  )^{\dagger} \|_{\op} \log(K/\delta) } + \frac{2 \alpha \| (P_\parallel \Cov^* \Cov P_\parallel )^{\dagger}  \|_{\op}}{1-\mu}
    \end{align*}
then with probability at least $1-\delta$ we have
\begin{align*}
    \alpha t + \max_{\Delta: \| P_\parallel(\Delta) \|_{\fro} \geq t/\sqrt{2}} 2\langle \Cov( P_\parallel(\Delta)) ,  \eta \rangle  - (1-\mu) \| \Cov( P_\parallel(\Delta))  \|_{2}^2  \leq 0.
\end{align*}
\end{lemma}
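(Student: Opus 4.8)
The plan is to reduce the maximum to a low–dimensional Gaussian problem on the tangent space $T:=\range(P_\parallel)$ and then control it by completing the square together with a peeling argument. First I would observe that both the objective $2\langle\Cov(P_\parallel(\Delta)),\eta\rangle-(1-\mu)\|\Cov(P_\parallel(\Delta))\|_2^2$ and the constraint $\|P_\parallel(\Delta)\|_\fro\ge t/\sqrt2$ depend on $\Delta$ only through $M:=P_\parallel(\Delta)\in T$, so the maximum equals $\sup\{\,2\langle\Cov M,\eta\rangle-(1-\mu)\|\Cov M\|_2^2 : M\in T,\ \|M\|_\fro\ge t/\sqrt2\,\}$. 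Fixing an orthonormal basis $L$ of $T$ and writing $M=Lm$, this is $\sup\{\,2\langle m,\xi\rangle-(1-\mu)\,m^\top\Sigma m : \|m\|_2\ge t/\sqrt2\,\}$, where $\Sigma:=L^\top\Cov^*\Cov L$ is exactly the restriction of $P_\parallel\Cov^*\Cov P_\parallel$ to $T$ — so $\tr\!\big((P_\parallel\Cov^*\Cov P_\parallel)^\dagger\big)=\tr(\Sigma^{-1})$ and $\|(P_\parallel\Cov^*\Cov P_\parallel)^\dagger\|_\op=\|\Sigma^{-1}\|_\op$ — and $\xi:=L^\top\Cov^*\eta\sim\cN(0,\Sigma)$ since $\eta\sim\cN(0,I_N)$. (We assume, as is implicit, that $\Cov$ is injective on $T$, equivalently $\Sigma\succ0$; otherwise the bound is vacuous.)

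Next I would complete the square: $2\langle m,\xi\rangle-(1-\mu)m^\top\Sigma m=\tfrac{1}{1-\mu}\xi^\top\Sigma^{-1}\xi-(1-\mu)(m-m_\star)^\top\Sigma(m-m_\star)$ with $m_\star:=\tfrac{1}{1-\mu}\Sigma^{-1}\xi$. The two relevant random quantities are $\xi^\top\Sigma^{-1}\xi$ (a $\chi^2$-type variable with mean $\rank(\Sigma)$) and $\|m_\star\|_2^2=\tfrac{1}{(1-\mu)^2}\xi^\top\Sigma^{-2}\xi$ (a weighted $\chi^2$ whose weights are the reciprocal eigenvalues of $\Sigma$, hence mean $\tr(\Sigma^{-1})/(1-\mu)^2$); I would bound both by a Laurent–Massart / Hanson–Wright tail inequality, and this is precisely where $\tr\!\big((P_\parallel\Cov^*\Cov P_\parallel)^\dagger\big)$ enters as the mean term and $\|(P_\parallel\Cov^*\Cov P_\parallel)^\dagger\|_\op$ as the sub-exponential rate, producing the $\log(\cdot/\delta)$ corrections. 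The point is that the stated lower bound on $t$ forces $t/\sqrt2>\|m_\star\|_2$, so the feasible set $\{\|m\|_2\ge t/\sqrt2\}$ excludes the unconstrained maximizer $m_\star$ and the supremum is attained on the sphere $\|m\|_2=t/\sqrt2$; lower-bounding $\min_{\|m\|_2=t/\sqrt2}(m-m_\star)^\top\Sigma(m-m_\star)$ and verifying that $(1-\mu)$ times it exceeds $\tfrac{1}{1-\mu}\xi^\top\Sigma^{-1}\xi+\alpha t$ is exactly the inequality that holds once $t$ passes the claimed threshold.

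To get the bound uniformly over the unbounded feasible set with only a logarithmic loss, I would run this last step as a peeling argument over dyadic bands of $\|\Cov M\|_2$ (equivalently of $\|M\|_\fro$): on a band with $\|\Cov M\|_2\le s$ one has, for $M\in T$, $\langle\Cov M,\eta\rangle=\langle\Cov M,(P_\parallel\Cov^*)^\dagger\xi\rangle\le\|\Cov M\|_2\sqrt{\xi^\top(P_\parallel\Cov^*\Cov P_\parallel)^\dagger\xi}\le s\sqrt{\xi^\top(P_\parallel\Cov^*\Cov P_\parallel)^\dagger\xi}$ by Cauchy–Schwarz and $(P_\parallel\Cov^*)^\dagger(P_\parallel\Cov^*)M=M$, while the quadratic penalty $-(1-\mu)\|\Cov M\|_2^2$ dominates once $s$ is large. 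Only $K=\min\{\lceil\log_2 N\rceil,\dima\dimb\}$ bands are relevant — the smallest scale is of order $t$, the largest scale not already killed by the penalty is polynomial in $N$, and $\rank(\Cov P_\parallel)\le\dima\dimb$ — so a union bound over them costs the $\log(K/\delta)$ in the statement. The main obstacle is sharpness in the second and third steps: a crude operator-norm estimate would replace $\tr\!\big((P_\parallel\Cov^*\Cov P_\parallel)^\dagger\big)$ by the much looser $\rank(\Sigma)\cdot\|(P_\parallel\Cov^*\Cov P_\parallel)^\dagger\|_\op$, so one must carefully track how the scale $\|\Cov M\|_2$ couples to the Gaussian term $\langle\Cov M,\eta\rangle$ through the entire spectrum of $P_\parallel\Cov^*\Cov P_\parallel$, and it is this spectral bookkeeping — not any single inequality — that is the crux.
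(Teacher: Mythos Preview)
Your reduction to the tangent space and the completing-the-square setup are correct, and you correctly identify that the crux is obtaining the trace $\tr\big((P_\parallel\Cov^*\Cov P_\parallel)^\dagger\big)$ rather than the crude $\rank(\Sigma)\cdot\|(P_\parallel\Cov^*\Cov P_\parallel)^\dagger\|_{\op}$. But the concrete mechanism you propose for this --- peeling over dyadic bands of the scalar $\|\Cov M\|_2$ --- does not achieve it. On such a band your own bound reads $\langle\Cov M,\eta\rangle\le s\sqrt{\xi^\top(P_\parallel\Cov^*\Cov P_\parallel)^\dagger\xi}$, and the random quantity on the right equals $\|\Pi\eta\|_2$ for $\Pi$ the projection onto $\range(\Cov P_\parallel)$, a $\chi$-variable with mean of order $\sqrt{\rank(\Sigma)}$. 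Combining this with the quadratic penalty $-(1-\mu)\|\Cov M\|_2^2$ and the fact that the smallest admissible $\|\Cov M\|_2$ is $\sqrt{\lambda_{\min}(\Sigma)}\,t/\sqrt{2}$ forces exactly the crude threshold $t^2\gtrsim \rank(\Sigma)\,\|\Sigma^{-1}\|_{\op}/(1-\mu)^2$ you warned against. The completing-the-square route has the same defect: once you are on the sphere $\|m\|_2=t/\sqrt{2}$, lower-bounding $(m-m_\star)^\top\Sigma(m-m_\star)$ by $\lambda_{\min}(\Sigma)\|m-m_\star\|_2^2$ again loses the trace. Your justification that only $K=\min\{\log_2 N,\dima\dimb\}$ magnitude bands are needed is also not right as stated --- the number of dyadic bands of $\|\Cov M\|_2$ depends on the condition number of $\Sigma$, not on $N$.

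The paper's argument carries out the dyadic decomposition in a different axis: it takes the SVD $\Cov P_\parallel=\sum_n\beta_n w_n\psi_n$ and groups the \emph{left singular vectors} dyadically, setting $W_k=[w_{2^k},\dots,w_{2^{k+1}-1}]$. Both the linear term and the quadratic split orthogonally as sums over $k$ of $\langle W_k^\top\Cov P_\parallel\Delta,\,W_k^\top\eta\rangle$ and $\|W_k^\top\Cov P_\parallel\Delta\|_2^2$. The point is that $\|W_k^\top\eta\|_2$ concentrates at $\sqrt{2^k}$ (it is the norm of a $2^k$-dimensional standard Gaussian), while within block $k$ the relevant spectral scale is $\beta_{2^{k+1}}$. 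Balancing linear against quadratic block by block produces the condition $t\gtrsim\max_k\|W_k^\top\eta\|_2/((1-\mu)\beta_{2^{k+1}})$, and then the telescoping inequality $\max_k 2^k/\beta_{2^{k+1}}^2\le\sum_n 1/\beta_n^2=\tr\big((P_\parallel\Cov^*\Cov P_\parallel)^\dagger\big)$ recovers the trace. The number of blocks is $K=\min\{\log_2 N,\dima\dimb\}$ because that bounds the number of nonzero singular values, which is also where the $\log(K/\delta)$ arises via a union bound over blocks. In short, the peeling has to be spectral (over singular-vector blocks), not over the magnitude $\|\Cov M\|_2$.
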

\begin{proof}
The linear operator $ \Cov P_\parallel: \R^{\dima \times \dimb} \rightarrow \R^N$ can be decomposed as $\Cov P_\parallel = \sum_{n=1}^N \beta_n w_n \psi_n $ where $\{ w_n \}_n$ are orthonormal on $\R^T$, $\{ \psi_n \}_n$ are orthonormal linear operators on $\R^{\dima \times \dimb}$, and $\beta_n \geq 0$ are decreasing. 
For $k=0,1,\dots,\min\{ \log_2(N), \dima \dimb \}-1$ let $W_k = [ w_{2^k},\dots,w_{2^{k+1}-1}]$ so that 
\begin{align*}
    \beta_{2^k} = \max_{\| \Delta \|_{\fro} = 1, \|u\|_2 = 1} u^\top W_k^\top \Cov P_\parallel(\Delta) \geq \min_{\| \Delta \|_{\fro} = 1, \|u\|_2 = 1} u^\top W_k^\top \Cov P_\parallel(\Delta) \geq \beta_{2^{k+1}}
\end{align*}
Then
\begin{align*}
    \hspace{1in}&\hspace{-1in}\max_{\Delta: \| P_\parallel(\Delta) \|_{\fro} \geq t/\sqrt{2}} 2\langle \Cov( P_\parallel(\Delta)) ,  \eta \rangle  - (1-\mu) \| \Cov( P_\parallel(\Delta))  \|_{2}^2 \\
    &= \max_{\Delta: \| P_\parallel(\Delta) \|_{\fro} \geq t/\sqrt{2}} \sum_{k=0}^K 2\langle W_k W_k^\top \Cov( P_\parallel(\Delta)) ,  \eta \rangle  - (1-\mu) \| W_k W_k^\top \Cov( P_\parallel(\Delta))  \|_{2}^2 \\
    &= \max_{\Delta: \| P_\parallel(\Delta) \|_{\fro} \geq t/\sqrt{2}} \sum_{k=0}^K 2\langle W_k^\top \Cov( P_\parallel(\Delta)) , W_k^\top \eta \rangle  - (1-\mu) \| W_k^\top \Cov( P_\parallel(\Delta))  \|_{2}^2 \\
    &\leq \sum_{k=0}^K  \max_{\Delta: \| P_\parallel(\Delta) \|_{\fro} \geq t/\sqrt{2}} 2 \| W_k^\top \Cov( P_\parallel(\Delta)) \|_{2} \, \| W_k^\top \eta \|_{2}  - (1-\mu) \| W_k^\top \Cov( P_\parallel(\Delta))  \|_{2}^2 \\
    &\leq \sum_{k=0}^K  t \sqrt{2} \beta_{2^{k+1}} \, \| W_k^\top \eta \|_{2}  - \frac{t^2 (1-\mu)}{2} \beta_{2^{k+1}}^2
\end{align*}
where the first inequality holds by Cauchy-Schwartz, and the second holds for all $t \geq \max_{k=0,\dots,K} \frac{\| W_k^\top \eta \|_{2} 2 \sqrt{2}}{(1-\mu)\beta_{2^{k+1}}}$.
Moreover, $t \sqrt{2} \beta_{2^{k+1}} \, \| W_k^\top \eta \|_{2}  - \frac{t^2 (1-\mu)}{2} \beta_{2^{k+1}}^2 \leq -\alpha t$ if 
\begin{align*}
    t \geq \max_{k=0,\dots,K} \frac{\| W_k^\top \eta \|_{2} 2 \sqrt{2}}{(1-\mu)\beta_{2^{k+1}}}  + \max_{k=0,\dots,K}\frac{2 \alpha}{(1-\mu)\beta_{2^{k+1}}^2} = \sqrt{ \max_{k=0,\dots,K} \frac{ 8 \| W_k^\top \eta \|_{2}^2 }{(1-\mu)^2\beta_{2^{k+1}}^2}}  + \max_{k=0,\dots,K} \frac{2 \alpha}{(1-\mu)\beta_{2^{k+1}}^2}
\end{align*}

Note that for $K = \min\{ \log_2(N), \dima \dimb \}$ we have
\begin{align*}
    \P( \cup_{k=1}^K \{ \| W_k^\top \eta \|_{2} \geq \sqrt{2^k} + \sqrt{2 \log(K/\delta)} \} ) \leq \P( \cup_{k=1}^K \{ \| W_k^\top \eta \|_{2} \geq \E[ \| W_k^\top \eta \|_{2} ] + \sqrt{2 \log(K/\delta)} \} )  \leq \delta.
\end{align*}
On this good event, we have that
\begin{align*}
    \max_{k=0,\dots,K} \frac{\| W_k^\top \eta \|_{2}^2 }{\beta_{2^{k+1}}^2}  &\leq \max_{k=0,\dots,K} \frac{(\sqrt{2^k} + \sqrt{2\log(K/\delta)})^2}{\beta_{2^{k+1}}^2} \\
    &\leq \max_{k=0,\dots,K} \frac{{2^{k+1}}}{\beta_{2^{k+1}}^2} + \frac{4\log(K/\delta)}{\beta_{2^{k+1}}^2} \\
    &\leq 2 \sum_{n=1}^N \frac{1}{\beta_n^2} +  \max_{n=1,\dots, N} \frac{4\log(K/\delta)}{\beta_{n}^2} \\
    &= 2 \tr\Big(  (P_\parallel \Cov^* \Cov P_\parallel )^{\dagger}  \Big) + 4 \| (P_\parallel  \Cov^* \Cov P_\parallel )^{\dagger}  \|_{\op} \log(K/\delta)
\end{align*}
where we use the fact that
\begin{align*}
    \sum_{n=1}^N \frac{1}{\beta_n^2} = \sum_{k=0}^K \sum_{n=2^k}^{2^{k+1}-1} \frac{1}{\beta_n^2} \geq \sum_{k=0}^{K-1} \frac{2^k}{\beta_{2^{k+1}}^2} \geq \max_{k=0,\dots,K-1} \frac{2^k}{\beta_{2^{k+1}}^2}.
\end{align*}
\end{proof}



\section{Additional Details on Experiments}
\subsection{Further Details on Dataset}

The photostimulation data were collected from transgenic reporter mice Ai229, which express Cre-recombinase-dependent cytosolic GCaMP6m and soma-targeted ChRmine, crossed with the Vglut1-cre mouse line. Imaging and photostimulation experiments were performed on a Bergamo (Thorlabs) microscope equipped with a 16x (0.8 NA) Nikon objective. Post-hoc motion correction and neuron segmentation were performed with the Suite2p package \cite{pachitariu2016suite2p} (https://github.com/MouseLand/suite2p).

\subsection{Further Details on Experiment of \Cref{sec:model_fits}}

\label{appendix:model_fits}

We split each of our photostimulation datasets into non-overlapping training and test datasets.  All models were trained exclusively using the training dataset and were then evaluated (as shown in Figure~\ref{fig:low rank model}) using the test dataset. To build our test datasets, we randomly chose 5 (out of the 100 total) unique photostimulation patterns and then included all 70-timestep windows about each of the 20 instances of those 5 unique photostimuli. The resulting test set amounted to $\sim$20\% of each dataset. In Figure~\ref{fig:low rank model}, all models were evaluated using these 70-timestep test sequences of the form $\{y_t, u_t\}_{t=1}^{70}$, where $y_t \in R^d$ is the recorded neural activity and $u_t \in R^d$ is the photostimulation delivered at time $t$. During evaluation on a given test window, all models were provided $\{y_t\}_{t=1}^4$ and $\{u_t\}_{t=1}^{70}$ to predict $\{y_t\}_{t=5}^{70}$. 
 
\textbf{Autoregressive-k models:}  We fit the full-rank AR-$k$ models to training datasets via linear regression by expressing 

\begin{equation}
y_{t+1} = \sum_{s=0}^{k-1} A_s y_{t-s} + \sum_{s=0}^{k-1} B_s u_{t-s} + v
\label{eqn:appendix-AR-$k$}
\end{equation}

as $Y=X W$, where 

\begin{align*}
Y = \begin{bmatrix}
y_1 \\
y_2 \\
\vdots \\
y_{T+1}
\end{bmatrix}
\quad
X = \begin{bmatrix}
y_0 & y_{-1} & \dots & y_{1-k} & u_0 & u_{-1} & \dots & u_{1-k} & 1 \\
y_1 & y_0 & \dots & y_{2-k} & u_1 & u_0 & \dots & u_{2-k} & 1 \\
\vdots & \vdots & & \vdots & \vdots & \vdots & & \vdots & 1 \\
y_{T} & y_{T-1} & \dots & y_{T+1-k} & u_{T} & u_{T-1} & \dots & u_{T+1-k} & 1
\end{bmatrix}
\end{align*}

\begin{align*}
W = \begin{bmatrix}
A_0 \\
A_1 \\
\vdots \\
A_{k-1} \\
B_0 \\
B_1 \\
\vdots \\
B_{k-1} \\
v
\end{bmatrix}
\end{align*}

and the closed-form solution is $\widehat{W} = (X^TX)^{-1}X^T Y$. For the low-rank AR-$k$ models, we fit all parameters via gradient descent using Adam \cite{kingma2014adam} over 100 training epochs with a learning rate of 0.01. Gradient descent was implemented in PyTorch and ran on a single NVIDIA Tesla T4 GPU.

During evaluation of the AR-$k$ models, for each test window we first computed $\widehat{y}_{k+1}$ given $\{y_t,u_t\}_{t=1}^k$ using \eqref{eqn:appendix-AR-$k$}. Then for all subsequent predictions, on the right-hand side of \eqref{eqn:appendix-AR-$k$} we replaced all instances of $y_t$ with $\widehat{y}_t$ for $t>k$. In this manner, each entire roll-out prediction of $\{y_t\}_{t=k+1}^{70}$ used all photostimulation inputs $\{u_t\}_{t=1}^{70}$, but only the first $k$ timesteps of neural activity $\{y_t\}_{t=1}^{k}$. All AR-$k$ models in this paper used $k=4$.

\textbf{Gated recurrent unit (GRU) networks:} GRU networks were loosely based on the sequential variational autoencoders of \cite{pandarinath2018inferring}. Each model consisted of an encoder GRU network that encodes $k=4$ initial timesteps of recorded neural activity into a bottlenecked initial state for a decoder GRU network. The decoder then unrolls an entire predicted timeseries of recorded neural activity given (as input) all photostimulation that was delivered over that time period. Model fitting proceeded by optimizing the evidence lower bound (ELBO) with respect to all model parameters. Both encoder and decoder GRUs had 512 hidden units. We used Adam optimization with a learning rate of 0.001 over 4000 training epochs of batch size 100. Models were implemented with PyTorch, and optimized on a single NVIDIA Tesla T4 GPU.

\textbf{Evaluation metrics:} We evaluated all models using roll-out predictions on held-out test windows. We quantified performance with mean squared error between recorded and predicted neural activity for each neuron. We also performed thresholded response detection, whereby detections were defined as timesteps at which a given neuron's measured calcium fluorescence exceeded a predefined threshold. To calculate a receiver operator characteristic (ROC) curve, we enumerated a range of thresholds, normalized by the standard deviation of each neuron's empirical activity distribution, and performed threshold detection separately on the real and model-predicted neural activity traces. We then compute the overall false-positive rate and true-positive rate at each threshold level to trace out an ROC curve. We calculate area under the ROC curve (AUROC) to quantify the accuracy of each model. 

\textbf{Longer roll-out evaluations:} To assess AR-$k$ models' ability to predict over longer time horizons, we implemented another train-test strategy, where the first 80\% of timesteps in a recording are used for training, and the last 20\% of timesteps (6736 steps) are used for testing. During the test phase, we use the same procedure described above, providing only the $k = 4$ initial timesteps of neural activity and then unrolling predictions over the remainder of this long test window. We report these results in Figure~\ref{fig:supp-single-roll-out}.

\begin{figure}[t]
    \includegraphics[width=\textwidth]{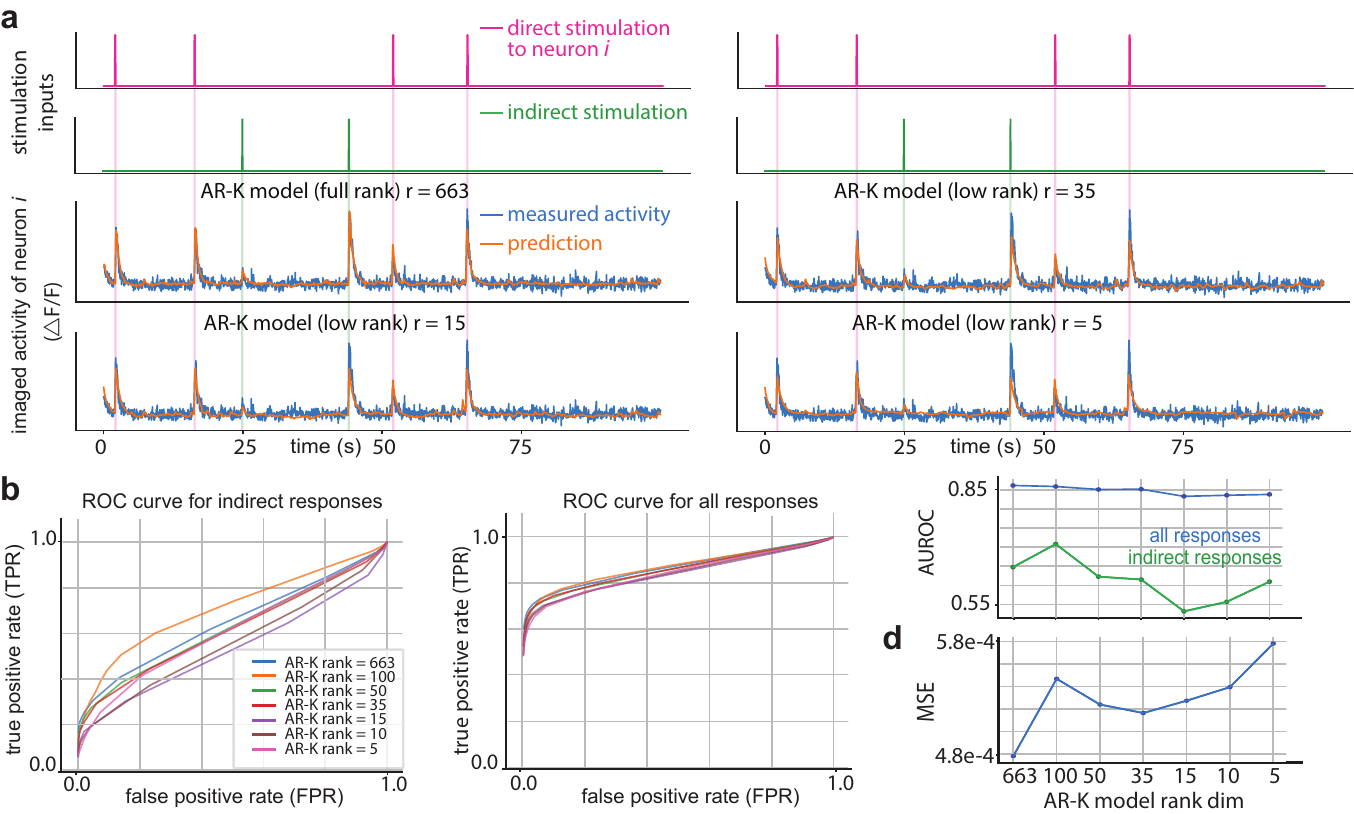}
    \caption{
    Longer roll-out evaluations. Same format as in Figure~\ref{fig:low rank model}.}
    \label{fig:supp-single-roll-out}
\end{figure}

\subsection{Further Details on Experiment of \Cref{sec:exp_learned_sim}}

To fit the $H$ parameter in this experiment, we generate observations as described in \Cref{sec:exp_learned_sim}. We then estimate $H$ as:
\begin{align*}
    \widehat{H} \leftarrow \argmin_{H \in \cK} \sum_{(u,z) \in \frakD} \| z - H u \|_\fro^2
\end{align*}
for $u$ our input, and $z =  \sum_{t=1}^\tau x_t$ the observed response, where here $x_t$ are the observations generated from playing input $u$, and $\tau = 15$. 

As $\cK$ is defined with respect to the nuclear norm of the true parameter, which we do not assume is known, we run each method with a range of possible values for the nuclear-norm constraint, and plot the performance of each method for the constraint value that has minimum error. 
We state the value of the nuclear-norm constraint used for each plot below:

\begin{table}[h]
    \centering
    \begin{tabular}{|c|c|c|c|}
    \hline
         & Active & Random & Uniform \\
         \hline
      Mouse 1, rank 15   & 10 & 10 & 10 \\
      \hline
       Mouse 1, rank 35  & 10 & 10 & 10 \\
       \hline
      Mouse 2, rank 15   & 5 & 5 & 5 \\
      \hline
       Mouse 2, rank 35  & 5 & 5 & 5 \\
       \hline
       Mouse 3 (FoV A), rank 15  & 25 & 25 & 25 \\
       \hline
       Mouse 3 (FoV A), rank 35  & 25 & 25 & 25 \\
       \hline
       Mouse 3 (FoV B), rank 15  & 50 & 100 & 100 \\
       \hline
       Mouse 3 (FoV B), rank 35  & 100 & 100 & 100 \\
       \hline
    \end{tabular}
    \vspace{0.5em}
        \caption{Nuclear-Norm Constraint Settings for Results of \Cref{sec:exp_learned_sim}}
\end{table}

To choose the input rank of \Cref{alg:main_alg}, we ran our experiment with several different ranks and provide results for the best-performing rank. We found, however, that results are typically robust to the setting of the rank parameter of \Cref{alg:main_alg}, and our choice of $r$ did not significantly impact performance. Furthermore, we believe this could effectively be chosen adaptively. 
We state our chosen values of $r$ below.
\begin{table}[h]
    \centering
    \begin{tabular}{|c|c|}
    \hline
         & Input Rank $r$  \\
         \hline
      Mouse 1, rank 15   & 10 \\
      \hline
       Mouse 1, rank 35  & 10 \\
       \hline
      Mouse 2, rank 15   & 5  \\
      \hline
       Mouse 2, rank 35  & 5  \\
       \hline
       Mouse 3 (FoV A), rank 15  & 25  \\
       \hline
       Mouse 3 (FoV A), rank 35  & 25  \\
       \hline
       Mouse 3 (FoV B), rank 15  & 25 \\
       \hline
       Mouse 3 (FoV B), rank 35  & 50 \\
       \hline
    \end{tabular}
    \vspace{0.5em}
        \caption{Input Rank $r$ for Results of \Cref{sec:exp_learned_sim}}
\end{table}

For all experiments, we add observation noise distributed as $\cN(0, 0.4 \cdot I)$ to $z =  \sum_{t=1}^\tau x_t$.

To ground the estimation error values shown in \Cref{fig:learned_sim_r35}, in \Cref{fig:noisy_connectivity_ex} will illustrate the causal connectivity matrix for Mouse 3 FoV B with different levels of estimation error.

\begin{figure}[t]
    \centering
    \subfigure[Ground Truth]{
        \includegraphics[width=0.18\textwidth]{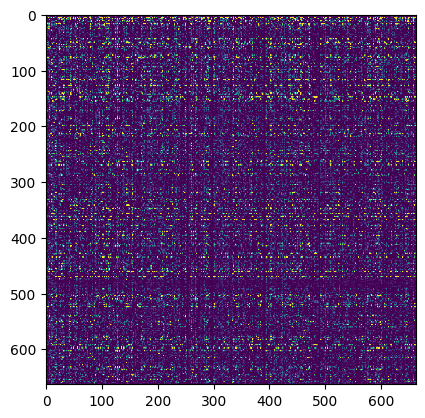}
    }
    \subfigure[Error = 0.1]{
        \includegraphics[width=0.18\textwidth]{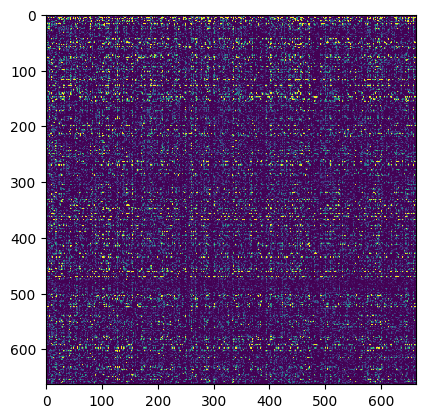}
    }
    \subfigure[Error = 0.2]{
        \includegraphics[width=0.18\textwidth]{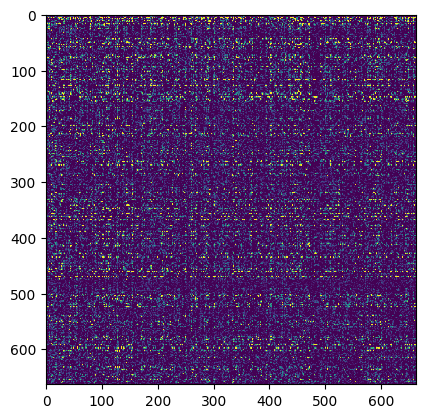}
    }
    \subfigure[Error = 0.3]{
        \includegraphics[width=0.18\textwidth]{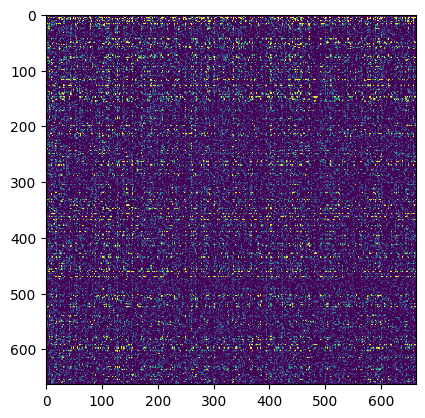}
    }
    \subfigure[Error = 0.4]{
        \includegraphics[width=0.18\textwidth]{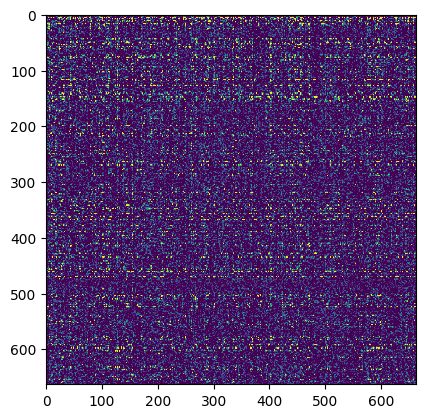}
    }
    \subfigure[Error = 0.5]{
        \includegraphics[width=0.18\textwidth]{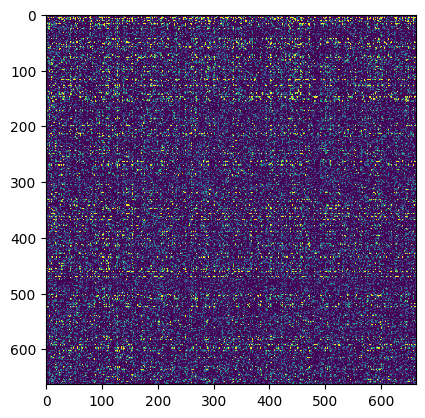}
    }
    \subfigure[Error = 0.6]{
        \includegraphics[width=0.18\textwidth]{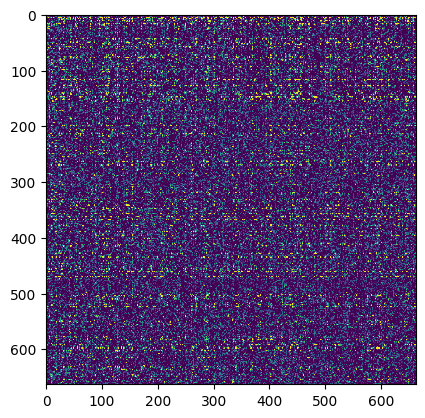}
    }
    \subfigure[Error = 0.7]{
        \includegraphics[width=0.18\textwidth]{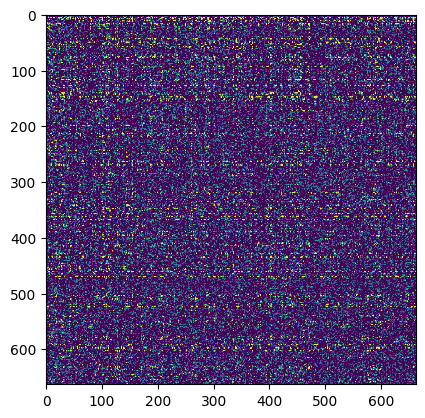}
    }
    \subfigure[Error = 0.8]{
        \includegraphics[width=0.18\textwidth]{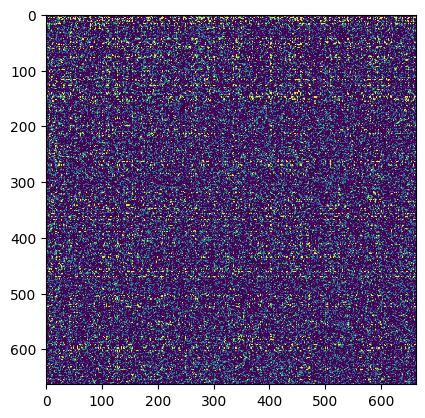}
    }
    \subfigure[Error = 0.9]{
        \includegraphics[width=0.18\textwidth]{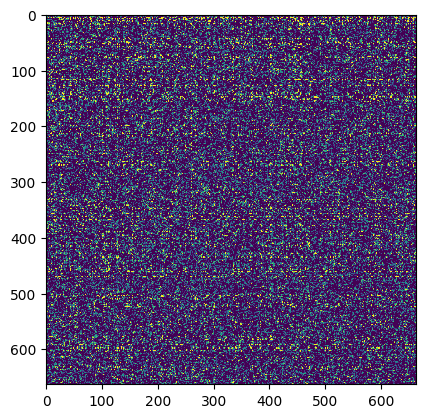}
    }
    \caption{Causal connectivity matrix for Mouse 3 FoV B with different levels of estimation error (corresponding to \Cref{fig:learned_sim_r35}).}
    \label{fig:noisy_connectivity_ex}
\end{figure}

\begin{figure}[t]
    \centering
    \subfigure[Neuron 0, MSE = 0.175]{
        \includegraphics[width=0.23\textwidth]{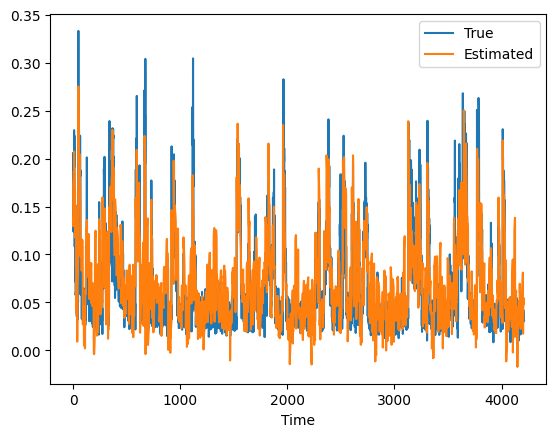}
    }
    \subfigure[Neuron 0, MSE = 0.150]{
        \includegraphics[width=0.23\textwidth]{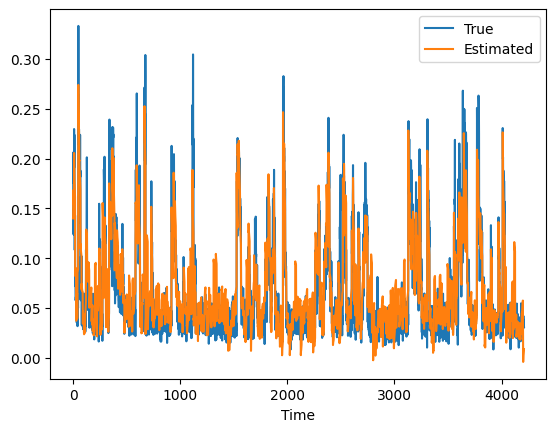}
    }
    \subfigure[Neuron 0, MSE = 0.140]{
        \includegraphics[width=0.23\textwidth]{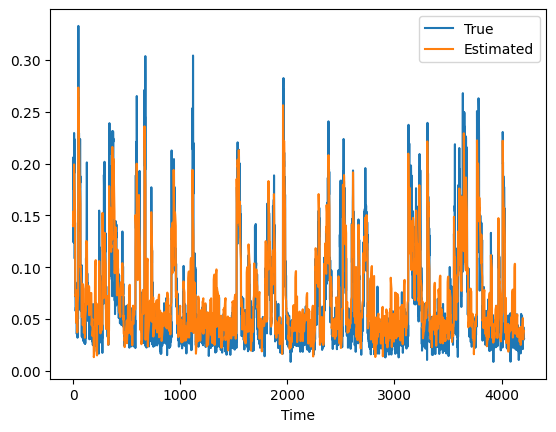}
    }
    \subfigure[Neuron 0, MSE = 0.138]{
        \includegraphics[width=0.23\textwidth]{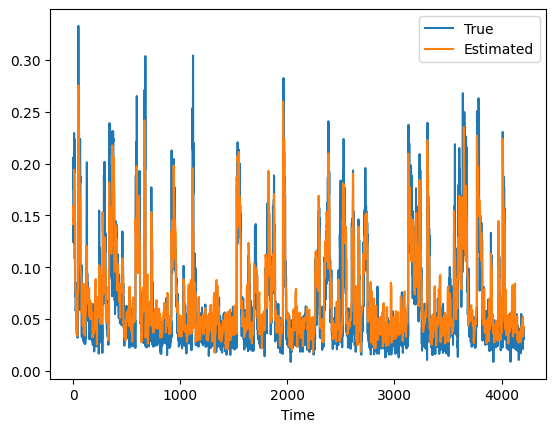}
    }


    \subfigure[Neuron 3, MSE = 0.175]{
        \includegraphics[width=0.23\textwidth]{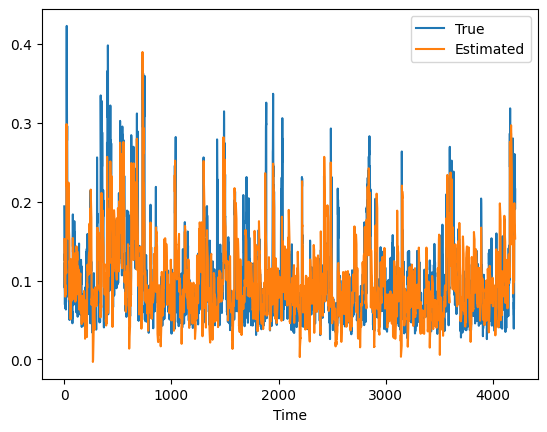}
    }
    \subfigure[Neuron 3, MSE = 0.150]{
        \includegraphics[width=0.23\textwidth]{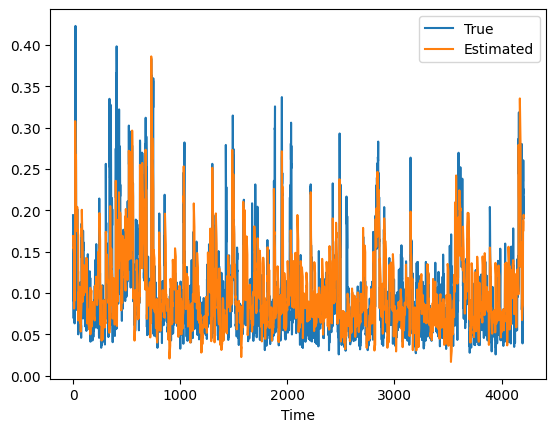}
    }
    \subfigure[Neuron 3, MSE = 0.140]{
        \includegraphics[width=0.23\textwidth]{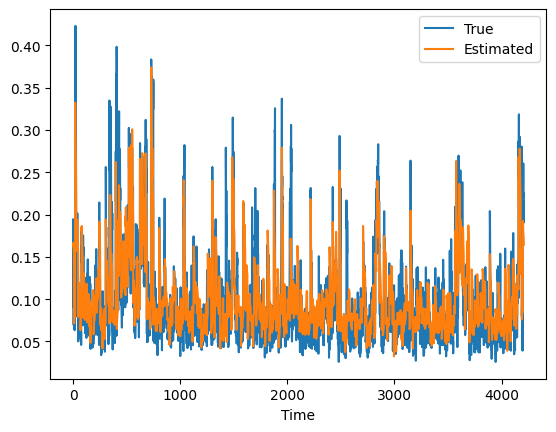}
    }
    \subfigure[Neuron 3, MSE = 0.138]{
        \includegraphics[width=0.23\textwidth]{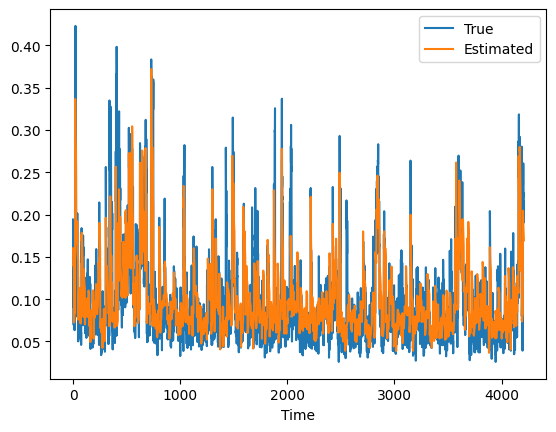}
    }

    \subfigure[Neuron 95, MSE = 0.175]{
        \includegraphics[width=0.23\textwidth]{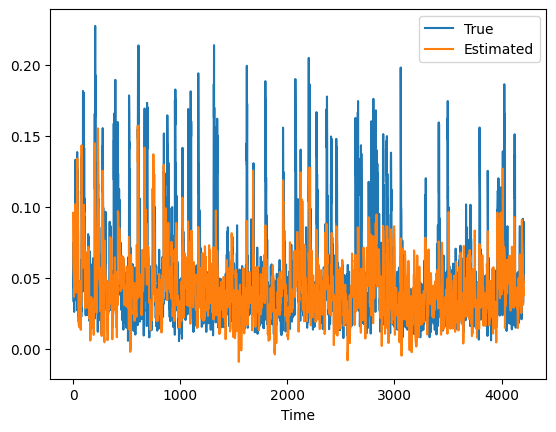}
    }
    \subfigure[Neuron 95, MSE = 0.150]{
        \includegraphics[width=0.23\textwidth]{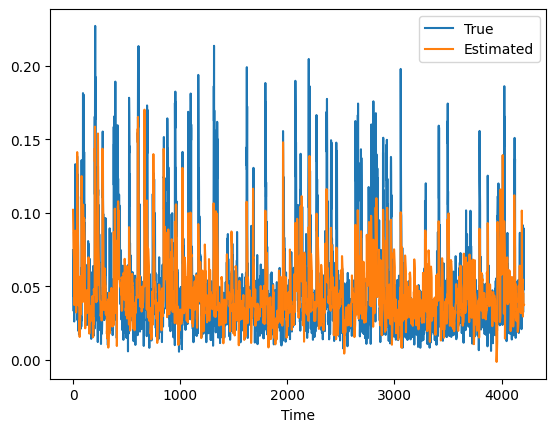}
    }
    \subfigure[Neuron 95, MSE = 0.140]{
        \includegraphics[width=0.23\textwidth]{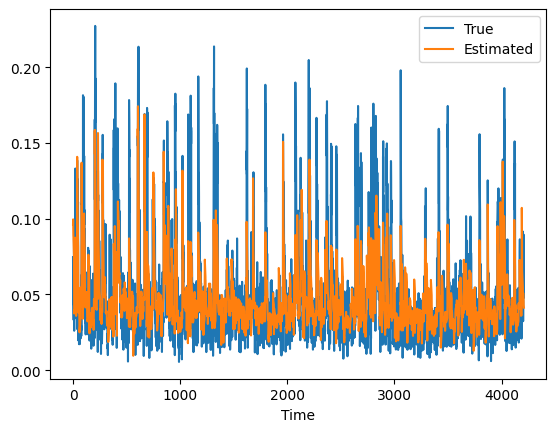}
    }
    \subfigure[Neuron 95, MSE = 0.138]{
        \includegraphics[width=0.23\textwidth]{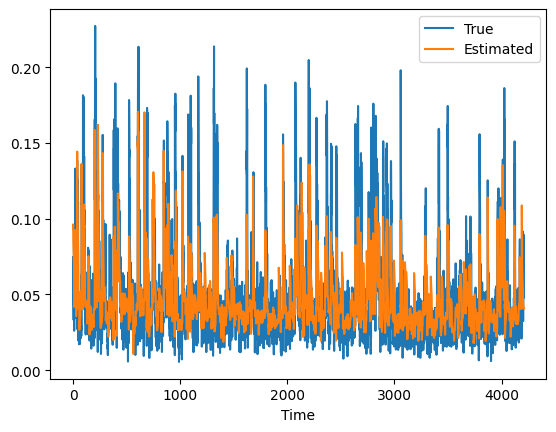}
    }
    \caption{Estimated neural activity vs true neural activity on heldout trials for Mouse 2, Neurons 0, 3, and 95, at different levels of overall MSE on heldout trials  (corresponding to \Cref{fig:real_ranking_worst}).}
    \label{fig:pred_mse_illustration}
\end{figure}

\newcommand{\xhat}{\widehat{x}}
\newcommand{\Ahat}{\widehat{A}}
\newcommand{\Bhat}{\widehat{B}}
\subsection{Further Details on Experiment of \Cref{sec:exp_real}}\label{sec:detail_exp2}
For this experiment, on the data $\frakD$ we observed thus far, we fit the AR-$k$ model described in \Cref{sec:model_fits} with $k = 1$. We found that for this experiment, simply using the least squares estimator with no low-rank penalty produced the best results. We use the same estimation method for both our method and the baseline method.

Given a input response trajectory in the test set, $(x_1,\ldots,x_{15})$, with input $u$, to compute the test MSE, we provide our learned dynamics model with the initial state $x_1$ and input $u$, and then roll this out for 15 timesteps to generate predictions $\xhat_2,\ldots,\xhat_{15}$. Precisely, if $\Ahat$ and $\Bhat$ are our estimated parameters, we let
\begin{align*}
\xhat_2 & = \Ahat x_1 + \Bhat u, \\
\xhat_{t+1} & = \Ahat \xhat_t, \quad t \ge 1.
\end{align*}
We the compute the MSE on this segment as:
\begin{align*}
    \frac{1}{14} \sum_{t=2}^{15} \| \xhat_t - x_t \|_2^2
\end{align*}

It is not immediately obvious how to apply \Cref{alg:main_alg} to this setting, since we must choose each trajectory sequentially, and once we have observed a trajectory it can no longer be chosen again. Rather than solving the optimization of \Cref{alg:main_alg} to find the best inputs,
we instead seek to iteratively choose the next input that would maximize ``information gain'' in some sense.
In particular, note that applying the Frank-Wolfe algorithm \citep{frank1956algorithm} to the objective, if we have inputs $\cU$ available:
\begin{align*}
    \min_{\lambda \in \simplex_{\cU}} \tr((V^\top \bLambda(\lambda) V)^{-1}), 
\end{align*}
the update is given by:
\begin{align*}
& u_{i+1} = \min_{u \in \cU} u^\top V (V^\top \bLambda(\lambda_i) V)^{-2} V^\top u \\
& \lambda_{i+1} \leftarrow (1-\gamma_i) \lambda_i + \gamma_i \bbI \{ u = u_{i+1} \}
\end{align*}
for learning rate $\gamma_i$.

In this experiment, we simply choose $u_n$ as above, with $\cU$ the set of remaining active inputs in $\frakDtrain$, and $\bLambda(\lambda_n)$ replaced with $\sum_{s=1}^{n-1} u_s u_s^\top$. This therefore approximates the solution to the experiment design of \Cref{alg:main_alg}, and has the advantage of being very computationally efficient. Furthermore, we set $V$ to be the right singular vectors of $\Bhat$. We believe this is reasonable in dynamical system settings with fast decay.

The primary hyperparameter for this experiment is the choice of $r$, the rank of $V$. As in the previous section, we did not find the results particularly sensitive to setting of $r$. For each dataset, we ran with $r \in [25, 50, 75, 100, 125, 150]$, and include results for the best-performing setting.

For both sets of experiments in \Cref{sec:experiments}, we ran on 56 Intel(R) Xeon(R) CPU E5-2690 v4 @ 2.60GHz CPUs. 

To ground the MSE values shown in \Cref{fig:real_ranking_worst}, in \Cref{fig:pred_mse_illustration} we plot the predictions from the estimated model at different MSE values on heldout trials for Mouse 2.


\end{document}